\keywords{vector addition systems, language inclusion, language equivalence, determinism, unambiguity, bounded ambiguity, Petri nets,
well-structured transition systems}
\theoremstyle{plain} 
\begin{document}

\title[Inclusion for Boundedly-Ambiguous VAS is Decidable]{Language Inclusion for Boundedly-Ambiguous\texorpdfstring{\\}{} Vector Addition Systems is Decidable}

\thanks{This paper is a full version of~\cite{DBLP:conf/concur/CzerwinskiH22}
which appeared at the CONCUR 2022 conference.
Both authors are supported by the INFSYS ERC grant, agreement no. 950398.}

\author[W. Czerwi\'nski]{Wojciech Czerwi\'nski\lmcsorcid{0000-0002-6169-868X}}[]
\author[P. Hofman]{Piotr Hofman\lmcsorcid{0000-0001-9866-3723}}[]

\address{University of Warsaw, Poland}

\email{wczerwin@mimuw.edu.pl, piotr.hofman@uw.edu.pl}

\begin{abstract}
We consider the problems of language inclusion and language equivalence for Vector Addition Systems with States (VASS)
with the acceptance condition defined by the set of accepting states (and more generally by some upward-closed conditions).
In general, the problem of language equivalence is undecidable even for one-dimensional VASS, 
thus to get decidability we investigate restricted subclasses. 
On the one hand, we show that the problem of language inclusion of a VASS in $k$-ambiguous VASS (for any natural $k$)
is decidable and even in Ackermann. On the other hand, we prove that the language equivalence problem
is already Ackermann-hard for deterministic VASS.
These two results imply Ackermann-completeness for language inclusion and equivalence in several possible restrictions.
Some of our techniques can be also applied in much broader generality in infinite-state systems, namely for some subclass of
well-structured transition systems.
\end{abstract}

\maketitle

\section{Introduction}
Vector Addition Systems (VAS) together with almost equivalent Petri Nets and Vector Addition Systems with States (VASS)
are one of the most fundamental computational models with many applications in practice for modelling concurrent behaviour.
There is also an active field of theoretical research on VAS, with a prominent example
being the reachability problem whose complexity was recently established to be \ackermann-complete~\cite{DBLP:conf/focs/Leroux21,DBLP:conf/focs/CzerwinskiO21}
and \cite{DBLP:conf/lics/LerouxS19}.
An important type of questions that can be asked for any pair of systems is whether they are equivalent
in a certain sense. The problem of language equivalence (acceptance by configuration) was already proven
to be undecidable in 1975 by Araki and Kasami~\cite{DBLP:journals/tcs/ArakiK76} (Theorem 3).
They also have shown that the language equivalence (acceptance by configuration) for deterministic VAS
is reducible to the reachability problem, thus decidable, as the reachability problem was shown to be decidable by Mayr
a few years later in 1981~\cite{DBLP:conf/stoc/Mayr81}.
The equality of the reachability sets of two given VAS was also shown undecidable in the 70-ties
by Hack~\cite{DBLP:journals/tcs/Hack76}.
Jan\v{c}ar has proven in 1995 that the most natural behavioural equivalence, namely the bisimilarity equivalence
is undecidable for VASS~\cite{DBLP:journals/tcs/Jancar95}. His proof works for only two dimensions (improving the previous
results~\cite{DBLP:journals/tcs/ArakiK76}) and is applicable also to language equivalence (this time as well for acceptance
by states). A few years later in 2001 Jan\v{c}ar has shown in~\cite{DBLP:journals/tcs/Jancar01}
that any reasonable equivalence in-between language equivalence (with acceptance by states)
and bisimilarity is undecidable (Theorem 3) and \ackermann-hard even for systems with finite reachability set (Theorem 4).
For the language equivalence problem, the state-of-the-art was improved a few years ago.
In~\cite{DBLP:conf/lics/HofmanMT13} (Theorem 20) it was shown that already for one-dimensional VASS
the language equivalence (and even the trace equivalence, namely language equivalence with all the states accepting) is undecidable. 

As the problem of language equivalence (and similar ones) is undecidable for general VASS (even in very small dimensions)
it is natural to search for subclasses in which the problem is decidable. The decidability of the problem
for deterministic VASS~\cite{DBLP:journals/tcs/ArakiK76,DBLP:conf/stoc/Mayr81} suggests that restricting
nondeterminism might be a good idea. Recently a lot of attention was drawn
to unambiguous systems~\cite{DBLP:conf/dcfs/Colcombet15},
namely systems in which each word is accepted by at most one accepting run, but can potentially have many non-accepting runs.
Such systems are often more expressive than the deterministic ones however they share some of their good properties, for example 
\cite{DBLP:journals/ijfcs/CadilhacFM13}.
In particular many problems are more tractable in the unambiguous case than in the general nondeterministic case.
This difference is already visible for finite automata. The language universality and the language equivalence problems
for unambiguous finite automata are in \nctwo~\cite{DBLP:journals/ipl/Tzeng96} (so also in \ptime) while
they are in general \pspace-complete for nondeterministic finite automata.
Recently it was shown that for some infinite-state systems the language universality, equivalence and inclusion problems
are much more tractable in the unambiguous case than in the general one. 
There was a line of research
investigating the problem for register automata~\cite{DBLP:conf/stacs/MottetQ19,DBLP:conf/stacs/BarloyC21,
DBLP:conf/icalp/CzerwinskiMQ21} culminating in the work of Boja\'nczyk, Klin and Moerman~\cite{DBLP:conf/lics/BojanczykKM21}.
They have shown that for unambiguous register automata with guessing the language equivalence problem is in \exptime
(and in \ptime for a fixed number of registers).
This result is in a sheer contrast with the undecidability of the problem in the general case even for two register automata
without guessing~\cite{DBLP:journals/tocl/NevenSV04} or one register automata with guessing (the proof can be obtained
following the lines of~\cite{DBLP:journals/tocl/DemriL09} as explained in~\cite{DBLP:conf/icalp/CzerwinskiMQ21}).
Recently, it was also shown in~\cite{DBLP:conf/concur/CzerwinskiFH20} that the language universality problem
for VASS accepting with states is \expspace-complete in the unambiguous case in contrast to \ackermann-hardness
in the nondeterministic case (even for one-dimensional VASS)~\cite{DBLP:conf/rp/HofmanT14}.

\myparagraph{Our contribution}
In this article we follow the line of~\cite{DBLP:conf/concur/CzerwinskiFH20} and consider problems of language equivalence
and inclusion for unambiguous VASS and also for their generalisations $k$-ambiguous VASS (for $k \in \N$) in which each word
can have at most $k$ accepting runs.
The acceptance condition is defined by some upward-closed set of configurations which generalises a bit the acceptance
by states considered in~\cite{DBLP:conf/concur/CzerwinskiFH20}.
Notice that the equivalence problem can be easily reduced to the inclusion problem,
so we prove lower complexity bounds for the equivalence problem and upper complexity bounds for the inclusion problem.

Our only lower bound result is the following one; it is proven by a rather straightforward reduction from the reachability problem for VASS.

\begin{thm}\label{thm:ackermann-hardness}
The language equivalence problem for deterministic VASS is \ackermann-hard.
\end{thm}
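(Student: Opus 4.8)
The plan is to reduce the reachability problem for VASSes --- \ackermann-hard by~\cite{DBLP:conf/focs/CzerwinskiO21,DBLP:conf/focs/Leroux21} --- to the complement of language equivalence for deterministic VASSes; since the class \ackermann is closed under complement, this yields \ackermann-hardness of equivalence itself. First I would normalise the reachability instance: given a VASS with configurations $s$ and $t$, the standard trick of prepending a fresh state $q_0$ whose only transition (to the old source state) has effect $s$, and appending a fresh state $q_f$ with a single incoming transition (from the old target state) of effect $-t$, produces in polynomial time a VASS $V$ with distinguished states $q_0, q_f$ such that $t$ is reachable from $s$ in the original system if and only if $(q_f, \mathbf 0)$ is reachable from $(q_0, \mathbf 0)$ in $V$. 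Write $T$ for the set of transitions of $V$ and $d$ for its dimension.

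Next I would build two deterministic VASSes $A$ and $B$ over the alphabet $\Sigma = T \uplus \{\#\}$. Both treat a word $w = t_1 \cdots t_m \in T^*$ as an explicit description of a run of $V$ and simulate it: in a state $q$ (a state of $V$) a letter $t \in T$ enables a transition precisely when $\source{t} = q$, and that transition applies the effect of $t$ to the counters. Hence the simulation gets stuck exactly when $w$ does not describe a legal run of $V$ from $(q_0, \mathbf 0)$; otherwise it ends in the state and counter vector reached by that run. Turning the nondeterminism of $V$ into the choice of the next input letter is what makes $A$ and $B$ deterministic. In $B$ I would keep one additional counter $S$ equal to the sum of all $d$ counters of $V$ (each transition changes $S$ by the sum of the components of its effect; the extra nonnegativity constraint this imposes is implied by the others, so the runs of $B$ still mirror those of $V$ faithfully). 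Finally, on reading $\#$: $A$ moves from $q_f$ to an accepting state $q_{acc}$ (and is stuck from every other state), whereas $B$ moves from $q_f$ to $q_{acc}$ by a transition of effect $-1$ on $S$, which is possible only when $S \geq 1$, i.e.\ only when the just-simulated run ended with a nonzero counter vector.

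With this setup, $L(A)$ is the set of words $w\#$ with $w \in T^*$ such that $w$ describes a legal run of $V$ from $(q_0, \mathbf 0)$ ending in state $q_f$, and $L(B)$ is the subset of those for which that run moreover ends with a counter vector different from $\mathbf 0$. Consequently $L(B) \subseteq L(A)$ always, and $L(A) = L(B)$ if and only if no legal run of $V$ from $(q_0, \mathbf 0)$ reaches $q_f$ with the zero counter vector --- that is, if and only if $(q_f, \mathbf 0)$ is \emph{not} reachable from $(q_0, \mathbf 0)$. As the reduction is polynomial and acceptance by a single state is a special case of acceptance by an upward-closed set of configurations, this proves that language equivalence for deterministic VASSes is \ackermann-hard (and, since $L(B) \subseteq L(A)$ holds automatically, so is language inclusion).

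The one place where an idea is needed rather than bookkeeping is the equality constraint ``the run ends in $(q_f, \mathbf 0)$'': a single VASS cannot test it, since it cannot zero-test its counters. The way round it is to not test it inside either machine but to let the comparison of the two languages do the work --- $A$ accepts every run-word that merely reaches $q_f$, $B$ accepts those that reach $q_f$ strictly above $\mathbf 0$, and the languages coincide precisely when the exact target $(q_f, \mathbf 0)$ is missed by every run. The remaining ingredients (normalising the reachability instance, encoding the transitions of $V$ as letters to obtain determinism, and closure of \ackermann under complement) are routine.
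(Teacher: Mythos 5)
Your proposal is correct and follows essentially the same route as the paper: encode transitions as input letters to force determinism, track the sum of all counters in an extra coordinate, and make the two machines differ only in whether the final letter needs to consume one unit of that sum, so that the languages differ exactly when the zero target configuration is reachable. The only cosmetic differences are that the paper puts the sum coordinate in both VASSes and phrases the reduction from emptiness of a singleton-VASS rather than from a normalised reachability instance.
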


Our main technical result concerns VASS with bounded ambiguity.
\begin{thm}\label{thm:complement-bavass}
There is an algorithm that for each $k \in \N$ and a $k$-ambiguous VASS $V$ constructs, in elementary time,
a VASS with a downward-closed set of accepting configurations which recognises the complement of the language of $V$.
\end{thm}

Its rather simple consequence is the following result.
\begin{thm}\label{thm:k-ambiguous}
For each $k \in \N$, the language inclusion problem of a VASS in a $k$-ambiguous VASS is in \ackermann.
\end{thm}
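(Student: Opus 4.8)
The plan is to reduce the inclusion $L(A)\subseteq L(B)$, where $A$ is the nondeterministic \vass and $B$ the $k$-ambiguous one, to a single reachability query on an elementarily larger \vass; since the reachability problem is in \ackermann~\cite{DBLP:conf/lics/LerouxS19}, this suffices. Following the idea behind Theorem~\ref{thm:unambiguous}, I would aim to turn the instance into one in which the $B$-side is driven deterministically, so that inclusion reduces to reachability along the lines of Araki and Kasami~\cite{DBLP:journals/tcs/ArakiK76}. The key auxiliary object is the product of $A$ with a number of synchronised copies of $B$ bounded by a function of $k$ alone, whose intended invariant is that, after reading a prefix $u$ of the input $w$, the active copies carry precisely the length-$|u|$ truncations of the accepting runs of $B$ on $w$. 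Because distinct partial runs on $u$ that are both completable on the common suffix inject into the set of accepting runs of $B$ on $w$, $k$-ambiguity forces that set to have size at most $k$; hence at most $k$ copies are ever needed, the product accepts $w$ iff $A$ accepts $w$ while every active copy ends outside the accepting up-set, and $L(A)\subseteq L(B)$ holds iff the product has empty language.

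Maintaining the invariant means discarding, at each step, every copy that has become \emph{doomed}, i.e.\ sits in a configuration from which the remaining suffix of $w$ is not accepted; but ``the suffix is not accepted from here'' is a co-\vass property, which no \vass can test directly. As in the proof of Theorem~\ref{thm:unambiguous}, I would equip $A$ with a regular lookahead: at each prefix it exposes the type of the remaining suffix with respect to a finite automaton obtained from the regular-separability results for \vasses of~\cite{DBLP:conf/concur/CzerwinskiLMMKS18}. Separability applies because, whenever two copies sit at distinct states/configurations after divergent partial runs on the same prefix and both could still be completed on the common suffix, $B$ would have two accepting runs on $w$; so the ``accepted-from-here'' languages attached to such pairs are disjoint \vass coverability languages, and a regular separator exists. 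The lookahead then lets the product prune every copy whose residual suffix falls outside the separator --- such a copy is provably doomed, so soundness is preserved --- and, with the separator chosen appropriately, enough copies are pruned that the number of survivors stays bounded; the copies of $B$ are then driven deterministically (as a function of input and lookahead), so emptiness of the product's language is a reachability question and we land in \ackermann. An alternative, possibly cleaner, organisation is an induction on $k$ that peels off one accepting run at a time, with $k=1$ being Theorem~\ref{thm:unambiguous}, again powered by the same separability ingredient.

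The main obstacle, and the bulk of the technical work, is to make the separator-based pruning simultaneously \emph{sound} --- a copy lying on an accepting run of $B$ is never discarded --- and \emph{strong enough} --- after pruning, at most a bounded-in-$k$ number of copies survive, so the control of the product really is finite. This forces one to work with the right family of disjoint \vass coverability languages and, crucially, to argue that a \emph{single} regular lookahead serves uniformly for the unboundedly many configurations that can arise along divergent partial runs; this is where the well-structuredness of \vasses and the specific shape of the separators of~\cite{DBLP:conf/concur/CzerwinskiLMMKS18} are used. A further, routine, check is that the product, the lookahead automaton, and the resulting reachability instance are all of elementary size, so that the single call to the reachability procedure keeps the overall algorithm within \ackermann.
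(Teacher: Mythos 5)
There is a genuine gap at the heart of your pruning mechanism. Your soundness argument claims that if two divergent partial runs of $B$ on a common prefix are both completable on the common suffix then ``$B$ would have two accepting runs on $w$'', and hence their residual languages are disjoint and regular-separable. For $k \geq 2$ this is exactly what $k$-ambiguity permits: up to $k$ divergent partial runs may all be completed by the very same suffix, so the ``accepted-from-here'' languages of configurations reached on a common prefix need \emph{not} be pairwise disjoint, and the separability theorem of~\cite{DBLP:conf/concur/CzerwinskiLMMKS18} (which requires disjointness) gives you nothing to prune with. That disjointness is precisely the feature special to the unambiguous case — it is what makes the decorated HVASS deterministic in Theorem~\ref{thm:uvass-to-hvass} — and it does not survive to $k \geq 2$. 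Consequently the lookahead cannot tell which of the completable runs to keep, and nothing in your sketch bounds the number of copies the product must spawn: at the level of the control automaton a $k$-ambiguous VASS may have unbounded ambiguity, because runs that are distinct in the control may only be killed later by counters going negative, and whether they are killed is not a regular property of the suffix.

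The paper resolves these two problems in a different order and with different tools. First (Lemma~\ref{lem:ba-control}) it converts the $k$-ambiguous VASS into a language-equivalent one whose \emph{control automaton} is itself $k$-ambiguous, by storing counter values up to a Rackoff-style doubly-exponential threshold in the state (Lemma~\ref{lem:threshold}); this is the step your proposal is missing, and it is what bounds once and for all how many partial runs ever need to be tracked. Second, the decoration is then taken with respect to the \emph{state languages of the finite control automaton} — these are regular, so no VASS-separability result is needed in the $k$-ambiguous case — and Lemma~\ref{lem:kafa-to-kdfa} shows the decorated control is $k$-deterministic, i.e.\ has at most $k$ maximal runs per word. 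Third, a $k$-deterministic VASS is complemented (Theorem~\ref{thm:complement-bdvass}) by running $k$ copies that cover all maximal runs and verifying only at the end, with a \emph{downward-closed} acceptance condition, that every copy fails for one of the three standard reasons; no online pruning of ``doomed'' copies is ever attempted, which is how one avoids testing a co-VASS property along the run. Finally, inclusion is decided by intersecting $L(V_1)$ with this complement and testing emptiness of an updown-VASS (Lemma~\ref{lem:language-intersection}, Corollary~\ref{cor:updown}), which reduces to reachability and hence lies in \ackermann. Your instinct that the lookahead of the unambiguous case plus a $k$-fold product are the right ingredients is sound, but without the control-ambiguity reduction, and with the disjointness claim being false for $k \geq 2$, the construction you describe does not go through.
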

Briefly speaking to prove Theorem~\ref{thm:k-ambiguous} we use Theorem~\ref{thm:complement-bavass}
and check non-emptiness of the intersection of the language of the first VASS with the complement of the language of the second VASS. The problem reduces to the reachability problem in VASS,
which is in \ackermann~\cite{DBLP:conf/lics/LerouxS19}.
In Section~\ref{sec:bavass} we present the details of the proof. The special case of Theorem~\ref{thm:k-ambiguous} is the following statement.

\begin{thm}\label{thm:unambiguous}
The inclusion problem of a nondeterministic VASS language in an unambiguous VASS language is in \ackermann.
\end{thm}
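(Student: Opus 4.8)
The plan is to reduce non-inclusion to a coverability-type question in a derived well-structured transition system obtained by a ``determinisation with pruning'' of the unambiguous \vass, and then to read off the \ackermann\ bound from the length-function theorems available for such systems.

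Write $\A$ for the nondeterministic \vass\ and $\B$ for the unambiguous one, with their upward-closed acceptance conditions. Then $L(\A)\not\subseteq L(\B)$ holds iff there is a word $w$ carrying an accepting run of $\A$ while \emph{no} run of $\B$ on $w$ is accepting. First I would build a single system that, reading $w$ letter by letter, nondeterministically follows a run of $\A$ and simultaneously maintains the set $S$ of configurations of $\B$ reachable on the prefix read so far, pruned to its finite antichain $\mathrm{Max}(S)$ of $\le$-maximal configurations. Since acceptance is upward closed and \vasses\ are monotone, the downward closure of $\mathrm{Max}(S)$ coincides with that of the true reachable set, so $w\in L(\B)$ iff $\mathrm{Max}(S)$ meets the acceptance set of $\B$; hence the ``bad'' configurations of this product --- state of $\A$ accepting and antichain of $\B$ disjoint from $\B$'s accepting set --- are precisely the certificates of non-inclusion, and $L(\A)\not\subseteq L(\B)$ iff a bad configuration is reachable.

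The product is a well-structured transition system for the order $(c,A)\preceq(c',A')$ iff $c\le c'$ in $\A$ and every configuration in $A'$ lies below some configuration in $A$: this is a well-quasi-order (Dickson's lemma on the $\A$-component; the Hoare/domination ordering on finite subsets of $\N^{\dim\B}$, itself a w.q.o.\ via Higman's lemma, on the $\B$-component), the step relation is monotone, and the bad set is upward closed, so that reaching it is a coverability instance solvable by the backward algorithm. This is where unambiguity of $\B$ is essential: for an arbitrary second \vass\ the antichains in this construction may blow up so badly that the induced bad sequences are no longer primitive recursive (indeed inclusion would then be too powerful a problem). Unambiguity tames them: along any prefix of a word of $L(\B)$ at most one reachable configuration of $\B$ can be co-reachable to $\B$'s accepting set --- two of them would give two distinct accepting runs on a common extension --- so configurations that are not ``still potentially accepting'' may be discarded, leaving a uniformly controlled structure. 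I expect the main obstacle to be making this pruning sound \emph{without knowledge of the future suffix}: this should require a careful analysis of the \vass\ dynamics (the ingredient that, as the introduction indicates, generalises to a subclass of well-structured systems), combined with the monotonicity argument identifying which configurations can ever matter for reaching an upward-closed target.

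Once the search is organised as a well-structured transition system of this controlled kind, the length of any strictly decreasing bad sequence in the underlying w.q.o.\ is bounded in terms of the norms and dimensions of $\A$ and $\B$, so the length-function theorems for well-structured systems over $\N^d$ and their antichain refinements bound the number of steps of the decision procedure by an Ackermann function of the input size. This places the inclusion problem in \ackermann\ and completes the argument.
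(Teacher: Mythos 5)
There is a genuine gap, and it sits exactly where you suspected. First, the order you place on the product is not a well-quasi-order: writing $(c,A)\preceq(c',A')$ when every configuration in $A'$ lies below some configuration in $A$ is the \emph{reverse} of the Hoare/domination ordering, and Higman's lemma does not apply to it --- the singletons $A_n=\{v_n\}$ with $v_1\prec v_2\prec\cdots$ already form an infinite bad sequence (you never get $i<j$ with every element of $A_j$ below an element of $A_i$). If you flip to the genuine Hoare ordering, which is a wqo, then your ``bad'' set (antichain of $\B$ disjoint from the upward-closed accepting set) is no longer upward-closed, so the backward coverability algorithm for well-structured systems is not applicable. This failure is not an accident of bookkeeping: the condition ``no run of $\B$ on $w$ accepts'' is intrinsically anti-monotone in the $\B$-component, and your construction so far never uses unambiguity, so if it worked it would decide inclusion into an arbitrary nondeterministic VASS --- which is undecidable already in dimension one. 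Second, the one place where you do invoke unambiguity is incorrect as stated: two distinct configurations of $\B$ reached on the same prefix can both have nonempty languages; what unambiguity gives is only that their languages are pairwise \emph{disjoint}. Consequently the pruning to ``still potentially accepting'' configurations is not determined by the prefix alone, and you explicitly leave open how to make it sound without seeing the suffix --- but that is the entire content of the theorem, not a technical obstacle to be smoothed out later.

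For comparison, the paper closes exactly this hole with a regular-lookahead argument rather than a WSTS/coverability argument: disjoint VASS coverability languages are regular-separable, and one can compute a finite family of regular separators for the configuration languages of $\B$; decorating every letter of the input with the image of the remaining suffix under a monoid homomorphism recognising all these separators turns $\B$ into a \emph{deterministic} VASS with holes (the holes being the downward-closed set of configurations with empty language), while inclusion is preserved under decoration. The problem then reduces to inclusion of a VASS language in a deterministic HVASS language, which is handled by explicitly complementing the deterministic system into a downward-VASS, intersecting, and deciding emptiness via VASS reachability --- this reduction, not a length-function analysis of an antichain system, is what yields the \ackermann\ bound. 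If you want to salvage your plan, the lookahead supplied by the regular separators is precisely the missing ingredient that makes the pruning sound.
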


Despite the fact that Theorem~\ref{thm:unambiguous} is a special case of Theorem~\ref{thm:k-ambiguous}
we decided to present separately the proof of Theorem~\ref{thm:unambiguous} because it introduces a different technique of independent interest.
Additionally, it is a good introduction to a more technically challenging proof of Theorem~\ref{thm:k-ambiguous}.

The proof of Theorem~\ref{thm:unambiguous} uses a novel technique but is quite simple.
We add a regular lookahead to a VASS and use results on regular separability of VASS
from~\cite{DBLP:conf/concur/CzerwinskiLMMKS18} to reduce the problem, roughly speaking, to the deterministic case. This technique can be applied to more general systems namely well-structured transition-systems~\cite{DBLP:journals/tcs/FinkelS01}.
We believe that it might be interesting on its own and reveal some
connection between separability problems and the notion of unambiguity.

The proof of Theorem~\ref{thm:complement-bavass} proceeds in three steps. First, we show that the problem for
$k$-ambiguous VASS can be reduced to the case when the control automaton of the VASS is $k$-ambiguous.
Next, we show that the control automaton can even be made $k$-deterministic
(roughly speaking, for each word there are at most $k$ runs).
Finally, we show that for a VASS with $k$-deterministic control automaton one can compute in elementary
time a VASS with a downward-closed set of accepting configurations recognising its complement.
We emphasise that to prove our main technical result, namely Theorem~\ref{thm:complement-bavass}, we
do not need most of our contribution from Section~\ref{sec:uvass}. In fact, the only used part of Section~\ref{sec:uvass} is the construction of the word and automata decoration.

On a way to show Theorem~\ref{thm:complement-bavass} we also prove several other lemmas and theorems,
which we believe may be interesting on their own.
Theorems~\ref{thm:ackermann-hardness}~and~\ref{thm:k-ambiguous} together easily imply the following corollary.

\begin{cor}\label{cor:equivalence}
The language equivalence problem is \ackermann-complete for:
\begin{itemize}
  \item deterministic VASS
  \item unambiguous VASS
  \item $k$-ambiguous VASS for any $k \in \N$
\end{itemize}
\end{cor}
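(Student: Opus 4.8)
\textbf{Proof proposal for Corollary~\ref{cor:equivalence}.}
The plan is to obtain the lower and upper bounds separately and then combine them, exploiting the inclusions between the three classes of VASSes. For the lower bound, Theorem~\ref{thm:ackermann-hardness} already gives \ackermann-hardness of the language equivalence problem for deterministic VASSes. Since every deterministic VASS is in particular unambiguous, and every unambiguous VASS is $k$-ambiguous for every $k \geq 1$, the very same family of hard instances produced in the proof of Theorem~\ref{thm:ackermann-hardness} witnesses \ackermann-hardness of language equivalence for unambiguous VASSes and for $k$-ambiguous VASSes as well. So all three items inherit the lower bound for free.

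For the upper bound, I would use the standard reduction of equivalence to two inclusions: given two VASSes $\A$ and $\B$ we have $L(\A) = L(\B)$ if and only if $L(\A) \subseteq L(\B)$ and $L(\B) \subseteq L(\A)$. Now suppose both $\A$ and $\B$ are $k$-ambiguous (the deterministic and unambiguous cases are the special case $k = 1$, since a deterministic or unambiguous VASS is $1$-ambiguous). Then each of the two inclusions $L(\A) \subseteq L(\B)$ and $L(\B) \subseteq L(\A)$ is an instance of the inclusion of a (nondeterministic) VASS language in a $k$-ambiguous VASS language, which is in \ackermann by Theorem~\ref{thm:k-ambiguous}. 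Running the two procedures one after the other and returning ``equivalent'' iff both answer ``included'' stays within \ackermann, as this complexity class is closed under such sequential composition of two calls. Hence language equivalence for $k$-ambiguous VASSes — and a fortiori for unambiguous and for deterministic VASSes — is in \ackermann.

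Combining the matching lower and upper bounds yields \ackermann-completeness in each of the three cases. There is no real technical obstacle here: the only points to verify are the class inclusions $\text{deterministic} \subseteq \text{unambiguous} \subseteq k\text{-ambiguous}$ (immediate from the definitions), and the fact that \ackermann\ is robust enough to absorb the constant-factor overhead of performing two inclusion checks — both of which are routine.
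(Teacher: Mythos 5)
Your proposal is correct and follows essentially the same route the paper intends: the lower bound comes from Theorem~\ref{thm:ackermann-hardness} together with the inclusions deterministic $\subseteq$ unambiguous $\subseteq$ $k$-ambiguous, and the upper bound comes from reducing equivalence to two inclusion checks handled by Theorem~\ref{thm:k-ambiguous}. No gaps.
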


\myparagraph{Organisation of the paper}
In Section~\ref{sec:prelim} we introduce the necessary notions. Then in Section~\ref{sec:detvass}
we present results concerning deterministic VASS.
First, we show Theorem~\ref{thm:ackermann-hardness}.
Next, we prove that the inclusion problem of a VASS language in a language of 
a deterministic VASS, a \kcdeterministic\ VASS or a VASS with holes (to be defined) is in \ackermann.
This is achieved by reducing to the VASS reachability problem.
In Section~\ref{sec:uvass} we define adding a regular lookahead to VASS.
Then we show that with a carefully chosen lookahead we can reduce the inclusion problem of a VASS language
in an unambiguous VASS language into the inclusion problem of a VASS language in the language of deterministic VASS with holes.
The latter is in \ackermann due to Section~\ref{sec:detvass} so the former is also in \ackermann.
In Section~\ref{sec:bavass} we present the proof of Theorem~\ref{thm:complement-bavass} which is our most technically involved contribution.
We also use the idea of a regular lookahead and the result proved in Section~\ref{sec:detvass} about \kcdeterministic\ VASS.
Finally in Section~\ref{sec:future} we discuss the implications of our results and sketch possible future research directions.

\section{Preliminaries}\label{sec:prelim}

\myparagraph{Basic notions}
By $\N$ and $\Z$ we denote natural and integer numbers, respectively.
For $a, b \in \N$ we write $[a,b]$ to denote the set $\{a, a+1, \ldots, b-1, b\}$.
For a vector $v \in \N^d$ and $i \in [1,d]$ we write $v[i]$ to denote the $i$-th coordinate
of vector $v$. By $0^d$ we denote the vector $v \in \N^d$ with all the coordinates equal to zero.
For a word $w = a_1 \cdot \ldots \cdot a_n$ and $1 \leq i \leq j \leq n$ we write $w[i..j] = a_i \cdot \ldots \cdot a_j$
for the infix of $w$ starting at position $i$ and ending at position $j$. We also write $w[i] = a_i$.
For any $1 \leq i \leq d$ by $e_i \in \N^d$ we denote the vector with all coordinates equal to zero except the $i$-th coordinate, which is equal to one.
For a finite alphabet $\Sigma$ we denote $\Sigma_\eps = \Sigma \cup \set{\eps}$ the extension of $\Sigma$
by the empty word $\eps$.

\myparagraph{Upward and downward-closed sets}
For two vectors $u, v \in \N^d$ we say that $u \preceq v$ if for all $i \in [1,d]$ we have $u[i] \leq v[i]$.
A set $S \subseteq \N^d$ is \emph{upward-closed} if for each $u, v \in \N^d$ it holds that
$u \in S$ and $u \preceq v$ implies $v \in S$.
Similarly, a set $S \subseteq \N^d$ is \emph{downward-closed}
if for each $u, v \in \N^d$ it holds that $u \in S$ and $v \preceq u$ implies $v \in S$.
For $u \in \N^d$ we write $u \uar = \set{v \mid u \preceq v}$ for the set of all vectors bigger than $u$ w.r.t. $\preceq$
and $u \dar = \set{v \mid v \preceq u}$ for the set of all vectors smaller than $u$ w.r.t. $\preceq$.
If an upward-closed set is of the form $u \uar$, we call it an \emph{\upatom}.
Notice that if a one-dimensional set $S \subseteq \N$ is downward-closed then either $S = \N$ or $S = [0,n]$ for some $n \in \N$.
In the first case, we write $S = \omega \dar$ and in the second case $S = n \dar$.
If a downward-closed set $D \subseteq \N^d$ is of a form $D = D_1 \times \ldots \times D_d$,
where all $D_i$ for $i \in [1,d]$ are downward-closed one dimensional sets then we call $D$ a \emph{{\downatom}}.
In the literature, sometimes \emph{up-atoms} are called principal filters, and \emph{down-atoms} are called ideals.
If $D_i = (n_i) \dar$, then we also write $D = (n_1, n_2, \ldots, n_d) \dar$. In that sense, each down-atom
is of the form $u \dar$ for $u \in (\Nomega)^d$.
Notice that a {\downatom} does not have to be of a form $u \dar$ for $u \in \N^d$, for example $D = \N^d$ is not of this form,
but $D = (\omega, \ldots, \omega) \dar$.

The following two propositions will be helpful in our considerations.

\begin{prop}[\cite{DBLP:conf/concur/CzerwinskiLMMKS18} Lemma 17, \cite{ITA_1992__26_5_449_0}, \cite{Dickson}]
\label{prop:down-decomposition}
Each downward-closed set in $\N^d$ is a finite union of {\downatom}s.
Similarly, each upward-closed set in $\N^d$ is a finite union of {\upatom}s.
\end{prop}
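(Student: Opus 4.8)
The plan is to derive both statements from Dickson's Lemma, i.e.\ from the fact that $(\N^d, \preceq)$ is a well-quasi-order, so that every antichain in it is finite and $\preceq$ is well-founded on $\N^d$.

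First I would treat the upward-closed case. Given an upward-closed $S \subseteq \N^d$, let $M$ be its set of $\preceq$-minimal elements; $M$ is an antichain, hence finite by Dickson's Lemma, say $M = \set{u_1, \ldots, u_k}$. Then $S = \bigcup_{i=1}^k u_i\uar$: the inclusion $\supseteq$ holds since $S$ is upward-closed and each $u_i \in S$, while for $\subseteq$ every $v \in S$ dominates some element of $M$ (pick a $\preceq$-minimal element of the nonempty set $(v\dar) \cap S$ and check that it is in fact minimal in all of $S$, using well-foundedness). Thus $S$ is a finite union of {\upatom}s.

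For the downward-closed case I would pass to complements. If $D \subseteq \N^d$ is downward-closed, then $U := \N^d \setminus D$ is upward-closed, so by the previous step $U = \bigcup_{i=1}^k u_i\uar$ for finitely many $u_i \in \N^d$, hence $D = \bigcap_{i=1}^k (\N^d \setminus u_i\uar)$. The complement of a single {\upatom} is itself a finite union of {\downatom}s, namely $\N^d \setminus u\uar = \setof{v}{\exists j\ v[j] < u[j]} = \bigcup_{j\,:\,u[j] \geq 1} (\omega, \ldots, \omega,\, u[j]-1,\, \omega, \ldots, \omega)\dar$, where the displayed $u[j]-1$ sits in coordinate $j$. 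Substituting these decompositions into the intersection and distributing $\cap$ over $\cup$ presents $D$ as a finite union of finite intersections of {\downatom}s. Finally, a finite intersection of {\downatom}s is again a {\downatom}, since $(u\dar) \cap (v\dar) = w\dar$ with $w[i] = \min(u[i], v[i])$ for $u, v \in (\Nomega)^d$. This gives the desired decomposition of $D$ (the corner cases $D = \emptyset$, handled as the empty union, and $D = \N^d = (\omega, \ldots, \omega)\dar$ are included).

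I do not anticipate a real obstacle: the content is concentrated in Dickson's Lemma, which is cited and used here as a black box. The only points needing a little care are the well-foundedness argument in the upward-closed step and the bookkeeping of $\omega$-coordinates and empty-set corner cases when rewriting complements of {\upatom}s.
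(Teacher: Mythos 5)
Your proof is correct. The paper does not prove Proposition~\ref{prop:down-decomposition} itself but imports it from the cited works, and your argument is exactly the standard one behind those citations: Dickson's lemma yields the finite antichain of minimal elements for the upward-closed case, and the downward-closed case follows by complementation, the decomposition of $\N^d \setminus u\uar$ into down-atoms, distributivity, and closure of down-atoms under intersection (coordinatewise minimum over $(\Nomega)^d$). Your handling of the corner cases and of the $\omega$-coordinates is also consistent with how the paper manipulates these objects in the proof of Proposition~\ref{prop:blowup}, which implicitly relies on the same two facts about $\N^d\setminus u\uar$ and intersections of down-atoms.
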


We represent upward-closed sets as finite unions of up-atoms and downward-closed sets as finite unions of down-atoms, the
numbers are encoded in binary.
The size of the representation of an upward- or a downward-closed set $S$ is denoted $\norm{S}$.
The following proposition helps to control the blowup of the representations of upward- and downward-closed sets.

\begin{prop}\label{prop:blowup}
Let $U \subseteq \N^d$ be an upward-closed set and $D \subseteq \N^d$ be a downward-closed set.
Then the size of the representation of their complements
$\overline{U} = \N^d \setminus U$ and $\overline{D} = \N^d \setminus D$
is at most exponential w.r.t. the sizes $\norm{U}$ and $\norm{D}$, respectively
and can be computed in exponential time.
\end{prop}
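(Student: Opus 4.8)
The plan is to handle the two cases separately, since the upward-closed case reduces to the downward-closed one by a simple observation and the downward-closed case is where the real work lies. First I would observe that $\overline{U}$ is downward-closed and $\overline{D}$ is upward-closed, so computing the representation of $\overline{U}$ amounts to decomposing a downward-closed set (given implicitly as the complement of a union of up-atoms) into down-atoms, and symmetrically for $\overline{D}$. Thus it suffices to prove: (i) given an upward-closed $U = u_1\uar \cup \cdots \cup u_m\uar$, the downward-closed set $\overline{U}$ has a decomposition into down-atoms of size exponential in $\norm{U}$, computable in exponential time; and (ii) the dual statement for $\overline{D}$.

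For part (i), the key identity is $\overline{U} = \bigcap_{j=1}^m \overline{u_j\uar}$. For a single up-atom, $\overline{u\uar} = \setof{v \in \N^d}{v \not\succeq u} = \bigcup_{i=1}^d \setof{v}{v[i] < u[i]}$, and each set $\setof{v}{v[i] < u[i]}$ is the down-atom $(\omega,\ldots,\omega,u[i]-1,\omega,\ldots,\omega)\dar$ (with $u[i]-1$ in coordinate $i$, and nothing if $u[i]=0$). So $\overline{u_j\uar}$ is a union of at most $d$ down-atoms, and $\overline{U}$ is an intersection of $m$ such unions. Distributing the intersection over the unions yields a union of at most $d^m$ down-atoms, where each down-atom is the coordinatewise intersection of $m$ one-dimensional down-atoms per coordinate — and a coordinatewise intersection of down-atoms is again a down-atom (take the minimum in each coordinate, with $\omega$ absorbing). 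Each resulting vector has entries bounded by the entries appearing in $U$, so the numbers stay within the original bit-size; the number of down-atoms is $d^m \leq 2^{O(\norm{U})}$, which is exponential, and the whole computation is a straightforward exponential-time enumeration. Part (ii) is entirely dual: $\overline{D} = \bigcap \overline{d_j\dar}$, and $\overline{d\dar} = \bigcup_{i=1}^d \setof{v}{v[i] > d[i]} = \bigcup_{i : d[i] \neq \omega} (d[i]+1)\cdot e_i \uar$ (coordinates $i$ with $d[i]=\omega$ contribute nothing), so distributing gives a union of at most $d^m$ up-atoms, each the coordinatewise maximum of $m$ up-atoms; numbers again only increase by one over values present in $D$, so bit-sizes are controlled, and the count is exponential.

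I do not expect a serious obstacle here — the argument is routine once one sees the complement-of-an-atom decomposition and the distributivity. The one point to be slightly careful about is the bookkeeping with $\omega$-entries in down-atoms (a down-atom is generally $u\dar$ for $u \in (\Nomega)^d$, not $u \in \N^d$), so that the "intersection of down-atoms is a down-atom" and "complement of a down-atom is a finite union of up-atoms" statements are phrased correctly; for instance $\overline{\N^d} = \emptyset$ must come out right when all entries are $\omega$. A second minor point is confirming that the exponential bound is on the number of atoms while each atom's encoded numbers remain polynomial-size, so the total representation size is genuinely exponential (not worse); this follows because every number produced is either a number already occurring in the input, one less, or one more.
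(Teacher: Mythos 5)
Your proposal is correct, and it proves the statement by a somewhat different (more constructive) route than the paper. Both arguments start from the De Morgan identity $\overline{U} = \bigcap_{j} \left(\N^d \setminus u_j\uar\right)$, but from there they diverge: you decompose the complement of each single up-atom explicitly into at most $d$ coordinate-wise down-atoms ($\omega$ everywhere except $u_j[i]-1$ in coordinate $i$), then distribute the intersection over these unions and use the fact that an intersection of down-atoms is a down-atom (coordinatewise minimum, with $\omega$ absorbing), obtaining an explicit union of at most $d^m$ atoms whose entries are already present in the input (shifted by one); the dual computation handles $\overline{D}$. The paper instead argues non-constructively that any maximal down-atom $\hat{v}\dar$ contained in $\N^d\setminus u\uar$ must have $\hat{v} \in ([0,|u|]\cup\{\omega\})^d$ where $|u|$ is the largest entry, that the same bound survives intersections, and then computes the representation by brute-force enumeration of all candidate generators in $([0,|u|]\cup\{\omega\})^d$ with an inclusion test. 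Your approach buys an explicit formula for the atoms and a bound ($d^m$) depending on the number of input atoms rather than on the magnitudes of the entries, and it makes the exponential-time claim immediate from distributivity; the paper's approach avoids the distributivity bookkeeping and directly characterises the (maximal) atoms of the complement via a simple saturation argument, at the cost of an enumeration whose size depends on the numeric values. Both bounds are exponential in the representation size, so either argument suffices; your two cautionary remarks (handling $\omega$-entries correctly, and checking that entry sizes do not grow) are exactly the right points to watch, and they hold in your construction.
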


\begin{proof}
Here we present only the proof for the complement of the upward-closed set $U$
as the case for downward-closed sets follows the same lines.
Let $U = u_1\uar \cup u_2\uar \cup \ldots \cup u_n\uar$. The complement of $U$ is a downward closed set, that can be represented as a union of down-atoms. As each down-atom is represented as a downward closure of a single vector over $\N\cup\set{\omega}$ it is sufficient to calculate these vectors. Each vector $x$ representing a down-atom has two properties:
\begin{enumerate}
 \item it is in the complement of $U$ i.e. for every $i\in \set{1, \ldots, n}$ there is a coordinate $f(i)\in \set{1, \ldots, d}$ such that $x[f(i)]<u_i[f(i)]$.
 \item it is maximal, i.e. for every $y \neq x$ that satisfies the first property it holds $x\not\preceq y$.
\end{enumerate}
Intuitively, function $f$ witnesses why $x$ is not in $U$. 

Thus, for every function $f : \set{1, \ldots, n}\xrightarrow{}\set{1, \ldots, d}$ we can define one down-atom represented by
some vector $x_f$ as the maximal vector, for which function $f$ witnesses that $x_f \not\in U$.
For a coordinate $i \in [1,d]$ we need that $x_f[i] < u_j[i]$ for each $j \in [1,n]$ such that $j = f(i)$.
Therefore
$$x_f[i]= \min \set{ u_j[i]-1 \mid j\in [1,n], f(j)=i},$$
where we put $\min(\emptyset) = \omega.$
Notice that the union of $x_f\downarrow$ is exactly the complement of $U$.

Observe that there is at most exponentially many different functions $f\in \set{1\ldots n}\xrightarrow{}\set{1, \ldots, d}$ and numbers that appear in descriptions of vectors $x_f$ are $\omega$ or are smaller than
the maximal number that appear in the description of $U$. Therefore, we conclude that the proposition is true.
\end{proof}

For a more general study (for arbitrary well-quasi orders) see~\cite{Goubault-Larrecq2020}.

\myparagraph{Vector Addition Systems with States}
A $d$-dimensional Vector Addition System with States ($d$-VASS or simply VASS)
$V$ consists of a finite \emph{alphabet} $\Sigma$, a finite set of \emph{states} $Q$,
a finite set of \emph{transitions} $T \subseteq Q \times \Sigma \times \Z^d \times Q$,
a distinguished \emph{initial configuration} $c_I \in Q \times \N^d$, and a set of distinguished
\emph{final configurations} $F \subseteq Q \times \N^d$.
We write $V = (\Sigma, Q, T, c_I, F)$.
Sometimes we ignore some of the components in a VASS if they are not relevant, for example we write $V = (Q, T)$
if $\Sigma$, $c_I$, and $F$ do not matter.
A \emph{configuration} of a $d$-VASS is a pair $(q, v) \in Q \times \N^d$, we often write it $q(v)$ instead of $(q, v)$.
We write $\st(q(v)) = q$.
The set of all configurations is denoted $\conf = Q \times \N^d$.
For a state $q \in Q$ and a set $U \subseteq \N^d$ we write $q(U) = \set{q(u) \mid u \in U}$.
A transition $t = (p, a, u, q) \in T$ can be \emph{fired} in the configuration $r(v)$ if $p = r$ and $u+v \in \N^d$.
Then we write $p(v) \trans{t} q(u+v)$.
We say that the transition $t \in T$ is \emph{over} the letter $a \in \Sigma$
or the letter $a$ \emph{labels} the transition $t$.
We write $p(v) \trans{a} q(u+v)$ slightly overloading the notation,
when we want to emphasise that the transition is over the letter $a$.
The \emph{effect} of a transition $t = (p, a, u, q)$ is the vector $u$, we write $\eff(t) = u$.
The size of VASS $V$ is the total number of bits needed to represent the tuple $(\Sigma, Q, T, c_I, F)$,
we do not specify here how we represent $F$ as it may depend a lot on the form of $F$.
A sequence
$\rho = (c_1, t_1, c'_1), (c_2, t_2, c'_2), \ldots, (c_n, t_n, c'_n) \in (\conf \times T \times \conf)^{*}$
is a \emph{run} of VASS $V = (Q, T)$ if for all $i \in [1,n]$ we have $c_i \trans{t_i} c'_i$
and for all $i \in [1,n-1]$ we have $c'_i = c_{i+1}$.
We write $\tr(\rho) = t_1 \cdot \ldots \cdot t_n$.
We extend the notion of \emph{labelling} to runs;
\newcommand{\red}[1]{\textcolor{red}{#1}}
Labelling of a run $\rho$ is a word $w$ which is the concatenation of labels of transitions of $\rho$,
we also say that the run $\rho$ is \emph{over} the word $w$.
This run $\rho$,
if not empty,
is \emph{from} the configuration $c_1$ \emph{to} the configuration $c'_n$
and the configuration $c'_n$ is \emph{reachable} from the configuration $c_1$ by the run $\rho$.
We write then $c_1 \trans{\rho} c'_n$, $c_1 \trans{w} c'_n$ if $w$ labels $\rho$
slightly overloading the notation or simply $c_1 \reaches c'_n$
if the run $\rho$ is not relevant.
The empty run can go from any configuration to itself, and it is labeled with $\varepsilon$.

\myparagraph{VASS languages}
A run $\rho$ is \emph{accepting} if it is from the initial configuration to some final configuration.
For a VASS $V = (\Sigma, Q, T, c_I, F)$ we define the language of $V$
as the set of all labellings of accepting runs, namely
\[
L(V) = \set{w \in \Sigma^* \mid c_I \trans{w} c_F \text{ for some } c_F \in F}.
\]
For any configuration $c$ of $V$ we define the \emph{language of configuration $c$}, denoted $L_c(V)$
to be the language of VASS $(\Sigma, Q, T, c, F)$, namely the language of VASS $V$ with the initial
configuration $c_I$ substituted by $c$. Sometimes we simply write $L(c)$ instead of $L_c(V)$ if $V$ is clear from the context.
Further, we say that the \emph{configuration $c$ has the empty language} if $L(c)=\emptyset$.
For a VASS $V = (\Sigma, Q, T, c_I, F)$ its \emph{control automaton} is intuitively VASS $V$ after ignoring its counters.
Precisely speaking, the control automaton is $(\Sigma, Q, T', q_I, F')$ where $q_I = \st(c_I)$,
$F' = \set{q \in Q \mid \exists_{v \in \N^d} \ q(v) \in F}$ and for each $(q, a, v, q') \in T$ we have $(q, a, q') \in T'$.

Notice that a $0$-VASS, namely a VASS with no counters is just a finite automaton, so all the VASS terminology
works also for finite automata. In particular, a configuration of a $0$-VASS is simply an automaton state.
In that special case for each state $q \in Q$ we call the $L(q)$ the language of state $q$.

A VASS is \emph{deterministic} if for each configuration $c$ reachable from the initial configuration $c_I$
and for each letter $a \in \Sigma$ there is at most one configuration $c'$ such that $c \trans{a} c'$.
A VASS is \emph{$k$-ambiguous} for $k \in \N$ if for each word $w \in \Sigma^*$ there are at most $k$
accepting runs over $w$. If a VASS is $1$-ambiguous we also call it \emph{unambiguous}.

Note that, the set of languages accepted by unambiguous VASS is a strict superset of the languages accepted by deterministic VASS.
To see that unambiguous VASS can accept more, consider a language $(a^*b)^*a^n b^m$ where $n\geq m$.
On the one hand, an unambiguous VASS that accepts the language, guesses where the last block of
letter $a$ starts, then it counts the number of $a$'s in this last block, and finally counts down reading $b$'s. 
As there is exactly one correct guess, this VASS is unambiguous.
On the other hand, deterministic system can not accept the language, as intuitively speaking
it does not know whether the last block of $a$'s has already started or not.
To formulate the argument precisely one should use rather easy pumping techniques.

An example language which can be recognised by $k$-ambiguous VASS,
but probably cannot be recognised by a $(k-1)$-ambiguous VASS is the following:
language of words over $\{a,b\}$ with exactly $k$ letters $b$, which divides the word into $k+1$ blocks consisting of letters $a$, such that the first block is not the shortest one.
A $k$-ambiguous $1$-VASS increments the counter on each letter $a$ in the first block, later guesses a block which is the shortest one, and in that guessed block it decrements the counter.
As there are at most $k$ other blocks, which may be guessed, there are at most $k$ accepting runs for each word.
However, it is not clear how to accept this language by a $(k-1)$-ambiguous VASS.
The problematic scenario is if the first block is the longest one.

The following two problems are the main focus of this paper, for different subclasses of VASS:

\vspace{0.3cm}

\probl{Inclusion problem for VASS}{Two VASS $V_1$ and $V_2$}{Does $L(V_1) \subseteq L(V_2)$ hold}

\vspace{0.3cm}

\probl{Equivalence problem for VASS}{Two VASS $V_1$ and $V_2$}{Does $L(V_1) = L(V_2)$ hold}

\vspace{0.3cm}

In the sequel, we are mostly interested in VASS with the set of final configurations $F$ of some special form.
We extend the order $\preceq$ on the vectors of $\N^d$ to configurations from $Q \times \N^d$ in a natural way:
we say that $q_1(v_1) \preceq q_2(v_2)$ if $q_1 = q_2$ and $v_1 \preceq v_2$. We define the notions of upward-closed,
downward-closed, {\upatom} and {\downatom} the same as for vectors.
As Proposition~\ref{prop:down-decomposition} holds for any well quasi-order,
it also applies to $Q \times \N^d$. Proposition~\ref{prop:blowup} applies here as well,
as the upper bound on the size can be shown separately for each state.
Let the set of final configurations of VASS $V$ be $F$.
If $F$ is upward-closed then we call $V$ an \emph{upward-VASS}.
If $F$ is closed downward, then we call $V$ a \emph{downward-VASS}. 
For two sets $A \subseteq \N^a$, $B \subseteq \N^b$ and a subset of coordinates $J \subseteq [1,a+b]$ by
$A \times_J B$ we denote the set of vectors in $\N^{a+b}$ which projected into coordinates in $J$ belong to $A$
and projected into coordinates outside $J$ belong to $B$.
If $F = \bigcup_{i \in [1,n]} q_i(U_i \times_{J_i} D_i)$
where for all $i \in [1,n]$ we have $J_i \subseteq [1,d]$, $U_i \subseteq \N^{|J_i|}$ are up-atoms
and $D_i \subseteq \N^{d-|J_i|}$ are down-atoms then we call $V$ an \emph{updown-VASS}.
In the sequel we write simply $\times$ instead of $\times_J$, as the set of coordinates $J$ is never relevant.
If $F = \set{c_F}$ is a singleton, then we call $V$ a \emph{singleton-VASS}.

\noindent{\bf Notation.}
As in this paper we mostly work with upward-VASS
we often say simply a VASS instead of an upward-VASS. In other words, if not indicated otherwise, we assume that the set of
final configurations $F$ is upward-closed.

For the complexity analysis we assume that whenever $F$ is upward- or downward-closed then it is given as a union of atoms.
If $F = \bigcup_{i \in [1,n]} q_i(U_i \times D_i)$ then in the input we get a sequence of $q_i$ and representations of atoms $U_i, D_i$
defining individual sets $q_i(U_i \times D_i)$. 

\myparagraph{Language emptiness problem for VASS}
The following emptiness problem is the central problem for VASS.

\begin{quote}\label{qu:emptiness}
\textbf{Emptiness problem for VASS}
\begin{description}
  \item[Input] A VASS $V = (\Sigma, Q, T, c_I, F)$
  \item[Question] Is $c_F$ reachable from $c_I$ in $V$ for some $c_F \in F$?
\end{description}
\end{quote}

Observe that the emptiness problem is not influenced in any way by labels of the transitions,
so sometimes we will not even specify transition labels when we work with the emptiness problem.
If we want to emphasise that transition labels do not matter for some problem, then we write $V = (Q, T, c_I, F)$
ignoring the $\Sigma$ component. In such cases we also assume that transitions do not contain the $\Sigma$ component,
namely $T \subseteq Q \times \Z^d \times Q$.

Note also that the celebrated reachability problem and the coverability problem for VASS are special cases
of the emptiness problem. The reachability problem is the case when $F$ is a singleton set $\set{c_F}$,
classically it is formulated as the question whether there is a run from $c_I$ to $c_F$.
The coverability problem is the case when $F$ is an {\upatom} $c_F$,
classically it is formulated as the question whether there is a run from $c_I$ to any $c$ such that $c_F \preceq c$.
Recall that the reachability problem, so the emptiness problem for singleton-VASS is in \ackermann~\cite{DBLP:conf/lics/LerouxS19}
and actually \ackermann-complete~\cite{DBLP:conf/focs/Leroux21,DBLP:conf/focs/CzerwinskiO21}.

A special case of the emptiness problem is helpful for us in Section~\ref{sec:detvass}.

\begin{lem}\label{lem:reachability-up-down}
The emptiness problem for VASS with the acceptance condition $F = q_F(U \times D)$
where $D$ is a {\downatom} and $U$ is an {\upatom} is in \ackermann.
\end{lem}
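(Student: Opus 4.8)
The plan is to reduce the emptiness problem for a VASS $V$ with acceptance condition $F = q_F(U \times D)$, where $U$ is an up-atom on the coordinate set $J$ and $D$ is a down-atom on the complementary coordinates, to the ordinary VASS reachability problem, which is in \ackermann\ by~\cite{DBLP:conf/lics/LerouxS19}. The up-atom part is handled by a standard trick: if $U = u\uar$ on coordinates $J$, we append to the run a gadget at state $q_F$ that repeatedly subtracts $e_j$ for each $j \in J$ (an arbitrary number of times) and then checks that exactly $u[j]$ remains on coordinate $j$; covering $u$ on the $J$-coordinates is thus turned into reaching the precise value $u[j]$ after nondeterministically burning off the surplus. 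The genuinely new ingredient is the down-atom part, since reaching a configuration whose value on some coordinate $i \notin J$ lies in $n_i\dar$ (or is unconstrained if $D_i = \omega\dar$) is an \emph{upper-bound} constraint, which a plain VASS run cannot directly enforce.

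The key step is therefore to show how to enforce an upper bound $v[i] \le n_i$ at the end of a run using only VASS reachability. The idea is to add, for each bounded coordinate $i$ (those with $D_i = n_i\dar$, $n_i \in \N$), a fresh companion coordinate $i'$ initialised to $n_i$, and to modify every transition so that whenever it adds $\delta$ to coordinate $i$ it simultaneously subtracts $\delta$ from coordinate $i'$ (and adds $-\delta$ / $+\delta$ symmetrically when $\delta < 0$, i.e. $i'$ always holds the ``remaining budget'' $n_i - v[i]$ as long as that quantity stays non-negative; the non-negativity of $i'$ is automatically maintained by the VASS semantics and is exactly the invariant $v[i] \le n_i$). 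Thus a run that would violate $v[i] \le n_i$ at some intermediate point gets blocked — which is too strong, since $D$ only constrains the \emph{final} configuration. To fix this, I would instead make the companion-coordinate bookkeeping optional: add a nondeterministic ``commit'' transition, available from $q_F$ only, that initialises the companion coordinates $i'$ to $n_i$ at the moment of acceptance and then — in a short gadget — lets the run transfer exactly $v[i]$ tokens from coordinate $i$ into a sink while decrementing $i'$ in lockstep, so that termination with $i'$ non-negative certifies $v[i] \le n_i$. Coordinates with $D_i = \omega\dar$ impose no constraint and are simply ignored in the gadget. After all coordinates (the $J$-part via the up-atom gadget, the bounded non-$J$-part via the budget gadget) have been processed, the run is required to reach a single fixed target configuration — say all remaining VASS-coordinates zero in a fresh final state — giving a singleton-VASS, i.e. an instance of plain reachability.

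Putting the two gadgets together, the construction yields a VASS $V'$ of size polynomial in the size of $V$ (the new coordinates and transitions are linear in $d$ and in the bit-size of $u$ and of the $n_i$, which by assumption are part of the input) together with a single target configuration $c_F'$ such that $c_I \reaches c_F$ in $V$ for some $c_F \in q_F(U \times D)$ if and only if $c_I' \reaches c_F'$ in $V'$. The correctness argument is routine bookkeeping: the up-atom gadget can reach $u$ on the $J$-coordinates iff the original value covered $u$ there; the budget gadget can be completed iff the original value on each bounded coordinate $i$ was at most $n_i$. The main obstacle is purely in getting the gadget semantics right — in particular ensuring that the upper-bound check is applied \emph{only} at the accepting configuration and not as a running invariant, which is what the ``commit from $q_F$'' design and the explicit token-transfer gadget are meant to secure. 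Once that is arranged, membership in \ackermann\ follows immediately from the complexity of VASS reachability~\cite{DBLP:conf/lics/LerouxS19} since the reduction is polynomial.
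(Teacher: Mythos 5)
Your overall strategy is essentially the paper's: append a gadget reachable only from $q_F$ and reduce to reachability of a single configuration (emptiness of a singleton-VASS), handling the up-atom by nondeterministically burning off the surplus on those coordinates and then demanding the exact value $u$. The divergence is in the down-atom part, and there the gadget as you describe it does not quite work. First, the coordinates with $D_i = \omega\dar$ cannot be ``simply ignored'': your final target is a single configuration, so it pins \emph{every} coordinate to a fixed value (you say zero), and a run of $V$ that reaches $q_F$ with a positive value on such an unconstrained coordinate would then be wrongly blocked; you need decrement loops on those coordinates too, exactly as for the up-atom surplus. Second, the budget gadget leaves the companion coordinate $i'$ holding $n_i - v[i]$ after the lockstep transfer; since the singleton target must also fix the value of $i'$, the construction as stated is satisfiable only when $v[i] = n_i$, not when $v[i] \le n_i$. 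The phrase ``termination with $i'$ non-negative certifies $v[i]\le n_i$'' is not something a reachability query can express: non-negativity is automatic in the VASS semantics and the target is exact. You need an additional loop draining $i'$ alone (so that the leftover budget can be disposed of before hitting the target $i'=0$); with that addition, and with the drain loops on the $\omega$-coordinates, your reduction becomes correct.

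Both slips are one-line fixes, but note that the paper avoids the auxiliary budget coordinates altogether: for a bounded coordinate with bound $n_i$ it simply adds an increment loop with effect $+e_i$ at the new final state and sets the target value of that coordinate to $n_i$; reaching exactly $n_i$ by only incrementing is possible precisely when the value at the moment of acceptance was at most $n_i$. Combined with decrement loops on the up-atom coordinates (target $u$) and on the unbounded down-coordinates (target $0$), this yields a polynomial reduction to singleton-VASS emptiness without changing the dimension, and membership in Ackermann follows as in your last step.
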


\begin{proof}
We provide a polynomial reduction of the problem to the emptiness problem in singleton-VASS which is in \ackermann.
Let $V = (Q, T, c_I, q_F(U \times D))$ be a $d$-VASS with {\upatom} $U \subseteq \N^{d_1}$
and {\downatom} $D \subseteq \N^{d_2}$ such that $d_1+d_2 = d$.
Let $U = u \uar$ for some $u \in \N^{d_1}$ and let $D = v \dar$ for some $v \in (\Nomega)^{d_2}$.
Let us assume wlog of generality that $d_2 = d_U + d_B$ such that for $i \in [1,d_U]$ we have $v[i] = \omega$
and for $i \in [d_U + 1, d_2]$ we have $v[i] \in \N$.
Let a $d$-VASS $V'$ be the VASS $V$ slightly modified in the following way.
First we add a new state $q'_F$ and a transition $(q_F, 0^d, q'_F)$.
Next, for each coordinate
$i \in [1,d_1]$ we add a loop in state $q'_F$ (transition from $q'_F$ to $q'_F$)
with the effect $-e_i$, namely the one decreasing the
coordinate $i$, these are the coordinates corresponding to the {\upatom} $U$.
Similarly for each coordinate
$i \in [d_1+1, d_1+d_U]$ we add in $q'_F$ a loop with the effect $-e_i$,
these are the unbounded
coordinates corresponding to the {\downatom} $D$.
Finally, for each coordinate
 $i \in [d_1+d_U+1, d]$ we add in $q'_F$ a loop with the effect $e_i$ (notice
that this time we increase the counter values); these are the bounded
coordinates corresponding to {\downatom} $D$.
Let the initial configuration of $V'$ be $c_I$ (the same as in V) and the set of final configurations $F'$ of $V'$
be the singleton set containing $q'_F(u,  (0^{d_U}, v[d_U+1], \ldots, v[d_U +  d_B]))$.
Clearly $V'$ is a singleton-VASS, so the emptiness problem for $V'$ is in \ackermann.
It is easy to see that the emptiness problems in $V$ and in $V'$ are equivalent, which finishes the proof.
\end{proof}

The following is a simple and useful corollary of Lemma~\ref{lem:reachability-up-down}.

\begin{cor}\label{cor:updown}
The emptiness problem for updown-VASS is in \ackermann.
\end{cor}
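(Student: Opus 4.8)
The plan is to reduce the emptiness problem for an updown-VASS to polynomially many instances of the problem already solved in Lemma~\ref{lem:reachability-up-down}. Let $V = (\Sigma, Q, T, c_I, F)$ be an updown-VASS with $F = \bigcup_{i \in [1,n]} q_i(U_i \times_{J_i} D_i)$, where each $U_i$ is an {\upatom} and each $D_i$ a {\downatom}. First I would observe that $c_I \reaches c_F$ for some $c_F \in F$ if and only if there is some index $i \in [1,n]$ with $c_I \reaches c_F$ for some $c_F \in q_i(U_i \times_{J_i} D_i)$. Hence it suffices to decide, for each $i$ separately, whether $c_I$ can reach the set $q_i(U_i \times_{J_i} D_i)$, and to answer positively exactly when at least one of these $n$ checks succeeds.

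For a fixed $i$, I would first permute the $d$ coordinates of $V$ so that $J_i$ becomes an initial segment $[1,|J_i|]$; this relabelling changes neither the set of runs nor the reachability relation, it only rewrites $U_i \times_{J_i} D_i$ as $U_i \times D_i$ with $U_i$ acting on the first $|J_i|$ coordinates and $D_i$ on the remaining ones (as already anticipated by the convention in the preliminaries that the index set in $\times_J$ is never relevant). After this cosmetic step the acceptance condition has the shape $q_F(U \times D)$ with a single {\upatom} $U$ and a single {\downatom} $D$, which is precisely the form treated in Lemma~\ref{lem:reachability-up-down}; that lemma provides an \ackermann procedure for each such instance.

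Finally I would combine the $n$ checks: each runs in \ackermann, the number $n$ is at most linear in the size of the input representation of $F$, and we only need the disjunction of their answers, so running the checks one after another keeps the whole procedure in \ackermann. There is essentially no hard step here — the only things to note are that the coordinate permutation is harmless and that \ackermann is robust under a polynomial number of sequential subcalls — so this is a routine consequence of Lemma~\ref{lem:reachability-up-down}.
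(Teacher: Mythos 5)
Your proposal is correct and follows essentially the same route as the paper: decompose the acceptance condition into its finitely many atoms $q_i(U_i \times D_i)$, decide each via Lemma~\ref{lem:reachability-up-down}, and note that the number of queries is bounded by the size of the representation of $F$, so the whole procedure stays in \ackermann. The extra remark about permuting coordinates is a harmless detail the paper handles implicitly through the convention on $\times_J$.
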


\begin{proof}
Recall that for updown-VASS the acceptance condition is a finite union of $q(U \times D)$ for
some up-atom $U \subseteq \N^{d_1}$ and down-atom $D \subseteq \N^{d_2}$
where $d_1$ and $d_2$ sums to the dimension of the VASS $V$.
Thus, emptiness of the updown-VASS can be reduced to finitely many emptiness queries of the form $q(U \times D)$,
which can be decided in \ackermann due to Lemma~\ref{lem:reachability-up-down}. Notice that the number of queries
is not bigger than the size of the representation of $F$ thus the emptiness problem for updown-VASS is also in \ackermann.
\end{proof}

By Proposition~\ref{prop:down-decomposition} each downward-VASS is also an updown-VASS,
thus Corollary~\ref{cor:updown} implies the following one.

\begin{cor}\label{cor:downward}
The emptiness problem for downward-VASS is in \ackermann.
\end{cor}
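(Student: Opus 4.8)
The plan is to exhibit a downward-VASS as a syntactic special case of an updown-VASS and then invoke Corollary~\ref{cor:updown}. Let $V = (Q, T, c_I, F)$ be a downward-VASS, so $F \subseteq Q \times \N^d$ is downward-closed. By the complexity convention fixed in the preliminaries, $F$ is already presented in the input as a finite union of down-atoms; alternatively, by Proposition~\ref{prop:down-decomposition} (which applies to the well-quasi-order $Q \times \N^d$) we may write $F = \bigcup_{i \in [1,n]} q_i(D_i)$ where each $D_i \subseteq \N^d$ is a down-atom.

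Next I would observe that a down-atom in $\N^d$ is trivially of the shape $U_i \times_{J_i} D_i$ required in the definition of an updown-VASS: take $J_i = \emptyset$, so that $U_i \subseteq \N^0$ is the (unique, trivially satisfied) up-atom and $D_i \subseteq \N^{d}$ is the given down-atom. Hence $F = \bigcup_{i \in [1,n]} q_i(U_i \times_{J_i} D_i)$ witnesses that $V$ is an updown-VASS, and this rewriting costs nothing in the representation size. Applying Corollary~\ref{cor:updown} then gives that the emptiness problem for $V$ is in \ackermann.

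I do not expect a genuine obstacle here: the statement is essentially a packaging corollary, and all the real work has already been done in Lemma~\ref{lem:reachability-up-down} and Corollary~\ref{cor:updown}. The only point deserving a line of care is that the reduction be polynomial, i.e. that recognizing a downward-VASS as an updown-VASS incurs no blow-up — which is immediate since the list of down-atoms describing $F$ is copied verbatim, each paired with the empty set of coordinates. If one prefers not to rely on the input already being in atom form, the mild extra ingredient is that the down-atom decomposition of Proposition~\ref{prop:down-decomposition} is computable and, by (the downward case of) Proposition~\ref{prop:blowup}, of at most exponential size, which is harmless for an \ackermann\ bound.
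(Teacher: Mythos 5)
Your proposal is correct and follows the paper's own argument: the paper likewise notes that by Proposition~\ref{prop:down-decomposition} every downward-VASS is an updown-VASS (the down-atoms playing the role of $U_i \times D_i$ with trivial up-atom part) and then invokes Corollary~\ref{cor:updown}. Your extra remarks about representation size are harmless and consistent with the paper's convention that $F$ is given as a union of atoms.
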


Recall that the coverability problem in VASS is in \expspace~\cite{DBLP:journals/tcs/Rackoff78},
and the coverability problem is equivalent to the emptiness problem for the set of final configurations being an \upatom.
By Proposition~\ref{prop:down-decomposition} we have the following simple corollary
which creates an elegant duality for the emptiness problems in VASS.

\begin{cor}\label{cor:upward}
The emptiness problem for upward-VASS is in \expspace.
\end{cor}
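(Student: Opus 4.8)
The plan is to reduce the emptiness problem for an upward-VASS to polynomially many instances of the coverability problem and then invoke Rackoff's \expspace upper bound~\cite{DBLP:journals/tcs/Rackoff78}. Given an upward-VASS $V = (\Sigma, Q, T, c_I, F)$ with $F$ upward-closed, by Proposition~\ref{prop:down-decomposition} — and by our standing convention that an upward-closed $F$ is supplied in the input already as a finite union of atoms — we may write $F = \bigcup_{i=1}^{n} q_i(u_i\uar)$ with $n \leq \norm{F}$. First I would observe that $c_I \reaches c_F$ for some $c_F \in F$ if and only if there is an index $i \in [1,n]$ and a configuration $c$ with $q_i(u_i) \preceq c$ reachable from $c_I$; for a fixed $i$ this is exactly the statement that $V$ with the single {\upatom} $q_i(u_i\uar)$ as acceptance condition is nonempty, i.e.\ an instance of the coverability problem.

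Next I would run the coverability decision procedure of Rackoff on each of the $n$ instances $(\Sigma, Q, T, c_I, q_i(u_i\uar))$ and accept iff at least one of them is nonempty. Each such instance has size polynomial in $\size(V)$, so each call runs in space bounded by $2^{p(\size(V))}$ for some polynomial $p$; running the $n \leq \norm{F}$ calls sequentially, reusing the same work tape, keeps the total space bounded by $2^{q(\size(V))}$ for a suitable polynomial $q$. Hence the whole procedure is in \expspace.

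I do not expect a genuine obstacle here: the only points that need care are that the decomposition of $F$ into up-atoms is part of the input (so no blow-up from Proposition~\ref{prop:down-decomposition} is incurred), and that \expspace is closed under a disjunction of polynomially many \expspace queries, which holds since the queries can be evaluated one after another on a single exponential-size work tape.
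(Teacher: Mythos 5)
Your proposal is correct and follows exactly the paper's (implicit) argument: the paper also observes that an up-atom acceptance condition is precisely coverability, which is in \expspace by Rackoff, and that the upward-closed set $F$ decomposes into finitely many up-atoms, so emptiness reduces to a disjunction of coverability queries. Your additional remarks about $F$ being given as a union of atoms in the input and about reusing the work tape across the queries are exactly the (routine) details the paper leaves unstated.
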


Actually, even the following stronger fact is true and helpful for us in the remaining part of the paper.
The following proposition is an easy consequence of Corollary 4.6 from~\cite{DBLP:journals/iandc/LazicS21}.

\begin{prop}\label{prop:empty-language}
For each upward-VASS the representation of the downward-closed set of configurations with the empty language
can be computed in doubly-exponential time.
\end{prop}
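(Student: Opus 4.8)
The plan is to identify the set in question with the complement of the backward-coverability set of the \vass, and to bound the cost of computing that set by the classical Rackoff-style analysis of the backward algorithm for well-structured systems.

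Write $V = (\Sigma, Q, T, c_I, F)$ with $F$ upward-closed, and let $\mathrm{Pre}^*(F) \subseteq \conf$ be the set of configurations from which some configuration of $F$ is reachable. A configuration $c$ has empty language exactly when there is no run from $c$ into $F$, i.e.\ when $c \notin \mathrm{Pre}^*(F)$. Since firing a transition is monotone for $\preceq$ (if $c \trans{t} c'$ and $c \preceq d$ then $d \trans{t} d'$ for some $d' \succeq c'$) and $F$ is upward-closed, $\mathrm{Pre}^*(F)$ is upward-closed; hence the set of configurations with empty language equals $\conf \setminus \mathrm{Pre}^*(F)$, which is downward-closed and, by Proposition~\ref{prop:down-decomposition}, a finite union of {\downatom}s. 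It therefore suffices to compute a representation of $\mathrm{Pre}^*(F)$ and then complement it.

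I would compute $\mathrm{Pre}^*(F)$ by the standard backward saturation for well-structured systems: start from the given union of {\upatom}s describing $F$, and repeatedly join in $\mathrm{Pre}(\cdot)$ of the current upward-closed set --- for a \vass the $\preceq$-minimal predecessors of an {\upatom} form a finite, easily computable basis --- until the sequence stabilises, which it does by Dickson's lemma. For the running time I would invoke the refined Rackoff-style analysis of this procedure for \vasses (in the spirit of~\cite{DBLP:journals/tcs/Rackoff78} and~\cite{DBLP:journals/iandc/LazicS21}): all entries of the $\preceq$-minimal elements of $\mathrm{Pre}^*(F)$ are bounded by some value $M$ that is at most doubly-exponential in $\size(V)$, the saturation terminates after at most doubly-exponentially many rounds, and each round costs time polynomial in the current representation size; so a union-of-atoms representation of $\mathrm{Pre}^*(F)$, of doubly-exponential size, is produced in doubly-exponential time.

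It remains to complement $\mathrm{Pre}^*(F)$, which, as observed after Proposition~\ref{prop:blowup}, may be carried out separately for each state. Fix $q \in Q$ and write $\mathrm{Pre}^*(F) \cap q(\N^d) = q(u_1 \uar \cup \dots \cup u_m \uar)$ with $|u_j| \le M$ for all $j$. By the argument in the proof of Proposition~\ref{prop:blowup}, every maximal $\hat v \in (\Nomega)^d$ with $\hat v \dar$ disjoint from all $u_j \uar$ lies in $([0,M] \cup \{\omega\})^d$; enumerating these doubly-exponentially many candidates, testing for each in time polynomial in $d \log M$ whether $\hat v \not\succeq u_j$ for every $j$, and keeping the maximal survivors, yields the {\downatom} decomposition of the complement within state $q$. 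Taking the union over $q \in Q$ gives the desired representation in doubly-exponential time overall. The delicate point is the doubly-exponential bound on $M$ --- the refined Rackoff analysis of the backward coverability algorithm --- while the monotonicity argument, the appeal to Dickson's lemma, and the complementation are routine; note only that the complementation must be argued via the single value $M$, since plugging the doubly-exponential-size representation of $\mathrm{Pre}^*(F)$ into the generic bound of Proposition~\ref{prop:blowup} would give merely a triply-exponential estimate.
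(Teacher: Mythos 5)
Your proposal is correct and takes essentially the route the paper itself relies on: the paper gives no proof of this proposition but cites the Lazi\'c--Schmitz analysis, which is exactly the refined Rackoff-style bound on the backward coverability algorithm that you invoke to bound the minimal elements of $\mathrm{Pre}^*(F)$ and then complement. Your additional care in complementing via the numeric bound $M$ rather than applying Proposition~\ref{prop:blowup} generically (which would only give a triply-exponential estimate) is a sound and welcome detail.
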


\myparagraph{Deciding unambiguity and $k$-ambiguity for VASS}
It is a natural question to ask whether a given VASS is unambiguous, or more generally, $k$-ambiguous for a given $k \in \N$.
Deciding it can be rather easily reduced to the language emptiness problem
(and thus to the coverability or the reachability problem for VASS).
Intuitively, to check whether a VASS $V$ is unambiguous, one can construct a VASS $V'$ that accepts words having at least
two different runs in $V$. Then $V$ is unambiguous if and only if the language of $V'$ is empty. VASS $V'$ essentially speaking
simulates two copies of $V$ and additionally keeps in its state information whether these runs have already differed or not.
Thus, for VASS accepting by a set of states, the unambiguity problem can be decided in ExpSpace,
for details, see Proposition 20 in~\cite{DBLP:conf/concur/CzerwinskiFH20}.
The same complexity can be achieved for upward-VASS in exactly the same way.
Also, in the same way, for downward-VASS the unambiguity problem is reduced to the reachability problem for VASS,
which is in Ackermann~\cite{DBLP:conf/lics/LerouxS19}.

A very similar construction can be used to decide the $k$-ambiguity of the upward- or downward-VASS for any fixed $k$.
For a $d$-VASS $V$ one can construct a $d(k+1)$-VASS $V'$, which simulates $k+1$ copies of $V$,
makes sure that all of them accept and runs in all the copies are different.
Then $V$ is not $k$-ambiguous if and only if the language of $V'$ is nonempty.
In turn, the problem of checking $k$-ambiguity is in \expspace for upward-VASS
and in \ackermann for downward-VASS.

\section{Deterministic VASS}\label{sec:detvass}

\subsection{Lower bound}\label{sec:Lower_bound}
First we prove a lemma, which easily implies Theorem~\ref{thm:ackermann-hardness}.

\begin{lem}\label{lem:equality-hardness}
For each $d$-dimensional singleton-VASS $V$ with final configuration being $c_F = q_F(0^d)$
one can construct in polynomial time two deterministic $(d+1)$-dimensional upward-VASSes
$V_1$ and $V_2$ such that
\[
L(V_1) = L(V_2) \iff L(V) = \emptyset.
\]
\end{lem}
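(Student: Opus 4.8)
The plan is to reduce reachability of the target configuration $q_F(0^d)$ in $V$ to language equivalence of two deterministic upward-VASSes, so that the two languages differ precisely when the target is reachable. I would design $V_1$ and $V_2$ to simulate $V$ in a deterministic, single-transition-choice-per-letter manner by encoding each transition of $V$ as a distinct letter. Concretely, let the alphabet be $\Sigma = T \cup \set{\#}$ where $\#$ is a fresh letter. Both $V_1$ and $V_2$ read a word $t_1 t_2 \cdots t_n$ over $T$ and simulate the corresponding run of $V$ from $c_I$, using a fresh $(d{+}1)$-st coordinate as a scratch/guard counter; determinism is immediate because the letter $t_i$ dictates which transition to fire. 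The final letter $\#$ is used to test, via the acceptance condition, whether the simulated run has reached $q_F(0^d)$.

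The subtlety is that the acceptance condition must be \emph{upward-closed}, yet we want to recognise reaching the exact configuration $q_F(0^d)$. The trick is to make $V_1$ and $V_2$ accept the \emph{same} language except on words that correspond to a genuine run of $V$ reaching $q_F(0^d)$. So I would let $V_1$ be a VASS that accepts $w\# \in \Sigma^*$ iff $w$ is a valid transition sequence of $V$ starting at $c_I$ (a purely "syntactic" check that the firing sequence is enabled, together with an arbitrary upward acceptance that is always satisfiable at the end of any such valid sequence — e.g. acceptance condition $q_F'(\mathbf{0})\uar$ reached after $\#$ regardless of counter values). Meanwhile $V_2$ accepts $w\#$ iff $w$ is a valid transition sequence of $V$ \emph{and} the resulting configuration is strictly above $q_F(0^d)$ in some coordinate — i.e. $V_2$ first checks whether $\sum_i \eff(t_i) \neq 0^d$ or the control state is not $q_F$, and accepts only then. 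One clean way: on reading $\#$, $V_2$ moves to an accepting gadget only after nondeterministically subtracting $1$ from some counter (forced deterministic by using a separate letter, or by making this the unique enabled move when not all counters are zero) — so $w\# \in L(V_2)$ iff the run of $V$ along $w$ ends in a configuration covering $q_F(e_i)$ for some $i$, which together with state $=q_F$ means "reached something strictly bigger than, or incomparable below, $0^d$". Then $L(V_1) \setminus L(V_2)$ is exactly the set of $w\#$ where $w$ drives $V$ from $c_I$ to exactly $q_F(0^d)$. Hence $L(V_1) = L(V_2)$ iff no such $w$ exists iff $L(V) = \emptyset$ (recalling $V$ is a singleton-VASS with final configuration $q_F(0^d)$).

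The main obstacle I anticipate is simultaneously enforcing (a) \emph{determinism} of $V_1$ and $V_2$, (b) \emph{upward-closed} acceptance (so that these are legitimate upward-VASSes), and (c) the exact-zero test at the end — since testing "all counters are exactly $0$" is not directly expressible with an upward-closed condition. The resolution is the standard one: do not test zero directly; instead arrange that $V_2$'s extra "go accept" move is available precisely when \emph{some} counter is positive (covering $q_F(e_i)$), so that $V_2$ accepts the complement-within-valid-sequences of what we want, and only the exact-zero case falls in $L(V_1)\setminus L(V_2)$. Determinism is maintained throughout because every branching point is indexed by a distinct input letter (the transitions of $V$ serve as letters, and auxiliary choices get their own fresh letters). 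I would also need to pad with the fresh $(d{+}1)$-st coordinate only if some bookkeeping (e.g. counting how many $\#$'s, or a one-shot flag) is needed to keep everything deterministic and the acceptance upward-closed; the dimension bound $d+1$ suggests exactly one such auxiliary counter suffices, which I would use as a "phase" flag that is $0$ during simulation and gets bumped by the $\#$ transition.

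Finally, once the lemma is in place, Theorem~\ref{thm:ackermann-hardness} follows immediately: the reachability problem for singleton-VASSes with final configuration of the form $q_F(0^d)$ is \ackermann-hard (this is the standard normal form for VASS reachability, and \ackermann-hardness holds even there by \cite{DBLP:conf/focs/Leroux21,DBLP:conf/focs/CzerwinskiO21}), and the reduction above is polynomial-time and produces deterministic upward-VASSes, so the language equivalence problem for deterministic VASSes is \ackermann-hard as well.
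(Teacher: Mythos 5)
Your overall skeleton matches the paper's reduction (transitions of $V$ as input letters, a fresh end-marker letter, $V_1$ and $V_2$ identical except for the final test, with $L(V_2)\subseteq L(V_1)$ and the difference consisting exactly of witnesses of reachability of $q_F(0^d)$), but the step you yourself flag as the main obstacle is not actually resolved, and this is a genuine gap. You need $V_2$, upon reading the end marker in state $q_F$, to accept iff the current configuration is \emph{not} exactly $0^d$, i.e.\ iff \emph{some} coordinate is positive, and to do so with a single deterministic move. Your first fix, ``nondeterministically subtract $1$ from some counter, forced deterministic by using a separate letter,'' does not work: with one end-marker letter and $d$ transitions subtracting $e_1,\dots,e_d$ the VASS is not deterministic (from a configuration with two positive coordinates there are two successors), and if instead you split into $d$ letters $\#_1,\dots,\#_d$ with $V_1$ accepting $w\#_i$ for every valid $w$ and $V_2$ accepting $w\#_i$ iff coordinate $i$ is positive, then $L(V_1)\neq L(V_2)$ already when some reachable configuration at $q_F$ has \emph{one} zero coordinate (e.g.\ $q_F(1,0)$ reachable but $q_F(0,0)$ not), so the equivalence with $L(V)=\emptyset$ fails. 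Your second fix, ``making this the unique enabled move when not all counters are zero,'' has no VASS semantics that enforces it. Subtracting $1$ from \emph{all} coordinates simultaneously does not help either, since that tests ``all positive,'' not ``some positive.''

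The missing idea is how the extra $(d{+}1)$-th coordinate is used. You propose it only as a scratch/phase flag, but the paper maintains it as the \emph{sum} of the first $d$ coordinates throughout the simulation: the initial configuration is $q_I(x_1,\dots,x_d,\sum_i x_i)$ and each simulated transition with effect $v$ additionally adds $v[1]+\dots+v[d]$ to the last coordinate. Then the configuration at $q_F$ equals $0^d$ on the original coordinates iff the sum coordinate equals $0$, so the test ``not exactly $0^d$'' becomes ``the single sum coordinate is positive.'' This is implemented by one $a$-labelled transition from $q_F$ to a fresh state $q'_F$ with effect $(0^d,-1)$ in $V_2$ (versus effect $0^{d+1}$ in $V_1$), with accepting set $q'_F(0^{d+1}{\uparrow})$; both systems remain deterministic because there is exactly one transition per letter, and the acceptance condition is upward-closed. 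Without this (or an equivalent) mechanism your construction cannot simultaneously satisfy determinism, upward-closed acceptance, and the exact-zero test, so the proof as proposed does not go through.
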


Notice that Lemma~\ref{lem:equality-hardness}
shows that the emptiness problem for a singleton-VASS with the final configuration having zero counter values
can be reduced in polynomial time to the language equivalence for deterministic VASS.
This proves Theorem~\ref{thm:ackermann-hardness} as the emptiness problem,
even with zero counter values of the final configuration
is \ackermann-hard~\cite{DBLP:conf/focs/Leroux21,DBLP:conf/focs/CzerwinskiO21}.
Also, Lemma~\ref{lem:equality-hardness} implies hardness in fixed dimensional VASS. 
For example the language equivalence problem for deterministic $(2d+5)$-VASS is $\F_d$-hard
as the emptiness problem is $\F_d$-hard for $(2d+4)$-dimensional singleton-VASS~\cite{DBLP:journals/corr/abs-2104-12695}.

\begin{proof}[Proof of Lemma~\ref{lem:equality-hardness}]
We first sketch the proof. To show the lemma, we take $V$ and add to it one transition labelled with a new letter. 
In $V_1$ the added transition can be performed if we have reached a configuration bigger than or equal to $c_F$. In $V_2$ the added transition can be performed only if we have reached a configuration strictly bigger than $c_F$. Then it is easy to see that $L(V_1)\neq L(V_2)$ if and only if $c_F$ can be reached. A detailed proof follows. 

For a given $V = (Q, T, c_I, c_F)$ we construct 
$V_1 = (\Sigma=T \cup \{a\}, Q \cup \{q'_{F}\}, T' \cup \{t_1\}, c'_I, q'_{F}(0^{d+1}\uar)),$ 
and $V_2 = (\Sigma=T \cup \{a\}, Q \cup \{q_{F}'\}, T' \cup \{t_2\}, c'_I, q'_{F}(0^{d+1}\uar))$. 
Notice that $V_1$ and $V_2$ are pretty similar to each other and also to $V$.
Both $V_1$ and $V_2$ have the same states as $V$ plus one additional state $q'_F$.
Notice that the alphabet of labels of $V_1$ and $V_2$ is the set of transitions $T$ of $V$ plus one additional letter $a$.
For each transition $t = (p, v, q) \in T$ of $V$ we create a transition $(p, t, v', q) \in T'$ where
\begin{itemize}
  \item for each $i \in [1,d]$ we have $v'[i] = v[i]$; and
  \item $v'[d+1] = v[1] + \ldots + v[d]$,
\end{itemize} 
so $v'$ is identical as $v$ on the first $d$ coordinates
and on the last $(d+1)$-th coordinate it keeps the sum of all the others.
Notice that the transitions in $T'$ are used both in $V_1$ and in $V_2$.

We also add one additional transition $t_1$ to $V_1$ and one $t_2$ to $V_2$.
To $V_1$ we add a new $a$-labelled transition from $q_F$ to $q'_F$ with the effect $0^{d+1}$
for the additional letter $a$.
To $V_2$ we also add an $a$-labelled transition between $q_F$ and $q'_F$, but with an effect equal $(0^d, -1)$.
This $-1$ on the last coordinate is the only difference between $V_1$ and $V_2$.
The starting configuration in both $V_1$ and $V_2$ is $c'_I = q_I(x_1,x_2,\ldots x_d,\sum_{i=1}^{d}x_i)$ where $c_I = q_I(x_1,x_2,\ldots x_d)$.
The set of accepting configurations is the same in both $V_1$ and $V_2$, namely it is $q'_F(0^{d+1}\uar)$ .
Notice that both $V_1$ and $V_2$ are deterministic upward-VASS, as required in the lemma statement.

Now we aim to show that $L(V_1) = L(V_2)$ if and only if $L(V) = \emptyset$.
First, observe that $L(V_1) \supseteq L(V_2)$. Clearly if $w \in L(V_2)$ then $w = ua$ for some $u \in T^*$,
where $T$ is the set of transitions of $V$. For any word $ua \in L(V_2)$, we have
\[
c'_I \trans{u} q_F(v) \trans{a} q'_F(v - e_{d+1})
\]
in $V_2$. But, then we have also
\[
c'_I \trans{u} q_F(v) \trans{a} q'_F(v)
\]
in $V_1$. Thus $ua \in L(V_1)$.

Now we show that, if $L(V) \neq \emptyset$, so $c_I \reaches q_F(0^d)$ in $V$ then $L(V_1) \neq L(V_2)$.
Let the run $\rho$ of $V$ be such that $c_I \trans{\rho} q_F(0^d)$ and let $u = \tr(\rho) \in T^*$.
Then clearly $c'_I\xrightarrow{u}q_F(0^{d+1})\xrightarrow{a}q_F'(0^{d+1})$ and $ua \in L(V_1)$. However $ua \not\in L(V_2)$ as the last coordinate on the run 
of $V_2$
over $ua$ corresponding to $\rho$ would go below zero and
this is the only possible run of $V_2$ over $ua$ due to determinism of $V_2$.

It remains to show that if $L(V) = \emptyset$, so $c_I \ntrans{} q_F(0^d)$ in $V$, then $L(V_1) \subseteq L(V_2)$.
Let $w \in L(V_1)$. Then $w = ua$ for some $u \in T^*$.
Let $c'_I \trans{\rho} c$ in $V_1$ such that $\tr(\rho) = u$. As $ua \in L(V_1)$ we know that $c = q_F(v)$.
However as $c_I \ntrans{} q_F(0^d)$ in $V$ we know that $v \neq 0^{d+1}$. In particular $v[d+1] > 0$.
Therefore $w = ua \in L(V_2)$ as the last transition over $a$ may decrease the $(d+1)$-th coordinate
and reach an accepting configuration. This finishes the proof.
\end{proof}

\subsection{Upper bounds}\label{sec:upper}
In this Section we prove three results of the form: if $V_1$ is a VASS
and $V_2$ is a VASS of some special type then deciding whether $L(V_1) \subseteq L(V_2)$
is in \ackermann. Our approach to these problems is the same, namely we first prove that
complement of $L(V_2)$ for $V_2$ of the special type is also a language of some VASS $V'_2$.
Then to decide the inclusion problem it is enough to construct VASS $V$ such that
$L(V) = L(V_1) \cap L(V'_2) = L(V_1) \setminus L(V_2)$ and check it for emptiness.
In the description above, using the term VASS we do not specify the form of its set of accepting configurations.
Starting from now on, we call upward-VASS simply VASS and for VASS with other acceptance conditions
we use their full name (like downward-VASS or updown-VASS) to distinguish them from upward-VASS.
The following lemma is very useful in our strategy of deciding the inclusion problem for VASS languages.

\begin{lem}\label{lem:language-intersection}
There is an algorithm that for a VASS $V_1$ and a downward-VASS $V_2$ constructs in polynomial
time an updown-VASS $V$ such that $L(V) = L(V_1) \cap L(V_2)$.
\end{lem}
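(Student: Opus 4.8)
The plan is to take the standard synchronised product of $V_1$ and $V_2$ and then observe that its natural acceptance condition already has exactly the updown shape. Write $V_1 = (\Sigma, Q_1, T_1, c_I^1, F_1)$, an upward-VASS of dimension $d_1$, and $V_2 = (\Sigma, Q_2, T_2, c_I^2, F_2)$, a downward-VASS of dimension $d_2$. I would let $V$ be the $(d_1 + d_2)$-VASS with state set $Q_1 \times Q_2$, initial configuration $(\st(c_I^1), \st(c_I^2))$ carrying the concatenation of the counter vectors of $c_I^1$ and $c_I^2$, and a transition $((p_1,p_2), a, (u_1,u_2), (q_1,q_2))$ for every pair $(p_1,a,u_1,q_1) \in T_1$, $(p_2,a,u_2,q_2) \in T_2$ reading the same letter $a$. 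Since the first $d_1$ counters of $V$ then evolve exactly as those of $V_1$ and the last $d_2$ exactly as those of $V_2$, a transition of $V$ is firable in $(q_1,q_2)(v_1,v_2)$ iff its $V_1$-component is firable in $q_1(v_1)$ and its $V_2$-component is firable in $q_2(v_2)$; consequently the runs of $V$ over a word $w$ correspond bijectively to pairs of a run of $V_1$ over $w$ and a run of $V_2$ over $w$. (If the model allows $\eps$-labelled transitions one also adds the transitions in which a single component moves on $\eps$, and the correspondence becomes a routine interleaving argument.)

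Next I would fix the acceptance condition. Using the standing assumption that $F_1$ is supplied as a finite union of up-atoms $\bigcup_{i \in [1,m]} p_i(U_i)$ with $U_i \subseteq \N^{d_1}$ and that $F_2$ is supplied as a finite union of down-atoms $\bigcup_{j \in [1,n]} r_j(D_j)$ with $D_j \subseteq \N^{d_2}$, I set
\[
F = \bigcup_{i \in [1,m]} \bigcup_{j \in [1,n]} (p_i, r_j)\bigl(U_i \times_{[1,d_1]} D_j\bigr).
\]
A configuration $(q_1, q_2)(v_1, v_2)$ lies in $F$ exactly when $q_1(v_1) \in F_1$ and $q_2(v_2) \in F_2$, so by the correspondence above $V$ has an accepting run over $w$ iff both $V_1$ and $V_2$ do, i.e. $L(V) = L(V_1) \cap L(V_2)$. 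Each summand $(p_i, r_j)\bigl(U_i \times_{[1,d_1]} D_j\bigr)$ is literally of the form permitted in the definition of an updown-VASS — an up-atom on the coordinate set $J = [1,d_1]$ and a down-atom on the complementary coordinates — there are only $m\cdot n$ of them, and each has size polynomial in $\norm{F_1} + \norm{F_2}$; together with the product transition relation, of size polynomial in $\card{T_1} \cdot \card{T_2}$, the whole VASS $V$ is produced in polynomial time.

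I do not expect a real obstacle here: the content of the lemma is precisely the observation that the product of an upward-closed acceptance condition with a downward-closed one, placed on disjoint complementary blocks of coordinates, is a finite union of sets $U \times_J D$, which is exactly what makes $V$ an updown-VASS and not something more general. The only things to be slightly careful about are keeping the atom decompositions of $F_1$ and $F_2$ so that this is immediate, and checking that passing to the product only multiplies, rather than blows up super-polynomially, the number of atoms — both of which hold by inspection. This lemma is really a bookkeeping step that later lets the inclusion arguments discharge $L(V_1) \setminus L(V_2) = \emptyset$ via the emptiness result of Corollary~\ref{cor:updown}.
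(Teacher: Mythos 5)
Your construction is correct and is essentially the paper's own proof: the paper also takes the standard synchronised product of $V_1$ and $V_2$ with the product acceptance condition and observes that it decomposes into a finite union of sets of the form $q(U \times D)$, making $V$ an updown-VASS. Your additional bookkeeping (run correspondence, counting the $m \cdot n$ atoms) just makes explicit what the paper leaves implicit.
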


\begin{proof}
We construct $V$ as the standard synchronous product of $V_1$ and $V_2$.
The set of accepting configurations in $V$ is also the product of accepting configurations in $V_1$
and accepting configurations in $V_2$,
thus due to Proposition~\ref{prop:down-decomposition}
a finite union of $q(U \times D)$ for a state $q$ of $V$, an up-atom $U$ and a down-atom $D$.
\end{proof}

\myparagraph{Deterministic VASS}
We first show the following theorem, which will be generalised by the other results in this section.
We aim to prove it independently in order to mildly introduce our techniques.

\begin{thm}\label{thm:complement-dvass}
There is algorithm, that for a deterministic VASS $V$ constructs, in exponential time, a downward VASS that recognises the complement
of the language of $V$.
\end{thm}

\begin{proof}[Proof of Theorem~\ref{thm:complement-dvass}]
Let $V = (\Sigma, Q, T, c_I, F)$ be a deterministic $d$-VASS.
We aim at constructing a $d$-dimensional downward-VASS $V'$ such that $L(V') = \overline{L(V)}$.
Before constructing $V'$ let us observe that there are three possible scenarios for a word $w$ to be not in $L(V)$.
The first scenario (1) is that the only run over $w$ in $V$ finishes in a non-accepting configuration.
Another possibility is that there is even no run over $w$.
Namely, for some prefix $va$ of $w$ where $v \in \Sigma^*$ and $a \in \Sigma$ we have $c_I \trans{v} c$ for some
configuration $c$ but there is no transition from $c$ over the letter $a$ as either
(2) all possible transitions over $a$ would decrease some of the counters below zero,
(3) there is no such transition possible in $V$ in the state of $c$.
For each case we separately design a part of a downward-VASS accepting it. Cases (1) and (3) are simple. For case (2) we nondeterministically guess the moment when the run would go below zero and freeze the configuration at that moment. Then, at the end of the word, we check if our guess was correct.
Notice that the set of configurations from which a step labelled with a letter $a$ would take a counter below zero is downward-closed,
so we can check the correctness of our guess using a downward-closed accepting condition. A detailed proof follows.

We are ready to describe VASS $V' = (\Sigma, Q', T', c'_I, F')$. Basically speaking, it consists of
$2 \cdot |\Sigma| + 1$ copies
of $V$. Concretely, the set of states $Q'$ is the set of pairs $Q \times (\Sigma \times\{2,3\} \cup \set{-})$.
Let $c_I = q_I(v_I)$. Then let $q'_I \in Q'$ be defined as $q'_I = (q_I, -)$
and we define the initial configuration of $V$ as $c'_I = q'_I(v_I)$.
The set of accepting configurations $F' = F_1 \cup F_2 \cup F_3$ is a union of three sets $F_i$,
each set $F_i$ for $i \in \set{1,2,3}$ is responsible for accepting words rejected by VASS $V$ due to the scenario (i) described above. We successively describe which transitions are added to $T'$
and which configurations are added to $F'$ in order to appropriately handle various scenarios. 

We first focus on words fulfilling the scenario (1).
For states of a form $(q, -)$ the VASS $V'$ is just as $V$.
Namely for each transition $(p, a, v, q) \in T$ we add 
$(p', a, v, q')$ to $T'$ where $p' = (p, -)$ and $q' = (q, -)$.
We also add to $F'$ the following set $F_1 = \set{(q,-)(v) \mid q(v) \not\in F}$.
It is easy to see that words that fulfil scenario (1) above are accepted in $V'$ by the use of the set $F_1$.
The size of the description of $F_1$ is at most exponential
compared to the size of the description of $F$ by Proposition~\ref{prop:blowup}.

Now we describe the second part of $V'$ which is responsible for
words rejected by $V$ because of the scenario (2). The idea is to guess when the run over $w$ is finished.
For each label $a \in \Sigma$ we add $(p', a, 0^d, p'')$
to $T'$
where $p' = (p, -)$ and $p'' = (p, (a,2))$. The idea is that the run reaches the configuration in which the transition labeled with $a$
cannot be fired. Now we have to check that our guess is correct. In the state $(p, (a,2))$ for $t \in T$ no transition changes the configuration.
Namely for each $p'' = (p, (a,2)) \in Q \times \Sigma$ and each $b \in \Sigma$ we add to $T'$ a transition $(p'', b, 0^d, p'')$.
We now add to $F'$ the set $F_2 = \set{(p,(a,2))(v) \mid v + \eff(t) \not\in \N^d \text{ for all } t=(p,a,\eff(t), q)\in T  \text{ where } q\in Q}$.
Notice that $F_2$ can be easily represented as a polynomial union of down-atoms.
It is easy to see that indeed $V'$ accepts by $F_2$ exactly words $w$ such that there is a run of $V$
over some prefix $v$ of $w$ but reading the next letter would decrease one of the counters below zero.
The last part of $V'$ is responsible for the words $w$ rejected by $V$ because of the scenario (3),
that is, $w$ has a prefix $va$ such that there is a run over $v \in \Sigma^*$ in $V$
but then in the state of the reached configuration there is no transition over the letter $a \in \Sigma$.
To accept such words for each state $p \in Q$ and letter $a \in \Sigma$ such that there is no transition of
a form $(p, a, v, q) \in T$ for any
$v \in \Z^d$ and $q \in Q$ we add to $T'$ transition $((p, -), a, 0^d, (p, (a,3)))$.
In each state $p' = (p, (a,3)) \in Q \times \Sigma$, we have a transition $(p', b, 0^d, p')$ for each $b \in \Sigma$.
We also add to $F'$ the set
$F_3 = \set{(p,(a,3))(v) \mid v \in \N^d \text{ and there is no } (p, a, u, q) \in T \text{ for } u \in \Z^d \text{ and } q \in Q}$.
The size of $F_3$ is polynomial w.r.t. $T$.

Summarising $V'$ with the accepting downward-closed set $F = F_1 \cup F_2 \cup F_3$ indeed satisfies $L(V') = \overline{L(V)}$,
which finishes the construction and the proof.
\end{proof}

The following theorem is a simple corollary of Theorem~\ref{thm:complement-dvass}, Lemma~\ref{lem:language-intersection}
and Corollary~\ref{cor:updown}.

\begin{thm}\label{thm:dvass-inclusion}
The inclusion problem of a VASS language in a deterministic VASS language is in \ackermann.
\end{thm}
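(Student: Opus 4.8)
The plan is to follow the complementation-and-intersection strategy announced at the start of Section~\ref{sec:upper}. Let $V_1$ be an (upward-)VASS and $V_2$ a deterministic VASS, both over the same alphabet $\Sigma$; we want to decide whether $L(V_1) \subseteq L(V_2)$, equivalently whether $L(V_1) \setminus L(V_2) = \emptyset$.

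First I would apply Theorem~\ref{thm:complement-dvass} to $V_2$: since $V_2$ is deterministic, we obtain in exponential time a downward-VASS $V_2'$ with $L(V_2') = \overline{L(V_2)} = \Sigma^* \setminus L(V_2)$. The only source of blowup here is the representation of the complemented accepting set (the part called $F_1$ in that proof), which by Proposition~\ref{prop:blowup} is at most exponential; the state set and transitions grow only polynomially, and the parts $F_2, F_3$ are of polynomial size.

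Next I would feed $V_1$ and $V_2'$ into Lemma~\ref{lem:language-intersection}, which produces in polynomial time an updown-VASS $V$ with $L(V) = L(V_1) \cap L(V_2') = L(V_1) \setminus L(V_2)$. Hence $L(V_1) \subseteq L(V_2)$ holds if and only if $L(V) = \emptyset$, i.e.\ if and only if the emptiness problem for the updown-VASS $V$ answers negatively.

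Finally, Corollary~\ref{cor:updown} states that the emptiness problem for updown-VASSes is in \ackermann. Since $V$ is obtained from the input by an exponential-time computation and is of at most exponential size in the input, and since \ackermann is closed under elementary (in particular exponential) blowups of the input, the whole procedure stays within \ackermann. The only thing worth double-checking is exactly this size bookkeeping — that the exponential complementation step does not push the complexity past \ackermann — but this is immediate from the robustness of \ackermann under elementary reductions, so there is no genuine obstacle; this is why the statement is only a simple corollary of the three ingredients cited above.
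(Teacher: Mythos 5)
Your proposal is correct and follows exactly the route the paper itself takes: Theorem~\ref{thm:complement-dvass} to complement the deterministic VASS into a downward-VASS, Lemma~\ref{lem:language-intersection} to form the updown-VASS for $L(V_1)\setminus L(V_2)$, and Corollary~\ref{cor:updown} for emptiness in \ackermann, with the exponential blowup absorbed by the robustness of \ackermann. Nothing is missing.
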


\myparagraph{Deterministic VASS with holes}
We define here VASS with holes, which are a useful tool to obtain
our results about unambiguous VASS in Section~\ref{sec:uvass}.
A \emph{$d$-VASS with holes} (or shortly $d$-HVASS) $V$ is defined exactly as a standard VASS,
but with an additional downward-closed set $H \subseteq Q \times \N^d$ which affects the semantics of $V$.
Namely the set of configurations of $V$ is $Q\times \N^d\setminus H$. Thus each configuration on
a run of $V$ needs not only to have nonnegative counters,
but in addition to that it can not be in the set of \emph{holes} $H$.
Additionally in HVASS we allow for transitions labelled by the empty word $\eps$, in contrast to the rest of our paper.
Due to that fact in this paragraph we often work also with VASS having $\eps$-labelled transitions,
we call such VASS the $\eps$-VASS.
As an illustration of the HVASS notion, let us consider the zero-dimensional case.
In that case, the set of holes is just a subset of states.
Therefore, HVASS in dimension zero are exactly VASS in dimension zero, so finite automata.
However, for higher dimensions, the notions of HVASS and VASS differ.

We present here a few results on languages for HVASS.
First notice that for nondeterministic HVASS it is easy to construct a language equivalent $\eps$-VASS.

\begin{lem}\label{lem:hvass-to-vass}
There is an algorithm, that for a given HVASS computes, in exponential time, a language equivalent $\eps$-VASS.
\end{lem}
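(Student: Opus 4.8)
The plan is to simulate an HVASS $V = (\Sigma, Q, T, c_I, F, H)$ by an ordinary $\eps$-VASS $V'$ that behaves exactly like $V$ on transitions but, before and after every transition, checks that the current configuration does not lie in the hole set $H$. Since $H$ is downward-closed, by Proposition~\ref{prop:down-decomposition} its complement $\overline{H} = (Q \times \N^d) \setminus H$ is upward-closed, hence a finite union of up-atoms $\overline{H} = \bigcup_{j \in [1,m]} q_j(w_j\uar)$, and by Proposition~\ref{prop:blowup} this representation has size at most exponential in $\norm{H}$ and is computable in exponential time. The point is that ``avoiding a hole'' is the same as ``covering some threshold $w_j$ in the matching state $q_j$'', and covering-type checks can be performed internally by a VASS.

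First I would describe the gadget. For each up-atom $q_j(w_j\uar)$ of $\overline{H}$, introduce a fresh $\eps$-labelled ``check'' subroutine reachable from state $q_j$: it decrements each coordinate $i$ by $w_j[i]$ (via a sequence of $\eps$-transitions with effects $-e_i$), then increments each coordinate $i$ by $w_j[i]$ again, returning to $q_j$; this round-trip succeeds from $q_j(v)$ if and only if $v \succeq w_j$, i.e.\ $q_j(v) \in q_j(w_j\uar) \subseteq \overline{H}$, and it has net effect $0^d$. So $V'$ can take an $\eps$-move from configuration $q(v)$ to itself precisely when $q(v) \notin H$ — simply offer, for every $j$ with $q_j = q$, the corresponding check subroutine. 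Now build $V'$ on a doubled state space (one ``raw'' copy and one ``verified'' copy of each state, together with the internal check states): a verified configuration may take an original transition $t \in T$ landing in the raw copy; a raw configuration must run some check subroutine to move to the verified copy. Set the initial configuration of $V'$ to the verified copy of $c_I$ (we may assume $c_I \notin H$, else $L(V) = \emptyset$ and the statement is trivial), and let the accepting set of $V'$ be $F$ intersected with $\overline{H}$, placed on the verified copies — still upward-closed since the intersection of two upward-closed sets is upward-closed, and of size at most exponential by Proposition~\ref{prop:blowup}.

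Then I would argue correctness: a run of $V$ visits only configurations outside $H$, so each of its steps lifts to ``original transition, then a successful check'' in $V'$; conversely any accepting run of $V'$ interleaves original transitions with check subroutines whose successful completion certifies that every intermediate configuration (both before and after each original transition) avoids $H$, hence projects to a legal run of $V$ with the same label (the $\eps$-moves of the checks contribute nothing to the label, and they have net effect zero so they do not perturb counter values). Thus $L(V') = L(V)$. Finally I would check the size bound: the number of fresh states is polynomial in $\norm{\overline{H}}$ and the largest coordinate of the $w_j$'s, but since those coordinates are written in binary, a naive unary countdown would be exponential; this is the only real subtlety, and it is handled the standard way — instead of unary $\pm e_i$ chains, use binary counters / repeated doubling gadgets (or simply observe that coordinate values bounded by $\norm{w_j}$ can be decremented/incremented in polynomially many transitions using an auxiliary binary encoding), keeping $V'$ of exponential size overall. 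I expect this last point — keeping the countdown-by-$w_j$ gadget polynomial despite binary encoding — to be the main thing to get right; everything else is a routine product construction.
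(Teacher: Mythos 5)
Your proposal is correct and follows essentially the same route as the paper: complement the downward-closed hole set into a finite union of up-atoms via Proposition~\ref{prop:blowup}, and interleave the original transitions with an $\eps$-labelled guess-and-verify gadget that certifies membership in some up-atom by subtracting and re-adding its base vector, on a doubled (raw/verified) state space. The one ``subtlety'' you flag is not actually an issue: VASS transitions carry arbitrary $\Z^d$ effects encoded in binary, so the check is a single $\eps$-transition with effect $-w_j$ followed by one with effect $+w_j$ (as in the paper's proof), and even your unary variant would still keep $V'$ within the exponential bound the lemma claims.
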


\begin{proof}[Proof of Lemma~\ref{lem:hvass-to-vass}]
We first sketch our solution. At first, we observe that the complement of the set of holes is an upward-closed set $U$.
The idea behind the construction is that after each step we test if the current configuration is in $U$.
We nondeterministically guess a minimal element $x_i$ of $U$ above which the current configuration is, then we subtract $x_i$ and add 
it back. If our guess was not correct, then the run is blocked. A detailed proof follows.

Let $V = (\Sigma, Q, T, q_I(v_I), F, H)$ be a $d$-HVASS with the set of holes $H$.
We aim at constructing a $d$-VASS $V' = (\Sigma, Q', T', c'_I, F')$ such that $L(V) = L(V')$.
By Proposition~\ref{prop:blowup} we can compute in exponential time an upward-closed
set of configurations $U = (Q \times \N^d) \setminus H$.
In order to translate $V$ into a $d$-VASS $V'$ intuitively we need 
to check that each configuration on the run is not in the set $H$.
To do this, we use the representation of $U$ as a finite union
$U = \bigcup_{i \in [1,k]} q_i(u_i \uar)$ for $q_i \in Q$ and $u_i \in \N^d$.
Now for each configuration $c$ on the run of $V$ the simulating VASS $V'$ needs to check that
$c$ belongs to $q_i(u_i \uar)$ for some $i \in [1,k]$.
That is why in $V'$ after every step simulating a transition of $V$ we go into a testing gadget
and after performing the test we are ready to simulate the next step.
For that purpose we define $Q' = (Q \times \set{0, 1}) \cup \set{r_1, \ldots, r_k}$.
The states in $Q \times \set{0}$ are those before the test and the states in $Q \times \set{1}$ are the ones after the test.
The states $r_1, \ldots, r_k$ are used to perform the test.
The initial configuration $c'_I$ is defined as $(q_I, 0)(v_I)$ and the set of final configurations is defined as $F' = \set{(q,1)(v) \mid q(v) \in F}$.
For each transition $(p, a, v, q)$ in $T$ we add a corresponding transition $((p, 1), a, v, (q, 0))$ to $T'$.
In each reachable configuration $(q,0)(v)$ the VASS $V'$ nondeterministically guesses for which $i \in [1,k]$
holds $q_i(u_i) \preceq q(v)$ (which guarantees that indeed $q(v) \in U$).
In order to implement it for each $q \in Q$ and each $i \in [1,k]$ such
that $q = \st(r_i)$ we add two transitions to $T'$: the one from $(q,0)$ to $r_i$ subtracting $u_i$, namely $((q,0), \eps, -u_i, r_i)$
and the one coming back and restoring the counter values, namely $(r_i, \eps, u_i, (q,1))$.
It is easy to see that $(q,0)(v) \trans{\eps} (q,1)(v)$ if and only if $q(v) \in U$, which finishes the proof.
\end{proof}

It is important to emphasise that the above construction applied to a deterministic HVASS
does not give us a deterministic VASS, so we cannot simply reuse Theorem~\ref{thm:complement-dvass}.
Thus in order to prove the decidability of the inclusion problem for HVASS
we need to generalise Theorem~\ref{thm:complement-dvass} to HVASS.

\begin{thm}\label{thm:complement-dhvass}
There is an exponential time algorithm, that
for a deterministic HVASS $V$ computes a downward-$\eps$-VASS, which recognises the complement
of the language of $V$.
\end{thm}

\begin{proof}[Proof of Theorem~\ref{thm:complement-dhvass}]
We first sketch our solution. The proof is very similar to the proof of Theorem~\ref{thm:complement-dvass}.
In case (1) we have to check if the accepting run stays above the holes, to perform it we use
the trick from Lemma~\ref{lem:hvass-to-vass}. In case (2) we freeze the counter when the run would have to drop below zero or enter the hole. The case (3) is the same as in Theorem~\ref{thm:complement-dvass}. A more detailed proof follows.

As the proof is very similar to the proof of Theorem~\ref{thm:complement-dvass} we only sketch the key differences. Let $V$ be a deterministic HVASS and let $H \subseteq Q \times \N^d$ be the set of its holes.
Let $U = (Q \times \N^d) \setminus H$, by Proposition~\ref{prop:blowup} we know that $U = \bigcup_{i \in [1,k]} q_i(u_i\uar)$
for some states $q_i \in Q$ and vectors $u_i \in \N^d$,
and additionally $\norm{U}$ is at most exponential w.r.t. the size $\norm{H}$.

The construction of $V'$ recognising the complement of $L(V)$ is almost the same as in the proof of Theorem~\ref{thm:complement-dvass},
we need to introduce only small changes. The biggest changes are in the part of $V'$ that recognises words rejected by $V$
because of scenario (1). We need to check that after each transition, the current configuration is in $U$
(so it is not in any hole from $H$). We perform it here in the same way as in the proof of Lemma~\ref{lem:hvass-to-vass}.
Namely, we guess which $q_i(u_i \uar)$ the current configuration belongs to and check it by simple VASS modifications
(for details, look at the proof of Lemma~\ref{lem:hvass-to-vass}). The size of this part of $V'$ can
have a blow-up of at most size of $U$ times, namely the size can be multiplied by some number,
which is at most exponential w.r.t. the size $\norm{H}$.

In the part recognising words rejected by $V$ due to scenario (2), we only need to adjust the accepting set $F_2$.
Indeed, we need to accept now if we are in a configuration $(p, t)(v) \in Q \times T$ such that $v + t \not\in \N^d$
or $v + t \in H$ (in contrast to only $v + t \not\in \N^d$ in the proof of Theorem~\ref{thm:complement-dvass}).
This change does not introduce any new superlinear blow-up.

Finally the part recognising words rejected by $V$ because of scenario (3) does not need adjusting at all.
It is not hard to see that the presented construction indeed accepts the complement of $L(V)$ as before.
The constructed downward-VASS $V'$ is of at most exponential size w.r.t. the size $V$ as explained above, which finishes the proof.
\end{proof}

Now, the following theorem is an easy consequence of the shown facts.
We need only to observe that proofs of Lemma~\ref{lem:language-intersection}
and Corollary~\ref{cor:updown} work as well for $\eps$-VASS.

\begin{thm}\label{thm:hvass-inclusion}
The inclusion problem of an HVASS language in a deterministic HVASS language is in \ackermann.
\end{thm}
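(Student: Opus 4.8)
The plan is to reuse, essentially verbatim, the strategy of Section~\ref{sec:upper}: to decide $L(V_1)\subseteq L(V_2)$ it suffices to build a device recognising $L(V_1)\setminus L(V_2)$ and test it for emptiness, and then to invoke the \ackermann\ upper bound for emptiness. The only features that are new relative to Theorem~\ref{thm:dvass-inclusion} are that both inputs are now HVASSes and that the intermediate objects carry $\eps$-labelled transitions, so the work is entirely in assembling already-proved facts and checking that they survive the passage to $\eps$-VASSes.

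Concretely, given an HVASS $V_1$ and a deterministic HVASS $V_2$, I would first apply Lemma~\ref{lem:hvass-to-vass} to $V_1$ to obtain, in exponential time, a language-equivalent $\eps$-VASS $V_1'$. Next I would apply Theorem~\ref{thm:complement-dhvass} to $V_2$ to obtain, again in exponential time, a downward-$\eps$-VASS $V_2'$ with $L(V_2')=\overline{L(V_2)}$. Then I would take the synchronous product of $V_1'$ and $V_2'$ in the obvious way (on a genuine letter both factors move, on an $\eps$-move one factor moves while the other stays put): the proof of Lemma~\ref{lem:language-intersection} goes through unchanged for $\eps$-VASSes, so this yields in polynomial time an updown-$\eps$-VASS $V$ with
\[
L(V) = L(V_1')\cap L(V_2') = L(V_1)\cap\overline{L(V_2)} = L(V_1)\setminus L(V_2).
\]
Hence $L(V_1)\subseteq L(V_2)$ holds if and only if $L(V)=\emptyset$. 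Finally I would decide emptiness of $V$: since emptiness is insensitive to transition labels, the $\eps$-transitions are irrelevant and Corollary~\ref{cor:updown} applies to $V$ directly, reducing the question to finitely many reachability instances (Lemma~\ref{lem:reachability-up-down}), each of which is in \ackermann.

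I do not expect a real obstacle. Every step is either a lemma or theorem already proved in this paper, or the routine observation — explicitly anticipated in the sentence preceding the statement — that the product construction of Lemma~\ref{lem:language-intersection} and the emptiness reduction of Corollary~\ref{cor:updown} are unaffected by $\eps$-transitions. The only point deserving a line of care is size bookkeeping: Lemma~\ref{lem:hvass-to-vass} and Theorem~\ref{thm:complement-dhvass} each incur one exponential blow-up (on $V_1$ and on $V_2$ respectively), the product is polynomial in these, so the object fed to the emptiness test is of at most exponential size in the input; as the \ackermann\ class is closed under elementary — indeed primitive-recursive — pre-processing, the whole procedure runs in \ackermann, rather than stacking the exponentials into a tower.
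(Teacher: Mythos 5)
Your proposal matches the paper's own proof essentially step for step: convert $V_1$ to an $\eps$-VASS via Lemma~\ref{lem:hvass-to-vass}, complement $V_2$ via Theorem~\ref{thm:complement-dhvass}, intersect using Lemma~\ref{lem:language-intersection} extended to $\eps$-VASSes, and decide emptiness of the resulting updown-$\eps$-VASS via Corollary~\ref{cor:updown}, which is exactly the argument given in Section~\ref{sec:detvass}. The additional remarks on handling $\eps$-moves in the product and on the harmless exponential blow-ups are correct and consistent with the paper's (briefer) treatment.
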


\begin{proof}[Proof of Theorem~\ref{thm:hvass-inclusion}]
Let $V_1 = (\Sigma, Q_1, T_1, c^1_I, F_1, H_1)$ be a $d_1$-HVASS with holes $H_1 \subseteq Q_1 \times \N^{d_1}$
and let $V_2 = (\Sigma, Q_2, T_2, c^2_I, F_2, H_2)$ be a deterministic $d_2$-HVASS with holes $H_2 \subseteq Q_2 \times \N^{d_2}$.
By Lemma~\ref{lem:hvass-to-vass} an $\eps$-VASS $V'_1$ equivalent to $V_1$ can be computed in exponential time.
By Theorem~\ref{thm:complement-dhvass} a downward-$\eps$-VASS $V'_2$ can be computed in exponential time such that
$L(V'_2) = \Sigma^* \setminus L(V_2)$. It is enough to check now whether $L(V'_1) \cap L(V'_2) = \emptyset$.
By Lemma~\ref{lem:language-intersection} (extended to $\eps$-VASS) one can compute an updown-$\eps$-VASS $V$ such that
$L(V) = L(V'_1) \cap L(V'_2)$. Finally, by Corollary~\ref{cor:updown} (also extended to $\eps$-VASS)
the emptiness problem for updown-$\eps$-VASS is in \ackermann, which finishes the proof.
\end{proof}

\myparagraph{\BCdeterministic\ VASS}
We define here a generalisation of a deterministic VASS, namely a \kcdeterministic\ VASS for $k \in \N$.
Such VASS are later used as a tool for deriving results about $k$-ambiguous VASS in Section~\ref{sec:bavass}.

We say that a finite automaton $\A = (\Sigma, Q, T, q_I, F)$ is \emph{$k$-deterministic}
if for each word $w \in \Sigma^*$ there are at most $k$ \emph{maximal runs over $w$}.
We call a run $\rho$ a \emph{maximal run over $w$} if either (1) it is a run over $w$
or (2) $w = u a v$ for $u, v \in \Sigma^*$, $a \in \Sigma$ such that
the run $\rho$ is over the prefix $u$ of $w$ but there is no possible way of extending $\rho$
by any transition labelled with the letter $a \in \Sigma$. 
Let us emphasise that here we count runs in a subtle way. We do not count only the maximal
number of active runs throughout the word, but the total number of different runs that have ever been started during the word.
To illustrate the difference better let us consider an example
finite automaton $\A$ over $\Sigma = \set{a,b}$ with two states $p, q$ and with three transitions: $(p, a, p)$, $(p, a, q)$ and $(q, b, q)$.
Then $\A$ has $n+1$ maximal runs over the word $a^n$ although only two of these runs
actually survive till the end of the input word. So $\A$ is not $2$-deterministic even though for each
input word it has at most two runs.
We say that a VASS $V = (\Sigma, Q, T, c_I, F)$ is \emph{\kcdeterministic}\ if its control automaton is $k$-deterministic.

\begin{thm}\label{thm:complement-bdvass}
There is an exponential time algorithm, that
for a \kcdeterministic\ $d$-VASS $V$ constructs
a $(k \cdot d)$-dimensional downward-VASS, which recognises the complement of the language of $V$.
\end{thm}

\begin{proof}[Proof of Theorem~\ref{thm:complement-bdvass}.]
We first sketch our solution. In the algorithm the $(k \cdot d)$-dimensional downward-VASS $V'$ simulates $k$ copies of $V$
which take care of at most $k$ different maximal runs of $V$. The accepting condition $F'$ of $V'$ verifies
whether in all copies there is a reason that the simulated maximal runs do not accept.
The reasons why each individual copy does not accept are the same as in Theorem~\ref{thm:complement-dvass}.

Before starting the proof, let us remark that it would seem natural to
first build a deterministic $(k \cdot d)$-VASS equivalent to the input \kcdeterministic\ $d$-VASS
and then apply construction from the proof of Theorem~\ref{thm:complement-dvass} to recognise its complement. However, it is not clear how to construct a deterministic $(k \cdot d)$-VASS equivalent to \kcdeterministic\
$d$-VASS, thus we compute directly a VASS recognising the complement of the input VASS language.

Let $V = (\Sigma, Q, T, c_I, F)$ be a \kcdeterministic\ $d$-VASS.
We aim to construct a $(k \cdot d)$-dimensional downward-VASS $V' = (\Sigma, Q', T', c'_I, F')$
such that $L(V') = \Sigma^* \setminus L(V)$. In this proof, also, we strongly rely
on the ideas introduced in the proof of Theorem~\ref{thm:complement-dvass}.
Recall that the idea of the construction is that $V'$ simulates $k$ copies of $V$ which take
care of different maximal runs of $V$. Then the accepting condition $F'$ of $V'$
verifies whether in all copies there is a reason the simulated maximal runs do not accept.

Recall that for a run there are three scenarios in which it is not accepted: (1) it reaches the end
of the word, but the reached configuration is not accepted, (2) at some moment it tries to decrease
some counter below zero, and (3) at some moment there is no transition available over the input letter.
In the proof of Theorem~\ref{thm:complement-dvass} it was shown how a VASS can handle all three reasons.
In short words: in case (1) it simulates the run till the end of the word and then checks that the reached configuration
is not accepting and in cases (2) and (3) it guesses the moment in which there is no valid transition available
and keeps this configuration untouched (in other words freezes it) till the end of the run when it checks by the accepting condition
that the guess was correct.
We only sketch how the downward-VASS $V'$ works without stating explicitly its states and transitions.
It starts in the configuration $c'_I$ which consists of $k$ copies of $c_I$. Then it simulates the run
in all the copies in the same way until the first time when there is a choice of transition. Then we enforce
that at least one copy follows each choice, but we allow for more than one copy to follow the same choice.
In the state of $V'$ we keep the information which copies follow the same maximal
run and which have already split.
Each copy is exactly as in the proof of Theorem~\ref{thm:complement-dvass}, it realises one of the scenarios (1), (2) or (3).
As we know that $V$ is \kcdeterministic, we are sure that all the possible runs of $V$ can be simulated by $V'$
under the condition that $V'$ correctly guesses which copies should simulate which runs.
If the guesses of $V'$ are wrong and at some point it cannot send to each branch a copy, then the run of $V'$ is rejected.

A bit more concretely, the state of $V'$ keeps the following information: (I) for each of the $k$ copies in which state in $Q$ it is,
(II) the set of copies which are frozen and for each such copy a transition which caused the freezing, (III) which copies have already split and which have not (formally speaking we keep a partition of the set of copies).

We guarantee that all the branches are explored by some copy in the following way.
If in $V$ there are $m$ different transitions over some letter $a$ from a state $q$ then
in $V'$ in every state that encodes $\ell$ copies in state $q$ not all the possible $m^\ell$ options in the product are allowed.
We allow only for these options for which each of the possible $m$ transitions of $V$ are taken in one of the $\ell$ copies.
In particular if $m = \ell$ then instead of $m^m$ options in $V'$ we have exactly $m!$ options. In each of the options,
there are actually $2^m$ possibilities for freezing the copies.
This is needed because if the transition would decrease the counters below zero (so scenario (2) is realised),
the copy may not actually fire the transition, and we should have a possibility of freezing it.
If a copy gets frozen, the control state is updated accordingly.

At the end of the run over the input word $w$ VASS $V'$ checks using the acceptance condition $F'$ that indeed
all copies have simulated all the possible maximal runs and that all reject.
It is easy to see that $F'$ is a downward-closed set, since, roughly speaking, it is a product of $k$ downward-closed accepting conditions,
which finishes the proof.
\end{proof}

Theorem~\ref{thm:complement-bdvass} together with Lemma~\ref{lem:language-intersection}
and Corollary~\ref{cor:updown} easily implies (analogously as in the proof of Theorem~\ref{thm:hvass-inclusion})
the following theorem.

\begin{thm}\label{thm:bdvass-inclusion}
The inclusion problem of a VASS language in a \kcdeterministic\ VASS language is in \ackermann.
\end{thm}

\section{Unambiguous VASS}\label{sec:uvass}
In this section we aim to prove Theorem~\ref{thm:unambiguous}.
However, possibly a more valuable contribution of this section is a novel
technique which we introduce in order to show Theorem~\ref{thm:unambiguous}.
The essence of this technique is to introduce a regular lookahead to words,
namely to decorate each letter of a word with a piece of information regarding some regular properties
of the suffix of this word. For technical reasons, we realise it by using finite monoids.

The high-level intuition behind the proof of Theorem~\ref{thm:unambiguous} is the following.
We first introduce the notion of $(M,h)$-decoration of words, languages and VASS,
where $M$ is a monoid and $h: \Sigma^* \to M$ is a homomorphism.
Proposition~\ref{prop:decoration} states that language inclusion of two VASS can
be reduced to language inclusion of its decorations.
On the other hand Theorem~\ref{thm:uvass-to-hvass} shows that for appropriately
chosen pair $(M,h)$ the decorations of unambiguous VASS are deterministic HVASS.
Theorem~\ref{thm:reg-sep-upgrade} states that such an appropriate pair can be computed quickly enough.
Thus, the language inclusion of unambiguous VASS reduces to language inclusion of deterministic HVASS,
which is in \ackermann due to Theorem~\ref{thm:hvass-inclusion}.

Recall that a monoid $M$ together with a homomorphism $h: \Sigma^* \to M$
and an accepting subset $F \subseteq M$ recognises a language $L$ if
$L = h^{-1}(F)$. In other words $L$ is exactly the set of words $w$ such that $h(w) \in F$.
The following proposition is folklore, for details see~\cite{DBLP:reference/hfl/Pin97} (Proposition 3.12).

\begin{prop}\label{prop:monoids}
A language of finite words is regular if and only if it is recognised by some finite monoid.
\end{prop}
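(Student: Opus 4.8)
The plan is to prove the two implications separately, using the classical automaton-to-monoid and monoid-to-automaton constructions, so the statement follows without appealing to anything deep.

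For the direction "recognised by a finite monoid implies regular", suppose $L = h^{-1}(F)$ for a finite monoid $M$, a homomorphism $h : \Sigma^* \to M$, and an accepting set $F \subseteq M$. I would build a deterministic finite automaton $\A$ whose state set is $M$ itself, with initial state the identity $1_M$, set of final states $F$, and transition function $\delta(m, a) = m \cdot h(a)$ for $m \in M$, $a \in \Sigma$. A straightforward induction on $|w|$ shows that the state reached from $1_M$ after reading $w \in \Sigma^*$ is exactly $h(w)$; hence $\A$ accepts $w$ iff $h(w) \in F$ iff $w \in L$. Since $M$ is finite, $\A$ is a finite automaton and $L$ is regular.

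For the converse, suppose $L$ is regular and fix a \emph{complete} deterministic finite automaton $\A = (\Sigma, Q, \delta, q_0, F)$ with $L(\A) = L$ (completeness is achieved, if necessary, by adding a sink state). For $w \in \Sigma^*$ let $\delta_w : Q \to Q$ be the function $q \mapsto \widehat{\delta}(q, w)$, where $\widehat{\delta}$ is the usual extension of $\delta$ to words. Put $M = \set{\delta_w \mid w \in \Sigma^*} \subseteq Q^Q$; this set is finite because $Q^Q$ is. Equipping the functions with the product $f \cdot g := g \circ f$ makes $M$ a monoid: the identity is $\delta_\eps = \mathrm{id}_Q$, associativity is inherited from composition, and a short computation gives $\delta_{uv} = \delta_v \circ \delta_u = \delta_u \cdot \delta_v$, so $M$ is closed under the product and $h : \Sigma^* \to M$, $w \mapsto \delta_w$, is a homomorphism. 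Taking $F' = \set{f \in M \mid f(q_0) \in F}$ we obtain $h^{-1}(F') = \set{w \mid \widehat{\delta}(q_0, w) \in F} = L(\A) = L$, so $M$ recognises $L$.

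There is no genuine obstacle here: both constructions are routine, and the only points requiring minor care are ensuring the automaton in the second direction is complete (so that each $\delta_w$ is a total function) and fixing a composition convention under which $w \mapsto \delta_w$ is a homomorphism, as done above. This establishes Proposition~\ref{prop:monoids}; for a fully detailed account see~\cite{DBLP:reference/hfl/Pin97}.
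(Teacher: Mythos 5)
Your proof is correct: the paper itself gives no argument for this proposition, treating it as folklore and citing Pin's handbook chapter, and your two constructions (the deterministic automaton with state set $M$, initial state $1_M$, and transitions $m \mapsto m \cdot h(a)$ for one direction; the transition monoid of a complete DFA with $F' = \set{f \mid f(q_0) \in F}$ for the other) are exactly the standard proof that the citation points to. Both directions are carried out accurately, including the points needing care (completeness of the automaton and the composition convention making $w \mapsto \delta_w$ a homomorphism), so there is nothing to add.
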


For that reason, monoids are a good tool for working with regular languages.
In particular Proposition~\ref{prop:monoids} implies that for each finite family of regular languages
there is a monoid, which recognises all of them, this fact is useful in Theorem~\ref{thm:uvass-to-hvass}.
Fix a finite monoid $M$ and a homomorphism $h: \Sigma^* \to M$.
For a word $w = a_1 \cdot \ldots \cdot a_n \in \Sigma^*$ we define its \emph{$(M,h)$-decoration}
to be the following word over an alphabet $\Sigma_\eps \times M$:
\[
(\eps, h(a_1 \cdot \ldots \cdot a_n)) \cdot (a_1, h(a_2 \cdot \ldots \cdot a_n)) \cdot \ldots \cdot (a_{n-1}, h(a_n)) \cdot (a_n, h(\eps)).
\]
In other words, the $(M,h)$-decoration of a word $w$ of length $n$ has length $n+1$,
where the $i$-th letter has the form $(a_{i-1}, h(a_i \cdot \ldots \cdot a_n))$.
We denote the $(M,h)$-decoration of a word $w$ as $w_{(M,h)}$.
If $h(w) = m$ then we say that the word $w$ has \emph{type} $m \in M$.
The intuition behind the $(M,h)$-decoration of $w$ is that for each language $L$ which
is recognised by the pair $(M,h)$ the $i$-th letter of $w$ is extended with an information whether the suffix
of $w$ after this letter belongs to $L$ or does not belong. This information can be extracted
from the monoid element $h(a_{i+1} \cdot \ldots \cdot a_n)$ by which the letter $a_i$ is extended.
As an illustration consider words over alphabet $\Sigma = \{a, b\}$,
monoid $M = \Z_2$ counting modulo two and homomorphism $h: \Sigma \to M$ defined
as $h(a) = 1$, $h(b) = 0$. In that case for each $w \in \Sigma^*$ the element $h(w)$ indicates whether
the number of letters $a$ in the word $w$ is odd or even. The decoration of $w = a a b a b$
is then $w_{(M,h)} = (\eps, 1) (a, 0) (a, 1) (b, 1) (a, 0) (b, 0)$.

We say that a word $u \in (\Sigma_\eps \times M)^*$ is \emph{well-formed}
if $u = (\eps, m_0) \cdot (a_1, m_1) \cdot \ldots \cdot (a_n, m_n)$ such that all $a_i \in \Sigma$,
and for each $i \in [0,n]$ the type of $a_{i+1} \cdot \ldots \cdot a_n$ is $m_i$
(in particular type of $\eps$ is $m_n$).
We say that such a word $u$ \emph{projects} into word $a_1 \cdot \ldots \cdot a_n$.
It is easy to observe that $w_{(M,h)}$ is the only well-formed word that projects into $w$.
The following proposition is useful in Section~\ref{sec:bavass}, an appropriate finite automaton
can be easily constructed.

\begin{prop}\label{prop:well-formed}
The set of all well-formed words is regular.
\end{prop}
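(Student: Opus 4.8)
The plan is to exhibit a finite automaton over the alphabet $\Sigma_\eps \times M$ that accepts exactly the well-formed words, and then invoke Proposition~\ref{prop:monoids} (or simply observe directly that a finite automaton witnesses regularity). Recall that a word $u = (\eps, m_0) \cdot (a_1, m_1) \cdot \ldots \cdot (a_n, m_n)$ is well-formed precisely when: the first letter has first component $\eps$ and no other letter does; every subsequent first component lies in $\Sigma$; and the monoid components are consistent in the sense that $m_{i-1} = h(a_i) \cdot m_i$ for each $i \in [1,n]$, together with the boundary condition $m_n = h(\eps)$, the identity of $M$.

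First I would set up the automaton so that its state after reading a prefix records the monoid element that the current position is supposed to carry. Concretely, take the state set to be $M \cup \set{\init, \trash}$. From $\init$, on reading a letter $(\eps, m)$ we move to state $m$; on any letter whose first component is not $\eps$ we move to $\trash$ (and $\trash$ is an absorbing non-accepting sink). From a state $m \in M$, on reading a letter $(a, m')$ with $a \in \Sigma$ we check whether $m = h(a) \cdot m'$; if so we move to state $m'$, otherwise we move to $\trash$; on a letter with first component $\eps$ we again go to $\trash$. The initial state is $\init$, and the accepting states are exactly the single state $e \in M$, the identity, reflecting the requirement that the last monoid component equals $h(\eps)$. (The empty input is not well-formed, which is consistent with $\init$ being non-accepting.)

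It remains to argue correctness, which is a routine unwinding of the definition. By induction on the length of the prefix, after reading $(\eps, m_0)(a_1,m_1)\cdots(a_i,m_i)$ with all $a_j \in \Sigma$ the automaton is in state $m_i$ if and only if $m_{j-1} = h(a_j) \cdot m_j$ for all $j \le i$; any violation, or a misplaced $\eps$-component, sends it permanently to $\trash$. Hence an input is accepted iff it has the shape $(\eps,m_0)(a_1,m_1)\cdots(a_n,m_n)$ with $a_j \in \Sigma$, the chain of equalities $m_{j-1} = h(a_j)\cdot m_j$ holds, and $m_n = e = h(\eps)$; unrolling the chain gives $m_{i} = h(a_{i+1}\cdots a_n)$ for every $i$, which is exactly well-formedness. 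Since $M$ is finite, the automaton is finite, so the set of well-formed words is regular.

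I do not expect any genuine obstacle here: the only mild subtlety is getting the direction of the monoid product right (one reads the decoration left to right but the monoid element at position $i$ summarises the suffix, so the update is $m_{i-1} = h(a_i)\cdot m_i$ rather than $m_i \cdot h(a_i)$), and handling the boundary conditions at the two ends of the word — the forced $\eps$ in the first component of the first letter, the forced identity as the last monoid component, and the rejection of any other occurrence of $\eps$. Everything else is bookkeeping.
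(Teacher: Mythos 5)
Your construction is correct and matches the paper's intent exactly: the paper does not spell out a proof, merely noting that ``an appropriate finite automaton can be easily constructed,'' and the automaton you describe — tracking the expected monoid element, checking the local consistency $m_{i-1}=h(a_i)\cdot m_i$, policing the placement of the $\eps$-component, and accepting only in the state $h(\eps)$ — is precisely such an automaton. Your observation that the local chain of equalities together with the final condition $m_n=h(\eps)$ unrolls to the global suffix-type condition is the only point needing an argument, and you handle it correctly.
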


A word is \emph{almost well-formed} if it satisfies all the conditions of well-formedness, but the first letter
is not necessarily of the form $(\eps, m)$ for $m \in M$, it can as well belong to $\Sigma \times M$.

The $(M,h)$-decoration of a language $L$, denoted $L_{(M,h)}$,
is the set of all $(M,h)$-decorations of all words in $L$, namely
\[
L_{(M,h)} = \set{w_{(M,h)} \mid w \in L}.
\] 
Since the $(M,h)$-decoration is a function from the set of words over $\Sigma$ to words over
$\Sigma_\eps \times M$ we observe that $u = v$ iff $u_{(M,h)} = v_{(M,h)}$ and clearly the following proposition holds.

\begin{prop}\label{prop:decoration}
For each finite alphabet $\Sigma$, two languages $K, L \subseteq \Sigma^*$, a
monoid $M$ and homomorphism $h: \Sigma^* \to M$ we have
\[
K \subseteq L \iff K_{(M,h)} \subseteq L_{(M,h)}.
\]
\end{prop}

\myparagraph{HVASS construction}
As defined earlier, an HVASS (VASS with holes) is a VASS with some downward-closed set $H$ of prohibited configurations
(see Section~\ref{sec:detvass}, paragraph Deterministic VASS with holes).
For each $d$-VASS $V = (\Sigma, Q, T, c_I, F)$, a monoid $M$ and a homomorphism $h: \Sigma^* \to M$ we can define in a natural
way a $d$-HVASS $V_{(M,h)} = (\Sigma_\eps \times M, Q', T', c'_I, F')$ accepting the $(M,h)$-decoration of $L(V)$.
The set of states $Q'$ equals $Q \times (M \cup \set{\perp})$.
The intuition is that $V_{(M,h)}$ is designed in such a way that for any state $(q,m) \in Q \times M$
and vector $v\in \N^d$ if $(q,m)(v)\xrightarrow{w'} F'$ 
then $w'$ is almost well-formed and $w'$ projects into some $w \in \Sigma^*$ such that $h(w)=m$.
If $c_I = q_I(v_I)$ then configuration $c'_I=(q_I, \perp)(v_I)$ is the initial configuration of $V_{(M,h)}$.
The set of final configurations $F'$ is defined as $F' = \set{(q, h(\eps))(v) \mid q(v) \in F}$.
Finally we define the set of transitions $T'$ of $V'$ as follows.
First, for each $m \in M$ we add the following transition $((q_I, \perp), (\eps, m), 0^d, (q_I, m))$ to $T'$.
Then for each transition $(p, a, v, q) \in T$ and for each $m \in M$ we add to $T'$ the transition
$(p', a', v, q')$ where $a' = (a, m)$, $q' = (q, m)$ and $p' = (p, h(a) \cdot m)$.
It is now easy to see that for any word $w = a_1 \cdot \ldots \cdot a_n \in \Sigma^*$ we have
\[
q_I(v_I) \trans{a_1} q_1(v_1) \trans{a_2} \ldots \trans{a_{n-1}} q_{n-1}(v_{n-1}) \trans{a_n} q_n(v_n)
\]
if and only if
\begin{align*}
(q_I, \perp)(v_I) & \trans{(\eps, m_1)} (q_I, m_1)(v_I) \trans{(a_1, m_2)} (q_1, m_2)(v_1) \trans{(a_2, m_3)} \ldots \\
& \trans{(a_{n-1}, m_n)} (q_{n-1},m_n)(v_{n-1}) \trans{(a_n, m_{n+1})} (q_n, m_{n+1})(v_n),
\end{align*}
where $m_i = h(w[i..n])$ for all $i \in [1,n+1]$, in particular $m_{n+1} = h(\eps)$.
Therefore, indeed $L(V_{(M,h)}) = L(V)_{(M,h)}$.
Until now the defined HVASS is actually a VASS, we have not defined any holes. Our aim is now to remove configurations
with the empty language, that is, $(q,m)(v)$ for which there is no word $w \in (\Sigma_\eps \times M)^*$ such that
$(q,m)(v) \trans{w} c'_F$ for some $c'_F \in F'$. Notice that as $F'$ is upward-closed we know that the set of
configurations with the empty language is downward-closed.
This is how we define the set of holes $H$, it is exactly the set of configurations with
the empty language. We can compute the set of holes in doubly-exponential time by Proposition~\ref{prop:empty-language}.

By Proposition~\ref{prop:decoration} we know that for two VASS $U, V$ we have $L(U) \subseteq L(V)$ if and only if
$L(U_{(M,h)}) \subseteq L(V_{(M,h)})$. This equivalence is useful, as we show in a moment that for an unambiguous VASS $V$ and suitably chosen $(M,h)$
the HVASS $V_{(M,h)}$ is deterministic.

\myparagraph{Regular separability}
We use here the notion of regular separability. We say that two languages $K, L \subseteq \Sigma^*$
are \emph{regular-separable} if there exists a regular language $S \subseteq \Sigma^*$
such that $K \subseteq S$ and $S \cap L = \emptyset$. We then say that $S$ \emph{separates} $K$ and $L$
and $S$ is a \emph{separator} of $K$ and $L$. We recall here a theorem about regular-separability of VASS languages
(importantly upward-VASS languages, not downward-VASS languages) from~\cite{DBLP:conf/concur/CzerwinskiLMMKS18}.

\begin{thmC}[{\cite[Theorem 24]{DBLP:conf/concur/CzerwinskiLMMKS18}}]\label{thm:reg-sep}
For any two VASS languages $L_1, L_2 \subseteq \Sigma^*$
if $L_1 \cap L_2 = \emptyset$ then $L_1$ and $L_2$ are regular-separable
and one can compute a regular separator in elementary time.
\end{thmC}

\begin{proof}
Theorem 24 in~\cite{DBLP:conf/concur/CzerwinskiLMMKS18} says that there exists a regular separator
of $L_1$ and $L_2$ of size at most triply-exponential. In order to compute it, we can simply enumerate
all the possible separators of at most triply-exponential size and check them one by one.
For a given regular language and a given VASS language by Proposition~\ref{cor:upward}
one can check in doubly-exponential time whether they intersection is nonempty.
\end{proof}

For our purposes we need a bit stronger version of this theorem.
We say that a family of regular languages \emph{$\F$ separates languages of a VASS $V$} if for any two configurations $c_1, c_2$ such that 
languages $L(c_1)$ and $L(c_2)$ are disjoint there exists a language $S \in \F$ that separates $L(c_1)$ and $L(c_2)$.

\begin{thm}\label{thm:reg-sep-upgrade}
There is an algorithm that,
for any VASS computes, in an elementary time, a finite family of regular languages that separates its languages.
\end{thm}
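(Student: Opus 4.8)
The plan is to invoke Theorem~\ref{thm:reg-sep} only finitely many times, by first computing a single decomposition of the ``disjointness'' relation into down-atoms.

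First I would form the synchronous product $V\times V$ (as in the proof of Lemma~\ref{lem:language-intersection}). Since the acceptance condition of $V$ is upward-closed, so is that of $V\times V$, so $V\times V$ is an upward-VASS; and identifying a configuration of $V\times V$ with a pair $(c_1,c_2)$ of configurations of $V$, its language is exactly $L(c_1)\cap L(c_2)$. Hence the set $\mathcal D$ of pairs $(c_1,c_2)$ with $L(c_1)\cap L(c_2)=\emptyset$ is precisely the set of empty-language configurations of $V\times V$. By Proposition~\ref{prop:empty-language} this set is downward-closed and its representation as a finite union of down-atoms can be computed in doubly-exponential time; write $\mathcal D=\bigcup_{j=1}^{k} D_j$ with $D_j=p_j(\hat u_j\dar)\times q_j(\hat w_j\dar)$ for states $p_j,q_j$ of $V$ and $\hat u_j,\hat w_j\in(\Nomega)^d$.

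Next, for each $j$ I would attach honest VASS languages to the two factors of $D_j$. Let $L_j^{\textup{left}}:=\bigcup\{L(c)\mid c\in p_j(\hat u_j\dar)\}$ and $L_j^{\textup{right}}:=\bigcup\{L(c)\mid c\in q_j(\hat w_j\dar)\}$. Since $c\preceq c'$ implies $L(c)\subseteq L(c')$, in this union the coordinates where $\hat u_j$ equals $\omega$ impose, in the limit, no constraint at all — neither on feasibility of a run nor on satisfaction of the (finite) acceptance thresholds — so $L_j^{\textup{left}}$ equals the language of the upward-VASS obtained from $V$ by taking $p_j$ as initial state, deleting the $\omega$-coordinates of $\hat u_j$, initialising the remaining coordinates to their finite values in $\hat u_j$, and restricting each up-atom of the acceptance condition to the surviving coordinates; this VASS has size polynomial in the size of $V$ and in the (at most doubly-exponential) numbers occurring in $\hat u_j$, and likewise for $L_j^{\textup{right}}$. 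Because $D_j\subseteq\mathcal D$, every $c_1\in p_j(\hat u_j\dar)$ and $c_2\in q_j(\hat w_j\dar)$ have disjoint languages, whence $L_j^{\textup{left}}\cap L_j^{\textup{right}}=\emptyset$; by Theorem~\ref{thm:reg-sep} I then obtain, in elementary time, a regular language $S_j$ with $L_j^{\textup{left}}\subseteq S_j$ and $S_j\cap L_j^{\textup{right}}=\emptyset$. Put $\F:=\{S_1,\dots,S_k\}$.

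Correctness is then immediate: if $L(c_1)\cap L(c_2)=\emptyset$ then $(c_1,c_2)\in\mathcal D=\bigcup_j D_j$, so $(c_1,c_2)\in D_j$ for some $j$, and then $L(c_1)\subseteq L_j^{\textup{left}}\subseteq S_j$ while $S_j\cap L(c_2)\subseteq S_j\cap L_j^{\textup{right}}=\emptyset$, so $S_j$ separates $L(c_1)$ and $L(c_2)$; the fact that $\mathcal D$ is symmetric causes no trouble, since both $(c_1,c_2)$ and $(c_2,c_1)$ lie in $\mathcal D$ and hence in some $D_j$. The overall cost is one doubly-exponential computation producing at most doubly-exponentially many atoms, followed by that many elementary-time applications of Theorem~\ref{thm:reg-sep}, hence elementary. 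The step I expect to require genuine care — the only non-routine point — is the one above where the ``limit'' language of a down-atom is identified with a bona fide upward-VASS language: one must check precisely that an $\omega$-coordinate eventually stops blocking any transition and always exceeds its finite acceptance threshold, so that it may simply be erased; everything else is bookkeeping and appeals to results already available in the paper.
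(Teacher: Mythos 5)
Your proposal is correct and follows essentially the same route as the paper's proof: compute the empty-language (hence downward-closed) configurations of the product $V\times V$ via Proposition~\ref{prop:empty-language}, identify the union of languages over each down-atom factor with the language of an honest upward-VASS obtained by erasing the $\omega$-coordinates and initialising the finite ones, and apply Theorem~\ref{thm:reg-sep} once per atom. The point you flag as needing care (that $\omega$-coordinates impose no constraint in the limit, by monotonicity and upward-closedness of $F$) is exactly the step the paper handles with its explicit construction of $V_1$ and $V_2$, so nothing is missing.
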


\begin{proof}
Let us fix a $d$-VASS $V = (\Sigma, Q, T, c_I, F)$.
Let us define the set of pairs of configurations of $V$ with disjoint languages
$D = \set{(c_1, c_2) \mid L(c_1) \cap L(c_2) = \emptyset} \subseteq Q \times \N^d \times Q \times \N^d$.
One can easily see that the set $D$ is exactly the set of configurations with empty language in the synchronised
product of VASS $V$ with itself. Thus, by Proposition~\ref{prop:empty-language}
we can compute in doubly-exponential time its representation
as a finite union of down-atoms $D = A_1 \cup \ldots \cup A_n$.
We show now that for each $i \in [1,n]$ one can compute in elementary time a regular language $S_i$ such
that for all $(c_1, c_2) \in A_i$ the language $S_i$ separates $L(c_1)$ and $L(c_2)$.
This will complete the proof showing that one of $S_1, \ldots, S_n$ separates $L(c_1)$ and $L(c_2)$
whenever they are disjoint.

Let $A \subseteq Q \times \N^d \times Q \times \N^d$ be a down-atom.
Therefore $A = D_1 \times D_2$
where $D_1 = p_1(u_1 \dar)$ and $D_2 = p_2(u_2 \dar)$ for some $u_1, u_2 \in (\Nomega)^d$.
Let $L_1 = \bigcup_{c \in D_1} L(c)$ and $L_2 = \bigcup_{c \in D_2} L(c)$.
Languages $L_1$ and $L_2$ are disjoint as $w \in L_1 \cap L_2$ would imply $w \in L(c_1) \cap L(c_2)$
for some $c_1 \in D_1$ and $c_2 \in D_2$.
Now, observe that $L_1$ is not only an infinite union of VASS languages but also a VASS language itself.
Indeed, let $V_1=(\Sigma, Q, T_1, c_{(I,1)}, F_1)$ be the VASS $V$ where all coordinates $i \in [1,d]$ such that $u_1[i] = \omega$ are ignored.
Concretely, 
\begin{itemize}
 \item $(p,a,v_1,q)\in T_1$ if there exists $(p,a,v,q)\in T$ such that for every $i$ holds either $v_1[i]=v[i]$ or $v_1[i]=0$ and $u_1[i]=\omega$,\
 \item $\st(c_{(I,1)}) = p_1$ and for every $i$ holds either $c_{(I,1)}[i] = u_1[i]$ or $c_{(I,1)}[i] = 0$ and $u_1[i] = \omega$,
 \item $(q,v_1)\in F_1$ if there exists $(q,v)\in F$ such that for every $i$ holds either $v_1[i]=v[i]$ or $u_1[i]=\omega$.
\end{itemize}
Then it is easy to observe that $V_1$ accepts exactly the language $L_1$.
Similarly, one can define VASS $V_2$ accepting the language $L_2$.
By Theorem~\ref{thm:reg-sep} we can compute in elementary time some regular separator $S$ of $L(V_1)$ and $L(V_2)$.
It is now easy to see that for any configurations $c_1 \in D_1$ and $c_2 \in D_2$ languages $L(c_1)$ and $L(c_2)$
are separated by $S$.
\end{proof}

Now we are ready to use the notion of $(M,h)$-decoration of a VASS language. Let us recall that a regular language $L$ is recognised by
a monoid $M$ and homomorphism $h:\Sigma^*\xrightarrow{} M$ if there is $F \subseteq M$ such that $L=h^{-1}(F).$

\begin{thm}\label{thm:uvass-to-hvass}
Let $V$ be an unambiguous VASS over $\Sigma$ and $\F$ be a finite family of regular languages separating languages of $V$.
Suppose $M$ is a monoid with homomorphism $h: \Sigma^* \to M$ recognising every language in $\F$.
Then the HVASS $V_{(M,h)}$ is deterministic.
\end{thm}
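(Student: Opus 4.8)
The plan is to verify directly the definition of a deterministic HVASS for $V_{(M,h)}$: every configuration reachable from the initial one has at most one successor under each letter of $\Sigma_\eps \times M$. Recall that $V_{(M,h)}$ has states $Q \times (M \cup \set{\perp})$ and that its holes are exactly the configurations with empty language. From $(q_I,\perp)(v_I)$ the only outgoing transitions read a letter $(\eps,m)$ and go to the single target $(q_I,m)(v_I)$, so nothing is to be checked there. I would therefore fix a reachable configuration $c = (p,m')(v)$ with $m' \in M$ and a letter $(a,m) \in \Sigma_\eps \times M$; by the transition rules of $V_{(M,h)}$ the successors of $c$ under $(a,m)$ are precisely the configurations $(q,m)(v+\eff(t))$ arising from transitions $t = (p,a,\eff(t),q) \in T$ with $h(a)\cdot m = m'$, with $v+\eff(t)\in\N^d$, and with $(q,m)(v+\eff(t))$ not a hole. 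I would also record two auxiliary observations. First: for any $q\in Q$, $m\in M$, $w\in\N^d$, dropping the $M$-coordinates maps $L_{(q,m)(w)}(V_{(M,h)})$ exactly onto $L_{q(w)}(V) \cap h^{-1}(m)$ (immediate from the way $V_{(M,h)}$ tracks $h$ of the unread suffix and the shape of $F'$); in particular $(q,m)(w)$ is a hole iff $L_{q(w)}(V)\cap h^{-1}(m)=\emptyset$. Second: a reachable configuration of the HVASS projects to a genuine run of $V$ (holes only remove configurations), so $p(v)$ is reachable in $V$ from $q_I(v_I)$, say by a word $w_0$.

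Now suppose towards a contradiction that $c$ has two distinct successors under $(a,m)$, arising from distinct transitions $t_1,t_2\in T$ over $a$; write $c_i = q_i(v+\eff(t_i))$ for the associated non-hole configurations of $V$. By the first auxiliary observation, both $L(c_1)\cap h^{-1}(m)$ and $L(c_2)\cap h^{-1}(m)$ are nonempty; choose $u_1$, $u_2$ in them. I claim $L(c_1)\cap L(c_2)\ne\emptyset$: otherwise $\F$ contains a language $S$ separating $L(c_1)$ and $L(c_2)$, and since $M$ and $h$ recognise $S$ we have $S=h^{-1}(F_S)$ for some $F_S\subseteq M$; then $u_1\in L(c_1)\subseteq S$ forces $m=h(u_1)\in F_S$, whence $u_2\in h^{-1}(F_S)=S$, contradicting $S\cap L(c_2)=\emptyset$. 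So pick $u\in L(c_1)\cap L(c_2)$. Then the run $q_I(v_I)\trans{w_0}p(v)\trans{a}c_1$ (using $t_1$) extended by an accepting run of $c_1$ over $u$, and the run $q_I(v_I)\trans{w_0}p(v)\trans{a}c_2$ (using $t_2$) extended by an accepting run of $c_2$ over $u$, are two accepting runs of $V$ over the single word $w_0\cdot a\cdot u$ which agree on the prefix reaching $p(v)$ but then use the distinct transitions $t_1$ and $t_2$. This contradicts unambiguity of $V$; hence $c$ has at most one successor and $V_{(M,h)}$ is deterministic.

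I expect the main obstacle to be exactly the case split in the second paragraph: one must see that the separating family $\F$ is precisely the device ruling out $L(c_1)\cap L(c_2)=\emptyset$ (this is where recognisability of the separators by $(M,h)$ is used), while the complementary case $L(c_1)\cap L(c_2)\ne\emptyset$ is nothing but a witness of two accepting runs of $V$ on a single word. Making this work cleanly also hinges on the auxiliary characterisation of $L_{(q,m)(w)}(V_{(M,h)})$, since that is what converts "not a hole" into the nonemptiness facts the argument consumes.
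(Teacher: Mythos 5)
Your proof is correct and follows essentially the same route as the paper's: both pit unambiguity of $V$ against the separating family recognised by $(M,h)$, using that the holes of $V_{(M,h)}$ are exactly the empty-language configurations to extract words of type $m$ accepted from each of the two putative successors. The only difference is the order of the case split (you use the separator to force $L(c_1)\cap L(c_2)\neq\emptyset$ and then contradict unambiguity, while the paper uses unambiguity to force disjointness and then contradicts recognisability of the separator), which is an equivalent reorganisation.
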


\begin{proof}
Let $V = (\Sigma, Q, T, c_I, F)$ and let $c_I = q_I(v_I)$.
We aim to show that HVASS $V_{(M,h)} = (\Sigma', Q', T', c'_I, F')$ is deterministic,
where $\Sigma' = \Sigma_\eps \times M$ and $Q' = Q \times (M \cup \set{\perp})$.
It is easy to see from the definition of $V_{(M,h)}$ that for each $(a,m) \in \Sigma'$ and each $q \in Q$ the state $(q, \perp)$
has at most one outgoing transition over $(a,m)$. Indeed, there is exactly one transition over $(\eps, m)$
outgoing from $(q_I, \perp)$ and no outgoing transitions in the other cases.
Assume now towards a contradiction that $V_{(M,h)}$ is not deterministic.
Then there is some configuration $c = (q, m)(v)$ with $(q, m) \in Q \times M$ such that $c_I \trans{u} c$
for some word $u$ over $\Sigma'$ and a letter $(a, m') \in \Sigma'$ such that a transition from $c$ over $(a, m')$
leads to two different configurations $c_1 = (q_1, m')(v_1)$ and $c_2 = (q_2, m')(v_2)$.
Recall that a transition over $(a, m')$ has to lead to some state where the second component is equal to $m'$.
As configurations with empty language are not present in $V_{(M,h)}$ we know that there exist
words $w_1 \in L(c_1)$ and $w_2 \in L(c_2)$. Recall that as $c_1 = (q_1, m')(v_1)$ and $c_2 = (q_2, m')(v_2)$
we have $h(w_1) = m' = h(w_2)$. We show now that $L(c_1)$ and $L(c_2)$ are disjoint.
Assume otherwise that there exists $w \in L(c_1) \cap L(c_2)$.
Then there are at least two accepting runs over the word $u \cdot (a, m') \cdot w$ in $V_{(M,h)}$.
However, this means that there are at least two accepting runs over
the projection of $u \cdot (a, m') \cdot w$ in $V$, which contradicts the unambiguity of $V$.
Thus, $L(c_1)$ and $L(c_2)$ are disjoint and therefore separable by some language from $\F$.
Recall that all languages in $\F$ are recognisable by $(M,h)$, therefore words from $L(c_1)$ should be mapped
by homomorphism $h$ to different elements from $M$ than words from $L(c_2)$.
However, $h(w_1) = m'$ for $w_1 \in L(c_1)$ and $h(w_2) = m'$ for $w_2 \in L(c_2)$ which leads to the contradiction.
\end{proof}

Now we are ready to prove Theorem~\ref{thm:unambiguous}.
Let $V_1$ be a VASS and $V_2$ be an unambiguous VASS, both with labels from $\Sigma$.
We first compute a finite family $\F$ separating languages of $V_2$ which can be
performed in elementary time by Theorem~\ref{thm:reg-sep-upgrade}
and then we compute a finite monoid $M$ together with a homomorphism $h: \Sigma^* \to M$
recognising all the languages from $\F$. 
By Proposition~\ref{prop:decoration} we get that $L(V_1) \subseteq L(V_2)$
if and only if $L(V_1)_{(M,h)} \subseteq L(V_2)_{(M,h)}$.
We now compute HVASS $V'_1 = V_{1_{(M,h)}}$ and $V'_2 = V_{2_{(M,h)}}$ as described above in the paragraph HVASS construction.
By Theorem~\ref{thm:uvass-to-hvass} the HVASS $V'_2$ is deterministic.
Thus it remains to check whether the language of a HVASS $V'_1$ is included in the language of a deterministic HVASS $V'_2$,
which is in \ackermann due to Theorem~\ref{thm:hvass-inclusion}.

\begin{rem}
We remark that our technique can be applied not only to VASS but also
in a more general setting of well-structured transition systems.
In~\cite{DBLP:conf/concur/CzerwinskiLMMKS18} it was shown that for any well-structured transition systems
satisfying some mild conditions (finite branching is enough), the disjointness of two languages implies regular separability
of these languages. We claim that an analogue of our Theorem~\ref{thm:reg-sep-upgrade} can be obtained in that case as well.
Assume now that $\mathcal{W}_1, \mathcal{W}_2$ are two classes of finitely branching well-structured transition systems,
such that for any two systems $V_1 \in \mathcal{W}_1$, $V_2 \in \mathcal{W}_2$ where $V_2$ is deterministic
the language inclusion problem is decidable.
Then this problem is also likely to be decidable if we weaken the condition of determinism to unambiguity. More concretely
speaking this seems to be the case if it is possible to perform the construction analogous to Theorem~\ref{thm:complement-dvass}
in $\mathcal{W}_2$, namely if one can compute the system recognising the complement of deterministic language without leaving
the class $\mathcal{W}_2$. We claim that an example of such a class $\mathcal{W}_2$ is the class of VASS with one reset.
The emptiness problem for VASS with one zero-test (and thus also for VASS with one reset) is decidable
due to~\cite{DBLP:journals/entcs/Reinhardt08,DBLP:conf/mfcs/Bonnet11}. Then following our techniques it seems
that one can show that inclusion of a VASS language in a language of an unambiguous VASS with one reset is decidable.
\end{rem}

\section{Boundedly-ambiguous VASS}\label{sec:bavass}
In this section we aim to prove Theorem~\ref{thm:complement-bavass} and
show that Theorem~\ref{thm:k-ambiguous} is its easy consequence.
Recall the statements of both Theorem~\ref{thm:complement-bavass} and
Theorem~\ref{thm:k-ambiguous}.

\vspace{0.3cm}

\noindent
\textbf{Theorem~\ref{thm:complement-bavass}.}
There is an algorithm that, for each $k \in \N$ and a $k$-ambiguous VASS $V$ constructs, in elementary time,
a VASS with a downward-closed set of accepting configurations which recognises the complement of the language of $V$.

\vspace{0.3cm}

\noindent
\textbf{Theorem~\ref{thm:k-ambiguous}.}
For each $k \in \N$ the language inclusion problem of a VASS in a $k$-ambiguous VASS is in \ackermann.

\vspace{0.3cm}

Let us first show how Theorem~\ref{thm:complement-bavass} implies Theorem~\ref{thm:k-ambiguous}.
Let $V_1$ be a VASS and $V_2$ be a $k$-ambiguous VASS.
By Theorem~\ref{thm:complement-bavass} can be computed in elementary time
a downward-VASS $V'_2$ such that $L(V'_2) = \Sigma^* \setminus L(V_2)$.
By Lemma~\ref{lem:language-intersection} one can construct in time a polynomial w.r.t. the size
of $V_1$ and $V'_2$ an updown-VASS $V$ such that
$L(V) = L(V_1) \cap L(V'_2) = L(V_1) \setminus L(V_2)$.
By Corollary~\ref{cor:updown} emptiness of $V$ is decidable in \ackermann
which, consequently, proves Theorem~\ref{thm:k-ambiguous}.

The rest of this section focusses on the proof of Theorem~\ref{thm:complement-bavass}.

\begin{proof}[Proof of Theorem~\ref{thm:complement-bavass}]
We now prove Theorem~\ref{thm:complement-bavass} using Lemmas~\ref{lem:ba-control}~and~\ref{lem:kafa-to-kdfa}.
Then in Sections~\ref{sec:ba-control}~and~\ref{sec:kafa-to-kdfa} we show the formulated lemmas.
Let $V$ be a $k$-ambiguous VASS over an alphabet $\Sigma$.
First due to Lemma~\ref{lem:ba-control} proved in Section~\ref{sec:ba-control}
we construct a VASS $V^1$ which is language equivalent to $V$
and additionally has the control automaton being $k$-ambiguous.

\begin{lem}\label{lem:ba-control}
There is an algorithm, that
for each $k$-ambiguous VASS $V$ constructs, in doubly-exponential time, a language equivalent VASS $V'$
with the property that its control automaton is $k$-ambiguous.
\end{lem}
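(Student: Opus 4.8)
The plan is to make the control automaton $k$-ambiguous by enriching the state space of $V$ with enough information to rule out ``fake'' accepting runs in the control automaton that do not lift to genuine accepting runs of the VASS. The source of extra control-automaton ambiguity is precisely that a path in the control automaton may be accepting (it ends in a state $q$ with $q(v)\in F$ for some $v$) even though along the corresponding run the counters cannot be kept non-negative, or the reached counter vector does not actually land in $F$. So the key idea is to decorate each state with a description of the downward-closed set of counter vectors from which ``the remaining suffix behaviour is impossible'', i.e.\ roughly to carry a lookahead witnessing that the run has a chance to be completed to an accepting one.

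Concretely, I would proceed as follows. First, recall from Proposition~\ref{prop:empty-language} that for an upward-VASS one can compute in doubly-exponential time the representation of the downward-closed set of configurations with empty language; more generally, one computes for each state $q$ the downward-closed set $E_q \subseteq \N^d$ of counter vectors $v$ such that $L(q(v)) = \emptyset$. Build $V'$ on the state set $Q \times \set{0,1}$ (or, if finer bookkeeping is needed, $Q$ tagged with which down-atom of the complement of $E_q$ the configuration currently sits in), with the convention that $V'$ stays synchronised with $V$ but additionally, after each transition into a state $q$, nondeterministically ``commits'' to the claim $v \notin E_q$ and verifies this claim exactly as in the proof of Lemma~\ref{lem:hvass-to-vass}: guess an up-atom $q(u_i\uar)$ of the upward-closed complement of $E_q$, subtract $u_i$ and add it back, blocking if the guess was wrong. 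Thus every run of $V'$ corresponds to a run of $V$ all of whose configurations have non-empty language, and conversely. Since a run of $V$ is accepting iff it is a run all of whose configurations have non-empty language and it ends in $F$, we get $L(V') = L(V)$.

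The point of this construction is the control automaton of $V'$. A maximal (in particular accepting) path in the control automaton of $V'$ visiting only the ``committed'' states forces, by the verification gadget, that every intermediate counter configuration be outside the corresponding $E_q$; hence such a control path really does lift to at least one run of $V$, and in fact — modulo the usual care that an accepting control path corresponds to a run ending in $F$ rather than merely in an $F$-state — it lifts to at least one accepting run of $V$. Therefore the number of accepting control paths over a word $w$ in $V'$ is bounded by the number of accepting runs of $V$ over $w$, which is at most $k$ by $k$-ambiguity. So the control automaton of $V'$ is $k$-ambiguous. One must check the remaining subtlety that the $\eps$-labelled verification gadgets do not themselves create spurious control-automaton ambiguity: this is handled by making the gadget deterministic in its choice ``which up-atom'' — i.e.\ fold the guess into the transition entering $q$ rather than into a separate branching — or by arguing that distinct gadget choices collapse to the same continuation, so they do not multiply the count of accepting paths.

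The main obstacle I expect is precisely this interface between ``accepting in the control automaton'' (ending in an $F$-state) and ``accepting as a VASS run'' (ending in $F$, which is an upward-closed set of configurations): a control path could end in an $F$-state $q$ with the reached vector $v$ not in the $F$-slice of $q$, and such vectors need not have empty language, so the $E_q$ trick alone does not exclude them. The fix is to refine the commitment at the final step: when the control path wants to accept, it must additionally commit to $v$ lying in the upward-closed accepting slice, which is again verifiable by the same subtract-and-restore gadget since $F$ is upward-closed; this keeps the construction within doubly-exponential time (the blow-up is by $\norm{E}$ and $\norm{F}$, each at most exponential) and preserves $L(V') = L(V)$ while guaranteeing that each accepting control path of $V'$ witnesses a genuine accepting run of $V$, hence the $k$ bound. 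A secondary, more bookkeeping-level obstacle is ensuring the $\eps$-gadgets are threaded so that the control automaton of $V'$ has no new branching of its own; this is routine but must be stated carefully, and is where one should be explicit that $V'$ may use $\eps$-transitions, or alternatively absorb the gadget into ordinary transitions so that $V'$ remains an honest (non-$\eps$) VASS.
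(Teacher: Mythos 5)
There is a genuine gap at the core of your construction: your verification gadgets constrain runs of the \emph{VASS} $V'$, but they constrain nothing in its \emph{control automaton}. The control automaton is obtained by erasing all counter effects, so the subtract-$u_i$-and-restore transitions become plain transitions that are always enabled; a path of the control automaton carries no counter vector, and the claim that an accepting control path ``forces every intermediate configuration to lie outside $E_q$'' has no content at that level. Hence accepting control paths of your $V'$ need not lift to any run of $V$ at all (let alone an accepting one), and the bound by $k$ fails. A concrete counterexample: take the $1$-dimensional upward-VASS with states $p,q$, initial configuration $p(0)$, transitions $t_1=(p,a,0,q)$ and $t_2=(p,a,-1,q)$, and accepting set $q(0\uar)$. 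It is unambiguous, since over the word $a$ only $t_1$ can fire from $p(0)$. But every configuration of this VASS has non-empty language, so $E_p=E_q=\emptyset$, the complement of each $E_q$ is a single up-atom $q(0\uar)$, and the final acceptance check is against $0\uar$ as well; all your gadgets subtract and restore the zero vector, so $V'$ is essentially $V$ with trivial detours, and its control automaton still has two accepting runs over $a$ (one through the image of $t_1$, one through the image of $t_2$). The excess control ambiguity comes from nondeterminism that is resolved only by counter values (here, whether the counter is at least $1$), and no VASS-level check of language non-emptiness or of membership in the upward-closed accepting set can remove it, because such checks are invisible once the counters are erased; your fallback refinements (tagging states with atoms, folding the guess into the incoming transition) do not change this, since in the example there is nothing to tag.

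What is actually needed, and what the paper does, is to put bounded counter information into the control states themselves: $V'$ is $V$ with its control automaton replaced by the $M$-abstraction of $V$, which tracks every counter exactly up to a doubly-exponential threshold $M$ and collapses larger values to $\omega$. The correctness of the threshold is not a pruning argument but a Rackoff-style small-run bound (Lemma~\ref{lem:threshold}) applied to the $(k+1)$-fold product $\overline{V}$ of $V$ with itself, whose language is empty by $k$-ambiguity: for a suitable computable $M$, if the $M$-abstraction of $\overline{V}$ had an accepting run then $\overline{V}$ itself would have one, so the $M$-abstraction of $V$ admits at most $k$ accepting runs over any word. In particular, Proposition~\ref{prop:empty-language} and the gadget of Lemma~\ref{lem:hvass-to-vass} play no role in this lemma (they are used elsewhere, for decorations and holes), and the statement cannot be obtained by deleting empty-language configurations.
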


Now our aim is to obtain a \kcdeterministic\ VASS $V^2$, which is language equivalent to $V^1$.
We are not able to achieve it literally, but using the notion of $(M,h)$-decoration from Section~\ref{sec:uvass}
we can compute a somehow connected \kcdeterministic\ VASS $V^2$.
We use the following lemma which is proved in Section~\ref{sec:kafa-to-kdfa}.

\begin{lem}\label{lem:kafa-to-kdfa}
Let $\A = (\Sigma, Q, T, q, F)$ be a $k$-ambiguous finite automaton for some $k \in \N$.
Let $M$ be a finite monoid and $h: \Sigma^* \to M$ be a homomorphism
recognising all the state languages of the automaton $\A$.
Then the decoration $\A_{(M,h)}$ is a $k$-deterministic finite automaton.
\end{lem}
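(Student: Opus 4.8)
The plan is to show that in the decorated automaton $\A_{(M,h)}$, for every almost-well-formed prefix, the second component of each state carries enough information to keep distinct runs that would eventually accept from ever merging at a branching point, so that the number of surviving-or-dead maximal runs can never exceed $k$.

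First I would recall the structure of $\A_{(M,h)}$ from Section~\ref{sec:uvass}: its states are $Q \times (M \cup \{\perp\})$, there is a unique initial-type transition from $(q,\perp)$ over each $(\eps,m)$, and each transition $(p,a,q) \in T$ spawns transitions $((p, h(a)\cdot m), (a,m), (q,m))$ for every $m \in M$. The key design invariant, already noted in the construction, is that if $(p,m)(\ldots) \xrightarrow{w'}$ accepts, then $w'$ is almost well-formed and projects to some $w$ with $h(w) = m$; and since states with empty language are removed as holes, every reachable state $(p,m)$ in $\A_{(M,h)}$ is guaranteed to have $L(p) \ni$ some word of type $m$ (here I use that $h$ recognises all state languages of $\A$, so $L_\A(p) = h^{-1}(F_p)$ for some $F_p \subseteq M$, and $m \in F_p$).

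The main argument: fix an almost-well-formed word $u$ over $\Sigma_\eps \times M$ of total type $m_0$ (read off its first monoid component), and consider the maximal runs of $\A_{(M,h)}$ over $u$. I claim there is an injection from these maximal runs into the accepting runs of $\A$ over words of type $m_0$. Indeed, after the initial $(\eps,m_0)$-step, a run of $\A_{(M,h)}$ over $u$ of length $i$ sits in a state $(q_i, m_i)$ where $m_i = h(u[i{+}1..n]\text{ projected})$ is \emph{forced} by almost-well-formedness — the second component is a deterministic function of the position, not of the run. So two distinct maximal runs over the \emph{same} $u$ differ only in their first components $q_i \in Q$, i.e.\ they correspond to distinct maximal runs of the underlying automaton $\A$ over the projection $w$ of $u$. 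But $\A$ is only $k$-ambiguous for \emph{accepting} runs, not maximal ones, so I cannot conclude directly; the fix is that every state $(q_i,m_i)$ reached is non-hole, hence (by the state-language recognition hypothesis) the partial run of $\A$ ending at $q_i$ extends to an accepting run of $\A$ over some word of type $m_i$ — and in fact, prepending the already-read prefix $w[1..i]$, it extends to an accepting run of $\A$ over a word $w[1..i]\cdot w'$ with $h(w[1..i]\cdot w') = h(w[1..i])\cdot m_i = m_0$. Distinct maximal runs over $u$ give distinct prefixes of accepting runs of $\A$ over words of type $m_0$ that agree on positions $1..i$; extending each canonically (say, via a fixed choice function) yields distinct accepting runs of $\A$. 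The total count is bounded by $k$ since... — wait, that bound is over all words of type $m_0$, not a single word, so this needs care.

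The cleaner route, which I would actually carry out: prove by induction on prefixes of $u$ that at every point the number of maximal runs of $\A_{(M,h)}$ over that prefix is at most the number of \emph{distinct surviving-or-just-blocked states of $\A$} reachable by maximal runs over the projected prefix, and then observe that any such state, being non-hole in the decoration, lies on an accepting run of $\A$ whose suffix has type determined by the forced monoid component. The decisive point — and \textbf{the main obstacle} — is exactly this: a maximal run of $\A_{(M,h)}$ that \emph{blocks} (no $(a,m')$-transition available) must correspond to a maximal run of $\A$ that blocks, and we must rule out that $\A_{(M,h)}$ blocks \emph{spuriously} (because the monoid component is wrong) in a way that inflates the count beyond $k$. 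But spurious blocking only \emph{decreases} the run count relative to $\A$, it never increases it, and the surviving (non-blocked) runs of $\A_{(M,h)}$ over $u$ inject into accepting runs of $\A$ over the \emph{single} word $w\cdot w'$ for an appropriate type-$m_n$ completion $w'$ — no, over words sharing the fixed suffix-type pattern of $u$, hence... Here is where I must be careful: I will instead show each maximal run $\rho$ of $\A_{(M,h)}$ over $u$ determines, via the forced suffix types, a \emph{canonical} word $w_\rho$ and a \emph{canonical} accepting run of $\A$ over $w_\rho$, such that the map $\rho \mapsto (\text{canonical run})$ is injective and the canonical runs are accepting runs of $\A$ over \emph{one and the same} word $w^\star$ (namely $w$ completed by a fixed type-$m_n$ word chosen once); injectivity holds because the accepting run of $\A$ determines the first components along the prefix $u$, which determine $\rho$. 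Then $k$-ambiguity of $\A$ over $w^\star$ gives at most $k$ such runs, so at most $k$ maximal runs of $\A_{(M,h)}$ over $u$, i.e.\ $\A_{(M,h)}$ is $k$-deterministic. Nailing down this canonical-completion argument — ensuring the completion word can be chosen uniformly and that the resulting runs of $\A$ are genuinely distinct and accepting — is the technical heart of the proof.
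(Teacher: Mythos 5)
Your final plan is sound, and its key ingredients are exactly the ones the paper's proof relies on: the monoid component of a state is forced by the position in a well-formed input, and non-holeness of a reached state $(q,m)$ together with the hypothesis that $h$ recognises every state language upgrades ``some word of type $m$ is in $L(q)$'' to $h^{-1}(m)\subseteq L(q)$. The difference is in how the counting is organised. The paper argues in two steps: first, the multiset of states reached by runs over any prefix has size at most $k$ (otherwise appending a single word of the forced type to all of them yields $k+1$ accepting runs of $\A$ over one word, contradicting $k$-ambiguity); second, runs never block separately (for a type-compatible next letter, $h^{-1}(m)\subseteq L(q_i)$ produces an outgoing decorated transition from every reached state), so all maximal runs have the same length and their total number equals the simultaneous count. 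You instead inject all maximal runs, wherever they die, into accepting runs of $\A$ over a single word and invoke $k$-ambiguity once. The two details you flag as the ``technical heart'' do close, with nothing beyond what you already have on the table: the completion can be taken along the genuine remaining suffix, since a run blocked after $i-1$ letters sits in a non-hole $(q_{i-1},m_{i-1})$ with $m_{i-1}=h(w[i..n])$, so $w[i..n]\in L(q_{i-1})$ and (taking $w'=\eps$, noting $m_n=h(\eps)$) every canonical run is an accepting run of $\A$ over $w$ itself; and injectivity holds because two maximal runs that agreed on their first $i-1$ decorated transitions would occupy the same state before position $i$ and hence would block, or not, together, so distinct maximal runs already differ inside the portion where the canonical run copies them. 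Pushed slightly further, this last observation is precisely the paper's second step: on a well-formed input no run ever blocks after the first letter, which makes the canonical-completion bookkeeping nearly vacuous. In short, the paper's organisation isolates the reusable fact that blocking is simultaneous, which is what defeats the subtlety (built into the definition of $k$-determinism) of runs dying at different times; your organisation hides that fact inside one injection, at the price of the completion/injectivity bookkeeping you correctly identified as the crux and which you would still need to write out.
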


Now we consider the control automaton $\A$ of VASS $V^1$.
We compute a monoid $M$ together with a homomorphism $h: \Sigma^* \to M$
that recognises all the state languages of $\A$.
Then we construct the automaton $\A_{(M,h)}$. Note that the decoration of a VASS produces an HVASS, but as we decorate an automaton i.e. $0$-VASS we get a
$0$-HVASS which is also a finite automaton. 
Based on $\A_{(M,h)}$ we construct a VASS $V^2$.
We add a vector to every transition in $\A_{(M,h)}$, to produce a
VASS, that recognises the $(M,h)$-decoration of the language of VASS $V^1$.
Precisely, if we have a transition $((p,m),(a,m'),(q,m'))$ in $\A_{(M,h)}$ then it is created from the transition $(p,a,q)$ in $\A$,
which originates from the transition $(p,a,v,q)$ in $V^1$. So in $V^2$ we label $((p,m),(a,m'),(q,m'))$
with $v$ i.e. we have the transition $((p,m),(a,m'), v,(q,m'))$. 
Similarly, based on $V^1$, we define initial and final configurations in $V^2$.
It is easy to see that there is a bijection between accepting runs in $V^1$ and accepting runs in $V^2$. 
By Lemma~\ref{lem:kafa-to-kdfa} $\A_{(M,h)}$ is $k$-deterministic which immediately implies that $V^2$ is \kcdeterministic\ as well.

Now by Theorem~\ref{thm:complement-bdvass} we compute a downward-VASS $V^3$ which recognises
the complement of $L(V^2)$. Notice that for each $w \in \Sigma^*$ there is exactly one well-formed word
in $\Sigma_\eps \times M $ which projects into $w$, namely the $(M,h)$-decoration of $w$.
Therefore, $V^3$ accepts all the non-well-formed words and all the well-formed words that
project into the complement of $L(V)$. By Proposition~\ref{prop:well-formed} the set of all well-formed words is recognised
by some finite automaton $\B$. Computing a synchronised product of $\B$ and $V^3$
one can obtain a downward-VASS $V^4$ which recognises the intersection of languages $L(\B)$ and $L(V^3)$,
namely all the well-formed words which project into the complement of $L(V)$.
Now, it is easy to compute a downward-$\eps$-VASS $V^5$ recognising the projection of $L(V^4)$ into the first component
of the alphabet $\Sigma_\eps \times M$.
We obtain $V^5$ simply by ignoring the second component of the alphabet.
Thus, $V^5$ recognises exactly the complement of $L(V)$.
However $V^5$ is not a downward-VASS as it contains a few $\eps$-labelled transitions leaving the initial state.
We aim to eliminate these $\eps$-labelled transitions.
Recall that in the construction of the $(M,h)$-decoration the $(\eps, m)$-labelled transitions leaving
the initial configuration have the effect $0^d$. Thus, it is easy to eliminate them and obtain a downward-VASS $V^6$ which recognises
exactly the complement of $L(V)$, which finishes the proof of Theorem~\ref{thm:complement-bavass}.
Let us remark here that even ignoring the last step of elimination and obtaining a downward-$\eps$-VASS recognising
the complement of $L(V)$ would be enough to prove Theorem~\ref{thm:k-ambiguous} along the same lines
as it is proved now.
\end{proof}

\subsection{Proof of Lemma~\ref{lem:ba-control}}\label{sec:ba-control}

We recall the statement of Lemma~\ref{lem:ba-control}.

\vspace{0.3cm}

\noindent
\textbf{Lemma~\ref{lem:ba-control}.}
There is an algorithm, that
for each $k$-ambiguous VASS $V$ constructs, in doubly-exponential time, a language equivalent VASS $V'$
with the property that its control automaton is $k$-ambiguous.

\begin{proof}[Proof of Lemma~\ref{lem:ba-control}.] Let $V = (\Sigma, Q, T, c_I, F)$ be a $k$-ambiguous $d$-VASS for some $k \in \N$.
We aim at constructing a language equivalent VASS $V'$ such that
its control automaton is $k$-ambiguous. Notice that if the control automaton of $V'$
is $k$-ambiguous then clearly $V'$ is $k$-ambiguous as well.
The idea behind the construction of $V'$ is that it behaves as $V$,
but additionally the
states of $V'$ keep some finite information about the values of the counters.
More precisely counter values are kept exactly until they pass some threshold $M$.
After passing this threshold, its value is not kept in the state, only the information about passing
the threshold is remembered. We define the described notion below more precisely
as the $M$-abstraction of a VASS. The control state of $V'$ is the $M$-abstraction of $V$
for appropriately chosen $M$; we show how to choose $M$ later.

\begin{defi}
We consider here vectors over $\Nomega$ where $\omega$ is interpreted
as a number greater than all natural numbers and fulfilling $\omega + a = \omega$
for any $a \in \Z$. For $v \in (\Nomega)^d$ and $M \in \N$ we define $v_M \in ([0, M-1] \cup \set{\omega})^d$ such
that for all $i \in [1,d]$ we have $v_M[i] = v[i]$ if $v[i] < M$ and $v_M[i] = \omega$ otherwise.
In other words, all the numbers in $v$ that are at least equal to $M$ are changed into $\omega$ in $v_M$.
Let $V = (\Sigma, Q, T, c_I, F)$ be a $d$-VASS.
For $M \in \N$ the \emph{$M$-abstraction of $V$} is a finite automaton $V_M = (\Sigma, Q', T', q'_I, Q'_F)$
which roughly speaking behaves like $V$ up to the threshold $M$.
More concretely, we define $V_M$ as follows:
\begin{itemize}
  \item the set $Q'$ of states equals $Q \times ([0,M-1] \cup \set{\omega})^d$
  \item if $c_I = q_I(v)$ then the initial state equals $q'_I = q_I(v_M)$
  \item the set $Q'_F$ of accepting states equals $\set{q(v_M) \mid q(v) \in F}$
  \item for each transition $t = (q, a, v, q') \in T$ and for each $u \in ([0,M-1] \cup \set{\omega})^d$
  such that $u + v \in (\Nomega)^d$ we define transition $(q(u), a, q'(u')) \in T'$
  such that $u'= (u + v)_M$.
\end{itemize}
\end{defi}

Now we aim at finding $M \in \N$ such that the $M$-abstraction of $V$ is $k$-ambiguous.
Notice that it proves Lemma~\ref{lem:ba-control}.
Having such an $M$ we substitute the control automaton of $V$ by the $M$-abstraction of $V$
and obtain a VASS $V'$ that meets the conditions of Lemma~\ref{lem:ba-control}.
In other words, each state $q$ of the control automaton of $V$ multiplies itself and now
in the control automaton $V_M$ of $V'$ there are $(M+1)^d$ copies of $q$.
The effects of transitions in $V'$ are inherited from the effects of transitions in $V$.
The languages of $V$ and $V'$ are the same as the new control automaton $V_M$ of $V'$
never eliminates any run allowed by the old control automaton in $V$.

We define now a $d(k+1)$-VASS $\overline{V}$ which roughly speaking simulates $k+1$ copies of $V$
and accepts if all the copies have taken different runs and additionally all of them accept.
Notice that it is easy to construct $\overline{V}$: it is just a synchronised product of $k+1$ copies of $V$
which additionally keeps in the state the information which copies follow the same runs
and which have already split. The language of $\overline{V}$ contains those words that have at least $k+1$ runs
in $V$. As $V$ is $k$-ambiguous, we know that $L(\overline{V})$ is empty.
Now, we formulate the following lemma.

\begin{lem}\label{lem:threshold}
For each $d$-VASS $V$ there is a computable doubly-exponential threshold $M \in \N$
such that if the $M$-abstraction of $V$ has an accepting run then $V$ has an accepting run.
\end{lem}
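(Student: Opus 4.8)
The plan is to take $M$ just above the doubly-exponential bound that controls coverability in $V$, and then to show that for such $M$ the finite automaton $V_M$ is a \emph{sound} over-approximation of $V$: every abstract configuration reachable in $V_M$ is dominated (entrywise, with $\omega$ counting as above every natural number) by an element of the minimal coverability set of $V$. Since, by the paper's convention, $V$ is an upward-VASS, $V$ has an accepting run iff it covers one of the atoms $q_j(u_j \uar)$ of $F$, and the same holds for $V_M$ once $M$ exceeds every entry of every $u_j$; so soundness of the abstraction yields exactly the implication ``$V_M$ accepts $\Rightarrow V$ accepts''. The converse implication is immediate for any such $M$, by simulating a Rackoff-short covering run of $V$ inside $V_M$.

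Concretely I would first fix $M$. Rackoff's analysis of the coverability problem~\cite{DBLP:journals/tcs/Rackoff78} yields a computable bound $R$, doubly-exponential in the size of $V$, such that every coverable configuration is covered by a run of length at most $R$, hence one along which every counter stays below $R\cdot\norm{V}$; in particular, a counter that is bounded at a reachable state attains there only values below $R$. I would take $M$ larger than $R\cdot\norm{V}$, than every entry of every $u_j$, and than every $\norm{\eff(t)}$, so that $M$ is computable and doubly-exponential. The structural fact I then rely on is: \emph{if some run of $V$ from $c_I$ reaches a configuration whose $i$-th counter is at least $M$, then counter $i$ is unbounded at the state reached} --- otherwise its maximal attainable value there would be below $M$ --- and by the Karp--Miller argument this unboundedness is witnessed by a lasso: a run to that state containing a cycle $\gamma$ with $\eff(\gamma)\succeq 0$ and $\eff(\gamma)[i]>0$. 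Using this, I would convert an accepting run $\rho$ of $V_M$ into a genuine accepting run of $V$ by handling, one at a time in the order in which they first reach the threshold $M$, the counters that do so along $\rho$: for each such counter one inserts enough iterations of an appropriate clean cycle to keep that counter non-negative from then on. Because clean cycles never decrease any counter, this does not disturb the counters already handled, nor the acceptance condition at the end (which only needs $\succeq$).

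The main obstacle, and the step needing real care, is making this conversion rigorous: the clean cycle provided by the Karp--Miller argument lives on \emph{some} run reaching the relevant control state, not necessarily on the prefix of $\rho$ at hand, so one cannot simply splice it in. One must either argue, by a Dickson-style combinatorial analysis of how a run can climb a single counter up to $M$ while the others are still below $M$, that a usable clean cycle is already present on that prefix, or --- more robustly --- run the whole argument inside Rackoff's machinery of $(i)$-bounded runs, which is designed for exactly this situation and keeps the bookkeeping over the threshold counters under control. The secondary point is to verify, by going through Rackoff's inductive construction, that $R$ (and hence $M$) is genuinely computable and doubly-exponential in the size of $V$; this is the source of the ``doubly-exponential'' in the statement, and the reason the lemma stays far below the Ackermannian cost of full reachability.
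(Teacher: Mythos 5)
The central step of your conversion rests on the claim that if a run of $V$ reaches a configuration whose $i$-th counter is at least your (doubly-exponential) $M$, then counter $i$ is unbounded at the reached state, ``otherwise its maximal attainable value there would be below $R$''. This is not a consequence of Rackoff's theorem and is in fact false: Rackoff bounds the \emph{length} of shortest covering runs, not the \emph{values} attained by bounded counters, and there are VASSes of size $n$ in which a counter is bounded at a given state yet attains values of Ackermannian magnitude (this is exactly the phenomenon behind Mayr--Meyer-style Ackermann-hardness results for systems with finite or bounded behaviour). Hence no doubly-exponential threshold makes ``reaches $M$'' imply ``unbounded'', and the lasso/pumping repair of an accepting run of $V_M$ cannot be carried out with the $M$ you fix; it would need $M$ to dominate the values of bounded counters, which is Ackermannian and destroys the complexity claim. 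Moreover, even granting unboundedness, the two difficulties you mention yourself are not side issues: the witnessing clean cycle need not occur on the prefix of $\rho$ being repaired, and a cycle with non-negative total effect can still be unfireable at the insertion point because of negative intermediate values on counters that are only ``abstractly'' large. These are the heart of the matter and your proposal defers rather than resolves them.

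The paper's proof avoids pumping and coverability-set soundness altogether. It proves a Rackoff-style statement \emph{directly about the $M$-abstraction}, uniformly in $M$: for every subset $J$ of coordinates and every $M$, an accepting run of the $M$-abstraction of the $J$-restriction of $V$ from any configuration can be shortened to length at most $F(V,|J|)=(4|Q|\cdot\nnorm(V))^{(4d)^{|J|}}$, the induction dropping a coordinate the first time it exceeds $\nnorm(V)\cdot(F(V,k)+1)$ and using crucially that the acceptance condition is upward-closed, so the dropped (large) coordinate cannot spoil acceptance. Taking $M>\nnorm(V)\cdot(F(V,d)+1)$, a short accepting run of the abstraction never actually reaches the threshold, so no $\omega$ is ever created and the run is literally an accepting run of $V$. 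Your closing remark that one could ``run the whole argument inside Rackoff's machinery'' points at exactly this route, but since the argument you actually develop hinges on the false structural fact, the proposal as written has a genuine gap.
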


Lemma~\ref{lem:threshold} is a consequence of the result by Rackoff~\cite{DBLP:journals/tcs/Rackoff78} showing
that if there is a covering path in VASS then there is also a covering path of at most doubly-exponential length.

By Lemma~\ref{lem:threshold} we can compute a threshold $\overline{M}$ such that the $\overline{M}$-abstraction of $\overline{V}$
has no accepting run. We claim now that the $\overline{M}$-abstraction of $V$, namely $V_{\overline{M}}$ is $k$-ambiguous.
Assume otherwise; let $V_{\overline{M}}$ have at least $k+1$ runs over some word $w$.
Then it is easy to see that the $\overline{M}$-abstraction of $\overline{V}$ has an accepting run over $w$,
which is a contradiction. Thus indeed $V_{\overline{M}}$ is $k$-ambiguous and 
extending the control automaton of $V$ by $M$-bounded counters in a way as in $V_{\overline{M}}$
(such that the control automaton becomes exactly $V_{\overline{M}}$) finishes the proof of Lemma~\ref{lem:ba-control}.
\end{proof}

\begin{proof}[Proof of Lemma~\ref{lem:threshold}]
Let $V=(\Sigma, Q, T, c_I, F)$ and let $\Set{C}$ be the set of configurations of $V$ from which it is possible to reach $F$. As the set $F$ is upward closed the same holds for $\Set{C}$. Further, let $\bar{\Set{C}}$ be $(\N\cup\omega)^d \setminus \Set{C}$,
so the complement of $\Set{C}$ in the space of $(\N\cup\omega)^d$. It is downward closed i.e. it is a union of down-atoms, $d_1 \downarrow, d_2 \downarrow, \ldots, d_n \downarrow$.
Let $\norm{V}$ be the norm of VASS $V$, namely the maximal absolute value occurring on transition of $V$.
Let $M = \max \set{d_i[j]: i\leq n, j\leq d, d_i[j]\neq \omega}+\norm{V}$ be the maximal constant that appears in the description of the set $\bar{\Set{C}}$ plus the norm of VASS $V$.
By~\cite{DBLP:journals/tcs/Rackoff78} the minimal length of a path covering a given configuration is at most doubly-exponential wrt. the
size of the VASS and the norm of the target configuration (i.e. the maximal absolute value of coordinates in the target).
Therefore all the numbers in vectors $d_i$ are at most doubly-exponential as well.
Thus also the constant $M$ is at most doubly-exponential.

By an $M$-abstraction of a configuration $c$, denoted by $A_M(c)$, we mean an $\omega$-configuration in $Q\times (\N\cup\set{\omega})^d$ where we substitute in $c$ any value greater or equal $M$ by $\omega$. Concretely:
\[
A_M(c)[j] =
\begin{cases}
c[j] & \text{if } c[j] < M, \\
\omega & \text{if } c[j] \geq M.
\end{cases}
\]

Observe that function $A_M$ has a following property, which is the consequence of the definition of the value $M$,
\begin{equation}\label{eq:abstraction}
c\in \bar{\Set{C}}\iff A_M(c)\in \bar{\Set{C}}.
\end{equation}
Finally, the image of the function $A_M$ is in one to one correspondence with the states of the
$M$-abstraction of $V$, so we will use $M$-abstractions of configurations in both settings: in the VASS $V$ and in the $M$-abstraction of $V$.

Now we can start the actual proof of the lemma. We prove it by contradiction. Namely we assume that from some configuration $c$
there is no accepting run in $V$, but there is an accepting run from the state $A_M(c)$ in the $M$-abstraction of $V$.
Therefore $c\in \bar{\Set{C}}$ and from $A_M(c)$ in the $M$-abstraction it is possible to reach the set of final states $F'$ of the $M$-abstraction. We will show that such $c$ cannot exist.
First observe that $A_M(c)$ cannot be among final states of the $M$-abstraction as then $c\in F$ which contradicts the fact that $c\in \bar{\Set{C}}$.

So for every $A_M(c)$ there is a shortest path in the $M$-abstraction to the set of final states $F'$. Without loss of generality we can assume that $c$ is chosen in such a way that the length of the path from $A_M(c)$ to $F'$ is minimal. We denote this path by $\sigma$.
As $A_M(c)\not\in F'$ we can decompose $\sigma=t\sigma'$ into the head and a tail, where $t=(q,a_t, \delta_t,p)$. Let us consider $\omega$-configuration $c'=A_M(c)+ \eff(t)$ denoting an $\omega$-configuration $(p, v)$ where $v[i]=A_M(c)[i]+\delta_t[i]$ for every $i\in [1,d]$
.
On the one hand we know that $c'\in \bar{\Set{C}}$, as $c \in \bar{\Set{C}}$.
Because of~\eqref{eq:abstraction} we get that $A_M(c')\in \bar{\Set{C}}$. On the other hand in the $M$-abstraction of $V$ the path $\sigma$ from $A_M(c)$ traverses through $A_M(c')$. Therefore the path $\sigma'$ from $A_M(c')$ to $F'$ is shorter than the path $\sigma$. So we get a contradiction with the choice of $c$ as $c'\in \bar{\Set{C}},$ and the path from $A_M(c')$ to $F'$ is shorter than from $c$.
\end{proof}

\subsection{Proof of Lemma~\ref{lem:kafa-to-kdfa}}\label{sec:kafa-to-kdfa}

 We recall the statement of Lemma~\ref{lem:kafa-to-kdfa}.

\vspace{0.3cm}

\noindent
\textbf{Lemma~\ref{lem:kafa-to-kdfa}.}
Let $\A = (\Sigma, Q, T, q, F)$ be a $k$-ambiguous finite automaton for some $k \in \N$.
Let $M$ be a finite monoid and $h: \Sigma^* \to M$ be a homomorphism
recognising all the state languages of the automaton $\A$.
Then the decoration $\A_{(M,h)}$ is a $k$-deterministic finite automaton.

\begin{proof}[Proof of Lemma~\ref{lem:kafa-to-kdfa}.]
Let $w \in \Sigma^*$ be any word and $\decor{w} \in (\Sigma_\eps \times M )^*$ be its decoration.
We have to prove that there are at most $k$ maximal runs over the word $\decor{w}$.
To do so we prove two statements 
\begin{itemize}
\item All maximal runs over the word $\decor{w}$ have the same length.
\item The multiset $X^{\decor{w}}$ of states reachable in $\decor{\A}$ along $\decor{w}$ is of size at most $k$.
\end{itemize}
Indeed, if the above holds, then all the maximal runs have the same length, therefore their number is bounded by the size of $X^{\decor{w}}$ i.e. by $k$ as required.

In the first claim actually we show that either there is only one empty run of length zero or
all the maximal runs are over the whole word $\decor{w}$.
We consider two cases: either $w \not\in L(\A)$ or $w \in L(\A)$. 

We claim that in the first case, the only maximal run has length zero.
Notice that, then the first step should be $(q,\bot) \trans{(\epsilon, h(w))}(q, h(w))$.
According to the definition of $M$ for every $m\in M$ and every state 
$p\in Q$ we have either $h^{-1}(m)\subseteq L(p)$ or $h^{-1}(m)\cap L(p)=\emptyset$. Thus, in our case, it is that $h^{-1}(h(w))\cap L(q)=\emptyset$, because $w \not\in L(\A)$.
This means that the state $(q,h(w))$ is not present in the automaton $\decor{\A}$,
since $\decor{\A}$ does not have states with empty language.
So, indeed, the only maximal run is the empty one.

We prove the second case in a slightly more general setting: for all the prefixes of a decorated word.
The reason is that we show the statement  via an induction on the length of this prefix.
Suppose $u$ is a prefix of $w$ and $\bar{u}$ is the corresponding prefix of $\decor{w}$.
For the empty prefix $u = \eps$ the statement trivially holds.
For an induction step, suppose that $\bar{u} = \bar{u}' \cdot (a,m)$.
The induction hypothesis says that all the maximal runs along $\bar{u}'$ are over the whole word $\bar{u}'$.
Let $\decor{X^{\bar{u}'}} = \multiset{(q_1,m'), (q_2,m'), \ldots, (q_\ell,m')}$ be the multiset of states reached by the runs labelled $\bar{u}'$, for $m' = h(a) \cdot m$. Notice that $a\cdot h^{-1}(m)\subseteq h^{-1}(m')$.
Recall that by the construction of $\decor{\A}$ the languages of states $(q_i, m')$
are not empty, therefore for each $i\in \{1\ldots \ell\}$ we have $h^{-1}(m')\subseteq L(q_i)$.
The two above inclusions imply that $a\cdot h^{-1}(m) \subseteq L(q_i)$.
So, from each state $(q_i,m')\in \decor{X^{\bar{u}'}}$ there is an outgoing transition labelled by $(a,m)$.
Thus, all the maximal runs over $\bar{u}$ are over the whole word $\bar{u}$.

To prove the second claim, we assume that $w$ is accepted by the automaton $\A$ as otherwise
there is just one maximal run of length zero.
Let
$$\decor{X^{\decor{w}}} = \multiset{(q_1,m_1), (q_2,m_1), \ldots, (q_\ell,m_1)}$$
be the multiset of states reached by the runs labelled $\decor{w}$. Furthermore, because of the proof that all the maximal runs have the same length, we conclude that there are exactly $\ell$ runs along $\decor{w}$. Moreover, all of them are accepting, as they end in states with the monoid element $h(\epsilon)$.
Observe that the states along the runs of $\decor{w}$ never differ on the second component, as the
second component is determined by the image $h$ on the suffix of the word. Thus, if two runs over $\decor{w}$
differ, then they differ at the first component and in consequence they induce two different runs along $w$ in $\A$.
For example, such two different runs over $\decor{w}$ with the first difference after the letter $(a_{i+1},m_{i+1})$:
\begin{align*}
(q, \perp) & \trans{(\eps, m)} (q, m) \trans{(a_1, m_1)} (p_1, m_1) \trans{(a_2, m_2)} \ldots (p_i,m_i)\trans{(a_{i+1},m_{i+1})}(p_{i+1},m_{i+1})\ldots\\
& \trans{(a_{n-1}, m_{n-1})} (p_{n-1},m_{n-1}) \trans{(a_n, m_{n})} (p_n, h(\epsilon)),
\end{align*}
\begin{align*}
(q, \perp) & \trans{(\eps, m)} (q, m) \trans{(a_1, m_1)} (p_1, m_1) \trans{(a_2, m_2)} \ldots (p_i,m_i)\trans{(a_{i+1},m_{i+1})}(r_{i+1},m_{i+1})\ldots\\
& \trans{(a_{n-1}, m_{n-1})} (r_{n-1},m_{n-1}) \trans{(a_n, m_{n})} (r_n, h(\epsilon)).
\end{align*}
induce the following two different runs over $w$:
\begin{align*}
 q \trans{a_1} p_1 \trans{a_2} \ldots p_i\trans{a_{i+1}}p_{i+1}\ldots  \trans{a_{n-1}} p_{n-1} \trans{a_n} p_n,
\end{align*}
\begin{align*}
q \trans{a_1} p_1 \trans{a_2} \ldots p_i\trans{a_{i+1}}r_{i+1}\ldots \trans{a_{n-1}} r_{n-1} \trans{a_n} r_n.
\end{align*}
Therefore, as $\A$ is $k$-ambiguous and $\A$ has at least $\ell$ accepting runs then 
we necessarily have $\ell \leq k$, which finishes the proof that size of $X^{\bar{u}}$ is at most $k$.
\end{proof}

\section{Future research}\label{sec:future}

\myparagraph{VASS accepting by configuration}
In our work we prove Theorem~\ref{thm:complement-bavass}
stating that for a $k$-ambiguous upward-VASS
one can compute a downward-VASS recognising the complement of its language.
This theorem implies all our upper bound results, namely
decidability of language inclusion of an upward-VASS in a $k$-ambiguous upward-VASS
and language equivalence of $k$-ambiguous upward-VASS.
The most natural question which can be asked in this context is whether Theorem~\ref{thm:complement-bavass}
or some of its consequences generalises to singleton-VASS
(so VASS accepting by a single configuration) or more generally to downward-VASS.
Our results about complementing
deterministic VASS apply also to downward-VASS. However, generalising
our results for nondeterministic (but $k$-ambiguous or unambiguous) VASS encounters essential barriers. Techniques from Section~\ref{sec:uvass} do not
work as the regular-separability result from~\cite{DBLP:conf/concur/CzerwinskiLMMKS18}
applies only to upward-VASS. Techniques from Section~\ref{sec:bavass}
break as the proof of Lemma~\ref{lem:ba-control} essentially uses the fact that
the acceptance condition is upward-closed. Thus it seems that one would need to develop novel
techniques to handle the language equivalence problem for unambiguous VASS
accepting by configuration.

\myparagraph{Weighted models}
Efficient decidability procedures for language equivalence were obtained
for finite automata and for register automata using
weighted models~\cite{DBLP:journals/iandc/Schutzenberger61b,DBLP:conf/lics/BojanczykKM21}.
For many kinds of systems one can naturally define weighted models by adding weights
and computing value of a word in the field $(\Q,+,\cdot)$.
The decidability of equivalence for weighted models easily implies
language equivalence for unambiguous models, as accepted words always have the output equal to one,
while rejected words always have the output equal zero. Thus one can pose a natural
conjecture that decidability of language equivalence for unambiguous models always comes as a byproduct
of equivalence of the weighted model. Our results show that this is however not always the case
as VASS are a counterexample to this conjecture.
In the case of upward-VASS language equivalence for unambiguous models is decidable.
However, equivalence for weighted VASS is undecidable, as it would imply decidability of path equivalence
(for each word, both systems need to accept by the same number of accepting runs),
which is undecidable for VASS~\cite{DBLP:journals/tcs/Jancar01}.

\myparagraph{Unambiguity and separability}
Our result from Section~\ref{sec:uvass} uses the notion of regular-separability in order to obtain
a result for unambiguous VASS. This technique seems to generalise to some other well-structured transition systems.
It is natural to ask whether there is a deeper connection between the notions of separability and unambiguity
that can be explored in future research.

\section*{Acknowledgment}

We thank Filip Mazowiecki for asking the question for boundedly-ambiguous VASS and formulating the conjecture that control automata of boundedly-ambiguous VASS can be made boundedly-ambiguous. We also thank him and David Purser for inspiring discussions on the problem. We thank Thomas Colcombet for suggesting the way of proving Theorem~\ref{thm:reg-sep-upgrade}, Mahsa Shirmohammadi for pointing us to the undecidability result~\cite{DBLP:journals/tcs/Jancar01} and Lorenzo Clemente for inspiring discussions on weighted models. We thank Petr Jan\v{c}ar for many helpful remarks and simplifying the proofs of Proposition~\ref{prop:blowup}~and~Lemma~\ref{lem:threshold}. We also thank Marin Ricros for his detailed comments useful in improving our paper.

\bibliographystyle{alphaurl}
\bibliography{citat}

\end{document}